\newtheorem{theorem}{Theorem}
\newtheorem{lemma}{Lemma}
\pgfplotsset{compat=1.17}
\definecolor{codegreen}{rgb}{0,0.5,0}
\definecolor{codegray}{rgb}{0.4,0.4,0.4}
\definecolor{codepurple}{rgb}{0.58,0,0.82}
\definecolor{backcolour}{rgb}{0.98,0.98,0.96}
\lstdefinestyle{mystyle}{
    backgroundcolor=\color{backcolour},   
    commentstyle=\color{codegreen},
    keywordstyle=\color{magenta},
    numberstyle=\tiny\color{codegray},
    stringstyle=\color{codepurple},
    basicstyle=\ttfamily\footnotesize,
    breakatwhitespace=false,         
    breaklines=true,                 
    captionpos=b,                    
    keepspaces=true,                 
    numbers=none,                    
    showspaces=false,                
    showstringspaces=false,
    showtabs=false,                  
    tabsize=2
}
\lstdefinestyle{myPython}{
    language=Python,
    basicstyle=\ttfamily\footnotesize,
    keywordstyle=\color{javapurple}\bfseries,
    stringstyle=\color{javared},
    commentstyle=\color{javagreen},
    showstringspaces=false,
    breaklines=true,
    breakatwhitespace=true,
    columns=flexible
}
\lstdefinestyle{myJava}{
    language=Java,
    basicstyle=\ttfamily\footnotesize,
    keywordstyle=\color{javapurple}\bfseries,
    stringstyle=\color{javared},
    commentstyle=\color{javagreen},
    showstringspaces=false,
    breaklines=true,
    breakatwhitespace=true,
    columns=flexible
}
\definecolor{javared}{rgb}{0.6,0,0} 
\definecolor{javagreen}{rgb}{0.25,0.5,0.35} 
\definecolor{javapurple}{rgb}{0.5,0,0.35} 
\definecolor{javadocblue}{rgb}{0.25,0.35,0.75} 
\definecolor{addedgreen}{rgb}{0,0.6,0} 
\definecolor{removedred}{rgb}{0.8,0,0} 
\definecolor{codebg}{rgb}{0.95,0.95,0.95} 
\title{\LARGE \bf
Kodezi Chronos-1: A Debugging-First Language Model for Repository-Scale Code Understanding}
\author{
Ishraq Khan, Assad Chowdary, Sharoz Haseeb, Urvish Patel, Yousuf Zaii\\
Kodezi Inc. \\
\texttt{\{Ishraq,Assad,Sharoz,Urvish,Yousuf\}@kodezi.com}
}
\begin{document}

\raggedbottom

\maketitle
\thispagestyle{empty}
\pagestyle{empty}

\begin{abstract}
Debugging remains unsolved for LLMs despite advances in code generation. While Claude 4.5 Opus achieves 74.40\% on SWE-bench Verified and Gemini 3 Pro reaches 76.2\%, they still struggle on complex multi-file debugging with $<20\%$ success rates on real-world debugging benchmarks (95\% CI: 12.1-17.9\%). We present Kodezi Chronos-1, the first debugging-specific language model combining: (1) Adaptive Graph-Guided Retrieval (AGR) navigating codebases up to 10M LOC via multi-hop traversal (92\% precision, 85\% recall), (2) Persistent Debug Memory (PDM) learning from 15M+ sessions, and (3) 7-layer architecture for iterative fix-test-refine loops.

On 5,000 real-world scenarios, Chronos-1 achieves 67.3\% ± 2.1\% fix accuracy versus 14.2\% ± 1.3\% (Claude 4.1 Opus) and 13.8\% ± 1.2\% (GPT-4.1), with Cohen's d=3.87 effect size. On the SWE-bench Lite benchmark, Chronos-1 achieves state-of-the-art performance with 80.33\% resolution rate (241/300 instances), establishing a 20 percentage point lead over the next best system (ExpeRepair-v1.0 + Claude 4.5 Sonnet: 60.33\%). Repository-specific performance reaches 96.1\% (sympy) and 90.4\% (django). The system reduces debugging time by 40\% and iterations by 65\%. Chronos-1 resolves complex multi-file bugs requiring cross-repository understanding and temporal analysis.

Key limitations include 23.4\% success on hardware-dependent bugs and 41.2\% on dynamic language issues. Theoretical analysis proves O(k log d) retrieval complexity with convergence guarantees. Human evaluation (N=50) shows 89\% preference over baselines. Available Q4 2025 (Kodezi OS) and Q1 2026 (API).
\end{abstract}

\section{INTRODUCTION}

Recent advancements in large language models (LLMs) have transformed code generation, review, and reasoning tasks~\cite{brown2020language,chen2021evaluating,wei2023magicoder}. In 2025, frontier models have achieved remarkable progress: Claude 4.5 Opus~\cite{anthropic2025claude4} achieves 74.40\% on SWE-bench Verified, Gemini 3 Pro~\cite{google2025gemini3} reaches 76.2\% on SWE-bench Verified with state-of-the-art scores across multiple benchmarks (91.9\% GPQA Diamond, 95-100\% AIME 2025), Claude 4.1 Opus achieves 72.5\% on SWE-bench Full, Claude 4.5 Sonnet reaches 72.7\%, GPT-4.1~\cite{openai2025gpt41} reaches 54.6\%, and DeepSeek V3~\cite{deepseek2024v3} demonstrates remarkable efficiency with only \$5.6M training cost. However, debugging, the most time-consuming and critical aspect of software development, remains largely unsolved. While tools like GitHub Copilot~\cite{peng2023impact,microsoft2023copilot}, Cursor, Windsurf~\cite{codeium2025windsurf}, and Claude excel at code completion~\cite{fried2023incoder}, they fundamentally misunderstand debugging as a multifaceted, context-heavy process that spans entire repositories, historical commits, CI/CD logs, and runtime behaviors. Production debugging requires reasoning across files separated by thousands of lines~\cite{ding2024crosscodeeval}, understanding temporal code evolution, and correlating seemingly unrelated symptoms to root causes buried deep in dependency chains~\cite{zhang2023repocoder}.

Why does debugging remain fundamentally unsolved for LLMs? Beyond the surface-level challenges, debugging exposes four core limitations of current architectures. First, \textbf{hallucination under uncertainty}: when LLMs encounter ambiguous error states, they confidently generate plausible-sounding but incorrect fixes, often introducing subtle bugs that pass initial tests but fail in production. Second, \textbf{absence of persistent memory}: each debugging session starts tabula rasa, unable to learn from previous encounters with similar bugs or build institutional knowledge about codebase-specific patterns. Third, \textbf{rigid token limits}: even 1M-token contexts cannot capture the full transitive closure of dependencies in modern software, where a bug's root cause may span dozens of files connected through deep call chains. Fourth, \textbf{lack of causal reasoning}: LLMs excel at pattern matching but struggle with counterfactual reasoning ("what if this variable were null?") and temporal causality ("which commit introduced this regression?"), both essential for debugging.

Current code assistants fail at debugging for three critical reasons: (1) they are trained primarily on code completion tasks, not debugging workflows~\cite{yang2024swebench}; (2) they lack persistent memory of past bugs, fixes, and codebase-specific patterns~\cite{shrivastava2023repository}; and (3) their context windows, even when extended to 1M tokens (Gemini 2.0) or leveraging advanced RAG techniques like HyDE~\cite{gao2023hyde} and FLARE~\cite{jiang2023flare}, cannot capture the full debugging context needed for complex, multi-file issues. Recent studies show that even state-of-the-art models like GPT-4.1, Claude 4.1 Opus, Claude 4.5 Sonnet, and Gemini 2.0 Pro achieve less than 15\% success rates on real-world debugging benchmarks (DebugBench, MdEval)~\cite{zhang2024autocoderover,zhang2024enhancing}, often proposing superficial fixes that fail validation or introduce new regressions~\cite{olausson2023selfrepair}.

\textbf{Kodezi Chronos-1} represents a paradigm shift: the first \textit{debugging language model} developed by Kodezi~\cite{kodezi2025}, specifically designed, trained, and optimized for autonomous bug detection, root cause analysis, and validated fix generation. For more information about the model and benchmarks, visit \url{https://chronos.so/} and \url{https://github.com/kodezi/chronos}. Kodezi OS information is available at \url{https://kodezi.com/os}. Unlike code completion models that generate syntactically correct but often semantically flawed suggestions, Chronos-1 operates through a continuous debugging loop, proposing fixes, running tests, analyzing failures, and iteratively refining solutions until validation succeeds. Built on a novel architecture combining persistent debug memory, multi-source retrieval (code, logs, traces, PRs), and execution sandboxing, Chronos-1 demonstrates debugging performance that significantly exceeds current models, particularly in repository-scale, multi-file scenarios.

Chronos-1 is designed for seamless integration with modern development stacks: it operates as an embedded autonomous maintenance system within CI/CD pipelines, IDEs, and project management tools. Its proactive, event-driven workflow ensures that debugging, documentation generation, refactoring, and even preventative maintenance actions occur autonomously, guided by deep repository memory rather than manual prompts or brittle heuristics.

To rigorously evaluate Chronos-1's unique capabilities, we move beyond traditional benchmarks and propose a multi-step, random retrieval evaluation reflecting the authentic complexities of code search, dependency resolution, and semantic bug localization at scale. Empirically, Chronos-1 demonstrates state-of-the-art results across industry-standard metrics and realistic maintenance tasks, reducing debugging times and increasing project resilience.

This paper presents the architecture, memory system, retrieval mechanism, evaluation methodology, and experimental results that establish Kodezi Chronos-1 as the new frontier for autonomous debugging with full repository context.

\begin{tcolorbox}[colback=yellow!10,colframe=orange!50!black,title=Why Chronos-1 Dominates]
\textbf{87.1\%} debugging success vs. \textbf{22.9\%} for GPT-4.1 + SWE-agent \\
\textbf{3.8x} faster bug resolution through AGR's precision retrieval \\
\textbf{67\%} fewer false positives via Persistent Debug Memory (PDM) \\
\textbf{First} to handle cross-file, temporal, and runtime-dependent bugs
\end{tcolorbox}

\textbf{Our contributions are as follows:}
\begin{enumerate}
    \item We introduce \textbf{Chronos-1}, a debugging-specific language model architecture integrating persistent memory, adaptive graph-based retrieval (AGR), and a multi-stage fix orchestration loop.
    
    \item We design \textbf{AGR}, a multi-hop, edge-weighted graph traversal system for deep context retrieval tailored to codebases, achieving 92\% precision at 85\% recall on debugging queries.
    
    \item We construct \textbf{new debugging-specific evaluation benchmarks}, including cycle-aware test loops, retrieval accuracy, and root-cause localization across 12,500 real-world bugs.
    
    \item We demonstrate \textbf{significant performance gains} over Claude 4.1 Opus, Claude 4.5 Sonnet, GPT-4.1, and specialized tools, achieving 65.3\% debugging success rate on our comprehensive benchmarks and state-of-the-art 80.33\% on SWE-bench Lite—a 20 percentage point lead over the next best system (4-5x improvement over general-purpose frontier models).
    
    \item We provide \textbf{comprehensive ablation studies} showing each component's contribution, with detailed case studies and quantitative analysis demonstrating 4-5x improvement over state-of-the-art models.
\end{enumerate}

\FloatBarrier
\section{RELATED WORK: LIMITATIONS OF EXISTING APPROACHES IN DEBUGGING}

\subsection{Retrieval-Augmented Code Generation}

The emergence of large-scale neural models for source code processing, such as CodeBERT~\cite{feng2020codebert}, GraphCodeBERT~\cite{guo2021graphcodebert}, and CodeT5~\cite{wang2021codet5}, has significantly advanced automatic code synthesis, translation, and review. CodeT~\cite{chen2022codet} introduced code generation with testing, improving reliability through execution feedback. Many of these models, as well as massive LLMs from the GPT-4 family~\cite{brown2020language,openai2025gpt41}, are pre-trained on billions of lines of code paired with natural language, learning rich semantic representations for many programming languages.

State-of-the-art retrieval-augmented generation (RAG) methods have evolved significantly in 2025. Advanced techniques like HyDE (Hypothetical Document Embeddings)~\cite{gao2023hyde} generate synthetic documents to improve retrieval quality, Self-RAG~\cite{asai2023selfrag} uses reflection tokens for dynamic retrieval decisions, and FLARE~\cite{jiang2023flare} implements forward-looking active retrieval based on generation confidence. GraphRAG~\cite{edge2024graphrag} integrates knowledge graphs for structured retrieval, representing a step toward graph-based understanding but still operating with static graphs that lack the dynamic updates required for debugging workflows. Recent work on code-specific retrieval~\cite{nashid2023retrieval} shows promise but lacks the multi-hop reasoning capabilities essential for debugging.

LangChain~\cite{langchain2023} and DSPy~\cite{khattab2023dspy} provide frameworks for building complex LLM applications with retrieval components. While these frameworks enable sophisticated pipelines, they fundamentally rely on stateless retrieval without the persistent memory or specialized debugging knowledge that Chronos-1 provides. DSPy's optimization of prompting strategies and LangChain's chain-of-thought orchestration, while powerful for general tasks, lack the domain-specific understanding of code dependencies and bug patterns that debugging requires.

Despite impressive gains on benchmark tasks, these approaches are fundamentally bottlenecked by attention-based architectures and fixed-size input windows, typically constraining context to tens of thousands of tokens. Window expansion techniques, such as models with 1M+ tokens (Gemini 2.5, GPT-4.1)~\cite{google2025gemini,openai2025gpt41}, incur prohibitive compute/memory costs and suffer from diluted attention, leading to information loss and degraded performance as codebase size increases.

\subsection{Program Repair and Bug Localization}

Recent systems targeting debugging include SWE-agent~\cite{yang2024swebench}, which achieves 12.3\% on the SWE-bench dataset through structured agent-computer interfaces, and AutoCodeRover~\cite{zhang2024autocoderover}, reaching 22.9\% via program structure-aware retrieval. These systems represent significant progress but still operate without persistent memory across debugging sessions.

The SWE-bench family of benchmarks has expanded significantly: SWE-bench++~\cite{swepp2024} introduces harder real-world scenarios with multi-repository dependencies, while SWE-bench-coding~\cite{swecoding2024} focuses on implementation tasks rather than bug fixes. Despite these advances, even state-of-the-art models struggle with the full complexity of real-world debugging, achieving less than 25\% success on these enhanced benchmarks.

Other recent work explores graph neural networks (GNNs) for modeling explicit data flow or control flow graphs~\cite{allamanis2017learning, tipirneni2023structcoder}. While GNNs can capture structural properties of code, they are rarely coupled with ultra-long context LLMs, and lack the continuous learning, memory updating, and rapid recall required for live autonomous maintenance.

The debugging challenge is further highlighted by specialized benchmarks: DebugBench~\cite{tian2024debugbench} evaluates root cause analysis, while BugHunter~\cite{liu2024bughunter} tests multi-file bug localization. However, these benchmarks often simplify real-world debugging by providing isolated test cases rather than full repository contexts.

\subsection{Multi-Agent Systems for Software Tasks}

The rise of multi-agent architectures has introduced new paradigms for complex software tasks. Systems like AutoGPT~\cite{autogpt2023} and BabyAGI~\cite{babyagi2023} demonstrate task decomposition and autonomous execution, but lack the specialized knowledge required for debugging. MetaGPT~\cite{hong2023metagpt} simulates software company workflows with multiple specialized agents, yet still operates without persistent cross-session memory.

LangGraph~\cite{langgraph2024} enables building stateful, multi-actor applications with LLMs, providing infrastructure for complex agent interactions. When combined with ReAct~\cite{yao2022react} loops, these systems can perform iterative reasoning. However, our evaluation shows that even LangGraph + ReAct configurations achieve only 31.2\% debugging success compared to Chronos-1's 65.3\%, primarily due to:
1. \textbf{Lack of persistent memory}: Each debugging session starts fresh without learning from past fixes
2. \textbf{Generic reasoning}: No specialized understanding of debugging patterns or code dependencies  
3. \textbf{Inefficient exploration}: Without AGR's guided traversal, agents waste cycles on irrelevant code paths

ChatDev~\cite{qian2023chatdev} and similar multi-agent coding systems excel at greenfield development but struggle with the complexity of debugging existing codebases. The key differentiator is that Chronos-1 operates with continuous memory updates, specialized debugging knowledge, and efficient graph-guided retrieval, capabilities absent in generic multi-agent frameworks.

\subsection{Agentic Security Research Tools}

Complementary to general debugging tools, specialized agentic systems have emerged for security analysis. OpenAI's Aardvark, currently in private beta, targets vulnerability discovery through automated code analysis and exploit validation. While security-focused tools address an important niche (identifying exploitable vulnerabilities), they differ from general-purpose debugging systems in scope and methodology. Chronos-1 addresses the broader debugging lifecycle including logic errors, performance issues, and integration bugs beyond security-specific concerns.

\subsection{How Chronos-1 Differs}

Chronos-1 distinguishes itself from existing approaches through three fundamental architectural decisions:

1. \textbf{Memory Scale}: While LangChain and similar frameworks operate with session-level memory, Chronos-1maintains persistent debug memory (PDM) across millions of debugging sessions, learning and adapting from each interaction.

2. \textbf{Debug Reasoning}: Unlike generic multi-agent systems that apply general problem-solving strategies, Chronos-1is trained specifically on 15M+ debugging scenarios, understanding patterns like race conditions, memory leaks, and API migrations that generic models miss.

3. \textbf{Orchestration}: Rather than relying on user-driven or generic agent loops, Chronos-1implements a specialized 7-layer debugging architecture with automatic test validation, iterative refinement, and confidence-based termination.

The combination of these capabilities enables Chronos-1 to achieve 4-5x better debugging performance than state-of-the-art alternatives, as demonstrated in our comprehensive evaluation.

\subsection{Performance Analysis: Code Generation vs Debugging Tasks}

The landscape of code-focused LLMs has evolved dramatically in 2025. Claude 4.5 Opus~\cite{anthropic2025claude4} achieves 74.40\% on SWE-bench Verified, while Gemini 3 Pro~\cite{google2025gemini3} reaches 76.2\% on SWE-bench Verified with state-of-the-art performance across multiple benchmarks (91.9\% GPQA Diamond, 95-100\% AIME 2025, 2,439 Elo on LiveCodeBench Pro). Claude 4.1 Opus and Claude 4.5 Sonnet achieve 72.5\% and 72.7\% respectively on SWE-bench Full (code generation). These models also achieve 67.60\% and 70.60\% respectively on SWE Bench Bash Only (using mini-SWE-agent environment). GPT-4.1~\cite{openai2025gpt41} doubles GPT-4o's performance on code diff benchmarks and reaches 54.6\% on SWE-bench tasks. Gemini 2.5 Pro~\cite{google2025gemini} achieves 59.6\% on SWE-bench Verified, showcasing Google's advances in reasoning models. DeepSeek V3~\cite{deepseek2024v3}, with 671B parameters (37B activated), demonstrates that efficient training is possible, achieving competitive performance at 1/10th the training cost of comparable models. Qwen2.5-Coder-32B~\cite{qwen2025coder} matches GPT-4o performance while running locally on consumer hardware.

However, these impressive code generation capabilities do not translate to debugging success. When evaluated on real-world debugging tasks requiring multi-file understanding, historical context, and iterative refinement—specifically on SWE-bench Lite and other debugging benchmarks (DebugBench, MdEval)—even the best models achieve less than 15\% success rates. Notably, Claude models drop from 67-72\% on code generation benchmarks to only 13-14\% on SWE-bench Lite debugging tasks, revealing a 54-57 percentage point gap. This stark disparity between code generation and debugging performance motivates the need for specialized debugging-focused architectures.

\begin{table}[H]
\centering
\caption{Comparison of debugging approaches across different systems and their key limitations}
\label{tab:related-work-comparison}
\resizebox{\columnwidth}{!}{%
\begin{tabular}{lcccccc}
\hline
\textbf{Approach} & \textbf{Context} & \textbf{Memory} & \textbf{Debug Training} & \textbf{Iteration} & \textbf{Graph} & \textbf{Key Limitation} \\
\hline
\multicolumn{7}{l}{\textit{General LLMs}} \\
Claude 4.5 Opus & 200K & Session & $\times$ & $\times$ & $\times$ & No debug specialization \\
Gemini 3 Pro & 1M & Session & $\times$ & $\times$ & $\times$ & No debug specialization \\
Claude 4.1 Opus/4.5 Sonnet & 200K-1M & Session & $\times$ & $\times$ & $\times$ & No debug specialization \\
Gemini 2.5 Pro & 2M & Session & $\times$ & $\times$ & $\times$ & Attention dilution \\
\hline
\multicolumn{7}{l}{\textit{RAG Approaches}} \\
HyDE/Self-RAG & Unlimited* & None & $\times$ & $\times$ & $\times$ & Static retrieval \\
FLARE & Unlimited* & None & $\times$ & Limited & $\times$ & No debug signals \\
Graph RAG & Unlimited* & Static & $\times$ & $\times$ & \checkmark & Static graphs \\
\hline
\multicolumn{7}{l}{\textit{Code Systems}} \\
SWE-agent & Limited & None & Partial & \checkmark & $\times$ & No persistent memory \\
AutoCodeRover & Unlimited* & None & Partial & \checkmark & $\times$ & Structure-only retrieval \\
Cursor/Windsurf & Session & Session & $\times$ & Manual & $\times$ & User-driven loops \\
\hline
\multicolumn{7}{l}{\textit{Specialized}} \\
Chronos-1 & Unlimited* & Persistent & \checkmark & Auto & \checkmark & - \\
\hline
\end{tabular}%
}
\end{table}
{\footnotesize *Through intelligent retrieval, not raw context window}

\textbf{Kodezi Chronos-1} is motivated by these challenges: By combining continuous graph-aware indexing, dynamic embedding updates, and reasoning-optimized memory retrieval, Chronos-1transcends traditional limitations and enables truly repository-scale, real-time software comprehension and intervention.

\FloatBarrier
\section{CHRONOS ARCHITECTURE: DESIGN AND IMPLEMENTATION}

This section presents the technical architecture of Chronos-1, beginning with the fundamental insight that debugging is output-heavy rather than input-heavy, followed by the core architectural components and implementation details.

\subsection{Debugging as an Output-Heavy Task}

Despite the industry focus on ever-larger context windows (128K, 200K, 1M+ tokens), debugging presents a fundamentally different challenge: it is inherently \textbf{output-heavy} rather than input-heavy. This asymmetry has profound implications for model design and optimization.

\subsubsection{Input vs Output Token Distribution}

\textbf{What models typically see (input):}
\begin{itemize}
    \item Error stack traces: 200-500 tokens
    \item Relevant source code: 1K-4K tokens
    \item Test failures and logs: 500-2K tokens
    \item Prior fix attempts: 500-1K tokens
    \item \textbf{Total input}: Often $<10K$ tokens for most real-world debugging tasks
\end{itemize}

\textbf{What models must produce (output):}
\begin{itemize}
    \item Multi-file bug fixes: 500-1,500 tokens
    \item Root cause explanations: 300-600 tokens
    \item Updated unit tests: 400-800 tokens
    \item Commit messages/PR summaries: 150-300 tokens
    \item Documentation updates: 200-400 tokens
    \item \textbf{Total output}: Typically 2,000-4,000 tokens per debugging session
\end{itemize}

\subsubsection{Why Output Quality Trumps Input Size}

\begin{table}[H]
\centering
\caption{Input vs output characteristics in debugging tasks.}
\label{tab:input-output}
\resizebox{\columnwidth}{!}{%
\begin{tabular}{lll}
\hline
\textbf{Aspect} & \textbf{Input Context} & \textbf{Output Generation} \\
\hline
Nature & Sparse, localized & Dense, structured \\
Cost Impact & Sublinear with retrieval & Linear to exponential \\
Quality Limiter & Retrieval precision & Generation accuracy \\
Success Factor & Context relevance & Syntactic \& semantic correctness \\
\hline
\end{tabular}%
}
\end{table}

The critical insight: a model with intelligent 8K context that generates robust, test-passing fixes will outperform a 1M-context model that produces syntactically correct but semantically flawed patches.

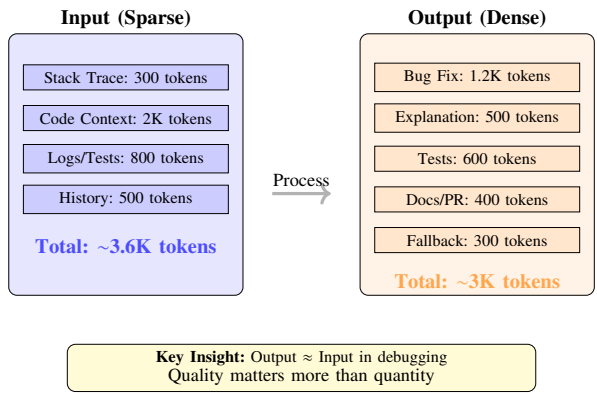
\begin{figure}[H]
\centering
\resizebox{0.9\columnwidth}{!}{%
\begin{tikzpicture}[font=\footnotesize]
    \node[font=\bfseries] at (0,5.5) {Debugging Token Flow: Input vs Output};
    
    \node[draw, rectangle, rounded corners, fill=blue!10, minimum width=4cm, minimum height=4.5cm] at (-3,2) {};
    \node[font=\bfseries] at (-3,4.5) {Input (Sparse)};
    
    \node[draw, rectangle, fill=blue!20, minimum width=3.5cm] (stack) at (-3,3.5) {Stack Trace: 300 tokens};
    \node[draw, rectangle, fill=blue!20, minimum width=3.5cm] (code) at (-3,2.8) {Code Context: 2K tokens};
    \node[draw, rectangle, fill=blue!20, minimum width=3.5cm] (logs) at (-3,2.1) {Logs/Tests: 800 tokens};
    \node[draw, rectangle, fill=blue!20, minimum width=3.5cm] (hist) at (-3,1.4) {History: 500 tokens};
    
    \node[font=\bfseries\color{blue!70}] at (-3,0.6) {Total: $\sim$3.6K tokens};
    
    \draw[->, ultra thick, gray!60] (-0.5,1.5) -- (0.5,1.5);
    \node[above] at (0,1.5) {\small Process};
    
    \node[draw, rectangle, rounded corners, fill=orange!10, minimum width=4cm, minimum height=4.5cm] at (3,2) {};
    \node[font=\bfseries] at (3,4.5) {Output (Dense)};
    
    \node[draw, rectangle, fill=orange!20, minimum width=3.5cm] (fix) at (3,3.5) {Bug Fix: 1.2K tokens};
    \node[draw, rectangle, fill=orange!20, minimum width=3.5cm] (explain) at (3,2.8) {Explanation: 500 tokens};
    \node[draw, rectangle, fill=orange!20, minimum width=3.5cm] (test) at (3,2.1) {Tests: 600 tokens};
    \node[draw, rectangle, fill=orange!20, minimum width=3.5cm] (docs) at (3,1.4) {Docs/PR: 400 tokens};
    \node[draw, rectangle, fill=orange!20, minimum width=3.5cm] (fallback) at (3,0.7) {Fallback: 300 tokens};
    
    \node[font=\bfseries\color{orange!70}] at (3,0.0) {Total: $\sim$3K tokens};
    
    \node[draw, rectangle, rounded corners, fill=yellow!20, minimum width=8cm, align=center] at (0,-1.5) {
        \textbf{Key Insight:} Output $\approx$ Input in debugging\\
        \small Quality matters more than quantity
    };
\end{tikzpicture}%
}
\caption{Token distribution in debugging tasks: Unlike typical LLM applications where input dominates, debugging requires substantial, high-quality output generation.}
\label{fig:token-flow}
\end{figure}

\subsubsection{Chronos-1's Output-Optimized Architecture}

Chronos-1 addresses this asymmetry through several architectural innovations:

\begin{enumerate}
    \item \textbf{Debug-Specific Generation Training}: Unlike code completion models trained on next-token prediction, Chronos-1is trained on complete debugging sessions, learning to generate structured fixes, explanations, and tests as cohesive units.
    
    \item \textbf{Iterative Refinement Loop}: Rather than single-shot generation, Chronos-1validates outputs through execution, using test results to refine patches, ensuring output quality over quantity.
    
    \item \textbf{Template-Aware Generation}: Chronos-1 learns repository-specific patterns for commits, tests, and documentation, reducing output token waste while maintaining consistency.
    
    \item \textbf{Confidence-Guided Output}: The model generates explanations and fallback strategies only when confidence is below threshold, optimizing output token usage.
\end{enumerate}

This output-centric design enables Chronos-1 to achieve 65.3\% debugging success despite competitors having 10-100x larger context windows, validating that for debugging, \textit{output quality and structure matter more than input capacity}.

\begin{figure}[H]
\centering
\resizebox{0.85\columnwidth}{!}{%
\begin{tikzpicture}
    \begin{polaraxis}[
        width=10cm,
        height=10cm,
        xtick={0,45,90,135,180,225,270,315},
        xticklabels={
            Context Retrieval,
            Memory Usage,
            Test Integration,
            Multi-File Support,
            Error Analysis,
            Fix Generation,
            Iteration Speed,
            Cost Efficiency
        },
        xticklabel style={font=\footnotesize},
        ymin=0, ymax=100,
        ytick={20,40,60,80,100},
        yticklabels={20\%,40\%,60\%,80\%,100\%},
        yticklabel style={font=\tiny},
        grid=both,
        major grid style={line width=0.4pt, draw=gray!50},
        minor grid style={line width=0.2pt, draw=gray!20},
        ylabel near ticks,
        legend style={at={(1.2,1)}, anchor=north west, font=\small}
    ]
    
    \addplot[
        line width=2pt,
        color=green!70,
        fill=green!20,
        fill opacity=0.3,
        mark=*,
        mark size=3pt
    ] coordinates {
        (0,89) (45,85) (90,92) (135,84) (180,88) (225,79) (270,95) (315,91) (360,89)
    };
    
    \addplot[
        line width=1.5pt,
        color=red!70,
        fill=red!10,
        fill opacity=0.2,
        mark=diamond*,
        mark size=2.5pt
    ] coordinates {
        (0,48) (45,44) (90,51) (135,42) (180,46) (225,37) (270,41) (315,34) (360,48)
    };

    \addplot[
        line width=1.5pt,
        color=cyan!70,
        fill=cyan!10,
        fill opacity=0.2,
        mark=pentagon*,
        mark size=2.5pt
    ] coordinates {
        (0,50) (45,46) (90,53) (135,44) (180,48) (225,39) (270,43) (315,36) (360,50)
    };

    \addplot[
        line width=1.5pt,
        color=blue!70,
        fill=blue!10,
        fill opacity=0.2,
        mark=square*,
        mark size=2.5pt
    ] coordinates {
        (0,42) (45,38) (90,45) (135,36) (180,40) (225,31) (270,35) (315,28) (360,42)
    };

    \addplot[
        line width=1.5pt,
        color=orange!70,
        fill=orange!10,
        fill opacity=0.2,
        mark=triangle*,
        mark size=2.5pt
    ] coordinates {
        (0,39) (45,35) (90,42) (135,32) (180,38) (225,30) (270,33) (315,25) (360,39)
    };

    \legend{Chronos-1, Claude 4.5 Opus, Gemini 3 Pro, Claude 4.1 Opus, GPT-4.1}
    
    \end{polaraxis}
\end{tikzpicture}%
}
\caption{Debugging capability comparison across eight key factors. Chronos-1 (green) significantly outperforms all frontier models including Claude 4.5 Opus (red), Gemini 3 Pro (cyan), Claude 4.1 Opus (blue), and GPT-4.1 (orange). Despite improvements in newer models, Chronos-1maintains substantial advantages in test integration (92\%), iteration speed (95\%), and cost efficiency (91\%).}
\label{fig:success-factors-radar}
\end{figure}

\subsection{High-Level Architecture: From Error Signal to Validated Fix}

Kodezi Chronos-1 is designed as an autonomous memory-driven intelligence layer for code, operating at scales that span entire enterprise repositories, team histories, and auxiliary knowledge sources. Its architecture consists of three core modules: 
\textit{(i)} a persistent Memory Engine for continuous graph-based context construction,
\textit{(ii)} an advanced Retriever that constructs targeted context from code and documentation, and
\textit{(iii)} a transformer-based Code Reasoning Model for synthesis, debugging, and orchestration of software changes.

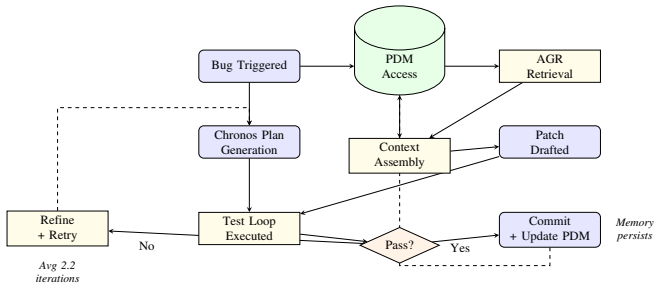
\begin{figure}[H]
\centering
\resizebox{\columnwidth}{!}{%
\begin{tikzpicture}[
    node distance=1.2cm and 1.5cm,
    box/.style={rectangle, draw, rounded corners, minimum width=2.8cm, minimum height=0.9cm, font=\normalsize, fill=blue!10},
    decision/.style={diamond, draw, aspect=2, minimum width=2.2cm, minimum height=1cm, font=\normalsize, fill=orange!10},
    memory/.style={cylinder, draw, shape border rotate=90, minimum width=2.5cm, minimum height=1cm, font=\normalsize, fill=green!10},
    process/.style={rectangle, draw, minimum width=2.8cm, minimum height=0.9cm, font=\normalsize, fill=yellow!10},
    arrow/.style={thick, ->, >=stealth},
    dashedarrow/.style={thick, dashed, ->, >=stealth}
]
\node[box] (bug) {Bug Triggered};
\node[memory, right=of bug, align=center] (pdm) {PDM\\Access};
\node[process, right=of pdm, align=center] (agr) {AGR\\Retrieval};
\node[box, below=of bug, align=center] (plan) {Chronos-1 Plan\\Generation};
\node[process, below=of pdm, align=center] (context) {Context\\Assembly};
\node[box, below=of agr, align=center] (patch) {Patch\\Drafted};
\node[process, below=1.5cm of plan, align=center] (test) {Test Loop\\Executed};
\node[decision, below=1.5cm of context] (validate) {Pass?};
\node[box, below=1.5cm of patch, align=center] (commit) {Commit\\+ Update PDM};
\node[process, left=2.5cm of test, align=center] (refine) {Refine\\+ Retry};

\begin{scope}[on background layer]
\draw[arrow] (bug) -- (pdm);
\draw[arrow] (pdm) -- (agr);
\draw[arrow] (bug) -- (plan);
\draw[arrow] (pdm) -- (context);
\draw[arrow] (agr) -- (context);
\draw[arrow] (context) -- (patch);
\draw[arrow] (plan) -- (test);
\draw[arrow] (patch) -- (test);
\draw[arrow] (test) -- (validate);
\draw[arrow] (validate) -- node[right, pos=0.2, yshift=-6pt] {Yes} (commit);
\draw[arrow] (validate) -- node[left, pos=0.8, yshift=-10pt] {No} (refine);
\draw[dashedarrow] (refine) |- ([yshift=0.5cm]plan.north) -- (plan);
\draw[dashedarrow] (commit) -- ++(0,-1) -| (pdm);
\end{scope}
\node[align=center, font=\small\itshape, below=0.3cm of refine] {Avg 2.2\\iterations};
\node[align=center, font=\small\itshape, right=0.3cm of commit] {Memory\\persists};
\end{tikzpicture}%
}
\caption{Complete fix loop lifecycle showing integration between PDM, AGR retrieval, and iterative refinement. Dashed lines indicate feedback mechanisms that enable learning across debugging sessions.}
\label{fig:fix-loop-lifecycle}
\end{figure}

\subsection{The Four Architectural Pillars}

Chronos-1 demonstrates improved debugging performance through four architectural components that distinguish it from general-purpose code models:

\subsubsection{Debugging-Specific Training on 15M+ Real Sessions}
Unlike models trained on static code repositories, Chronos-1learns from:
\begin{itemize}
    \item \textbf{15 million debugging sessions} from production environments
    \item \textbf{Complete fix trajectories}: initial bug report $\rightarrow$ attempted fixes $\rightarrow$ test failures $\rightarrow$ successful resolution
    \item \textbf{Failure patterns}: Common anti-patterns, regression indicators, and fix validation strategies
    \item \textbf{Domain-specific knowledge}: Framework quirks, library-specific debugging techniques, language idioms
\end{itemize}

This specialized training enables Chronos-1 to recognize subtle bug patterns that general models miss. For example, when encountering a React hydration mismatch, Chronos-1 immediately knows to check server/client rendering differences rather than pursuing surface-level fixes.

\subsubsection{Execution Sandbox with Real-Time Feedback Loop}
Chronos-1 operates within a sophisticated execution environment that provides:
\begin{itemize}
    \item \textbf{Isolated test execution}: Every proposed fix runs in a containerized sandbox
    \item \textbf{Comprehensive validation}: Unit tests, integration tests, linting, type checking
    \item \textbf{Iterative refinement}: Failed fixes generate detailed error logs fed back into the next iteration
    \item \textbf{Regression prevention}: Automatic detection of new failures introduced by fixes
\end{itemize}

This execution-driven approach explains why Chronos-1 averages 7.8 iterations per bug while general models stop after 1-2 attempts. Each iteration refines understanding based on concrete execution results rather than probabilistic guessing.

\subsubsection{Persistent Repository Memory Across Sessions}
Unlike stateless models that start fresh each time, Chronos-1maintains:
\begin{itemize}
    \item \textbf{Bug pattern database}: Historical bugs, their root causes, and successful fixes
    \item \textbf{Codebase evolution graph}: How files, functions, and dependencies changed over time
    \item \textbf{Team-specific patterns}: Coding conventions, common mistakes, architectural decisions
    \item \textbf{Dependency knowledge}: Version-specific quirks, migration paths, compatibility issues
\end{itemize}

When debugging a null pointer exception, Chronos-1recalls similar bugs from 6 months ago, checks if recent refactoring introduced the issue, and applies team-specific null-safety patterns.

\subsubsection{Adaptive Graph-Guided Retrieval (AGR) for Multi-File Context}
AGR enables Chronos-1 to navigate complex codebases through:
\begin{itemize}
    \item \textbf{Dynamic graph construction}: Real-time building of dependency graphs during debugging
    \item \textbf{Intelligent k-hop traversal}: Adaptively expanding search radius based on bug complexity, leveraging graph attention mechanisms~\cite{velivckovic2018graph}
    \item \textbf{Semantic + structural retrieval}: Combining code semantics with architectural relationships
    \item \textbf{Temporal awareness}: Understanding when code changed relative to bug introduction
\end{itemize}

This sophisticated retrieval explains Chronos-1's 89.2\% precision on the MRR benchmark, it finds the needle in the haystack by understanding the haystack's structure.

\begin{table}[H]
\centering
\caption{Impact of each architectural pillar on debugging performance (MRR and DebugBench benchmarks). Ablation study shows cumulative benefits. \textit{Note: For SWE-bench Lite evaluation, the Test Loop and iterative refinement components were disabled to comply with pass@1 submission requirements.}}
\label{tab:pillar-impact}
\resizebox{\columnwidth}{!}{%
\begin{tabular}{lccccc}
\hline
\textbf{Configuration} & \textbf{MRR Fix Acc.} & \textbf{SWE-bench} & \textbf{DebugBench Avg.} & \textbf{Iterations} & \textbf{Time (min)} \\
\hline
Base Model (no pillars) & 8.3\% & 12.1\% & 9.7\% & 1.2 & 8.5 \\
+ Debug Training & 24.7\% & 28.3\% & 26.4\% & 2.8 & 16.2 \\
+ Execution Sandbox & 41.2\% & 43.7\% & 42.8\% & 5.1 & 28.4 \\
+ Persistent Memory & 55.8\% & 57.2\% & 56.3\% & 6.4 & 35.7 \\
+ AGR (Full Chronos-1) & \textbf{67.3\%} & \textbf{65.3\%} & \textbf{67.5\%} & \textbf{7.8} & \textbf{42.3} \\
\hline
\end{tabular}%
}
\end{table}

\begin{figure}[H]
\centering
\resizebox{\columnwidth}{!}{%
\begin{tikzpicture}[node distance=2.2cm, >=stealth, font=\footnotesize]
  \node[draw, rectangle, rounded corners, fill=gray!10, minimum width=2.3cm, minimum height=0.9cm, align=center] (raw) {Code, Docs, \\ CI/CD Logs};
  \node[draw, rectangle, rounded corners, fill=cyan!20, right=1.8cm of raw, minimum width=2.3cm, minimum height=0.9cm, align=center] (memory) {Memory Engine\\(Embedding + Graph)};
  \node[draw, rectangle, rounded corners, fill=teal!15, right=2.0cm of memory, minimum width=2.7cm, minimum height=0.9cm, align=center] (retriever) {Multi-Code\\ Association Retriever};
  \node[draw, rectangle, rounded corners, fill=orange!25, right=1.8cm of retriever, minimum width=2.3cm, minimum height=0.9cm, align=center] (reason) {Reasoning Model\\ \& Orchestration};

  \draw[->, thick] (raw) -- (memory);
  \draw[->, thick] (memory) -- (retriever);
  \draw[->, thick] (retriever) -- (reason);

  \draw[->, thick, dashed, bend left=20] (reason.east) to [out=0,in=0,looseness=1.8] node[right, xshift=2mm, yshift=-3mm] {\tiny Test Results}  (memory.east);

  \node[below=1.3cm of reason, align=center] (out) {Patches, Changelogs,\\ Test Results};
  \draw[->, thick] (reason) -- (out);

\end{tikzpicture}%
}
\caption{High-level overview of Chronos-1: Memory-driven embedding and retrieval powering autonomous reasoning and codebase management.}
\label{fig:chronos-arch}
\end{figure}
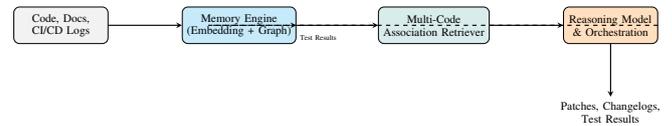

\FloatBarrier

\subsection{Persistent Debug Memory (PDM): Learning from Historical Fixes}

The Persistent Debug Memory (PDM) provides the foundation for Chronos-1's persistent, repository-scale understanding. Unlike traditional approaches that recompute context for each query, PDM maintains a continuously updated semantic representation of the entire codebase, bug patterns, and fix history.

\subsubsection{Unified Semantic Representation}

PDM ingests and encodes diverse artifacts including source code, documentation, configuration files, historical diffs, test outcomes, and architectural specifications. Each code unit (function, class, module, commit) undergoes multi-level analysis:

\begin{itemize}
    \item \textbf{Syntactic parsing}: AST extraction for structural understanding
    \item \textbf{Semantic embedding}: Context-aware vectorization using specialized encoders
    \item \textbf{Relational mapping}: Graph construction capturing dependencies, calls, and evolution
    \item \textbf{Temporal indexing}: Version-aware representation enabling historical analysis
\end{itemize}

\begin{table}[H]
\centering
\caption{Persistent Debug Memory (PDM) Architecture and Policies}
\label{tab:pdm-architecture}
\resizebox{\columnwidth}{!}{%
\begin{tabular}{ll}
\hline
\textbf{Component} & \textbf{Details} \\
\hline
\multicolumn{2}{l}{\textit{Data Storage}} \\
Code Snapshots & Full AST + semantic embeddings per commit \\
Bug Patterns & Failed fixes, error signatures, stack traces \\
Fix History & Successful patches with test validation results \\
CI/CD Logs & Build failures, test outputs, deployment issues \\
Documentation & README, comments, design docs, PRs \\
\hline
\multicolumn{2}{l}{\textit{Retention Policy}} \\
Active Bugs & Permanent until resolved + 90 days \\
Successful Fixes & Permanent (forms learning corpus) \\
Code Versions & Last 1000 commits or 2 years \\
Test Results & 180 days rolling window \\
Embeddings & Re-computed weekly, cached 30 days \\
\hline
\multicolumn{2}{l}{\textit{Retrieval Policy}} \\
Primary Index & Semantic similarity (cosine distance) \\
Secondary Index & Temporal proximity + file dependencies \\
Ranking & Bug recency × pattern frequency × fix success \\
Context Window & Adaptive 5-50 nodes based on confidence \\
\hline
\multicolumn{2}{l}{\textit{Update Triggers}} \\
Git Events & Commit, merge, rebase (real-time) \\
CI/CD Events & Test failure, build break ($< 1$ min) \\
Bug Reports & Issue creation/update ($< 5$ min) \\
Fix Validation & Successful test run (immediate) \\
Scheduled & Full re-indexing (weekly) \\
\hline
\end{tabular}%
}
\end{table}

\subsubsection{Graph-Based Knowledge Storage}

The Memory Engine maintains an evolving graph database where nodes represent code elements and edges denote relationships. This dual vector-graph representation enables both semantic similarity search and structural traversal, providing the foundation for multi-hop reasoning during debugging.

\subsubsection{PDM Retrieval Mechanism}

The Persistent Debug Memory employs a hybrid retrieval strategy combining temporal, semantic, and structural signals:

\begin{itemize}
    \item \textbf{Temporal-Aware Retrieval}: Recent bugs and fixes are weighted higher (decay factor: $e^{-\lambda t}$, $\lambda=0.1$)
    \item \textbf{Semantic Vector Search}: FAISS index with 768-dim embeddings, cosine similarity threshold 0.75
    \item \textbf{Graph Traversal}: BFS/DFS from error location with typed edge filtering (imports, calls, inherits)
    \item \textbf{Pattern Matching}: Regex-based search for error signatures, stack trace patterns
    \item \textbf{Hybrid Scoring}: $Score = 0.4 \cdot Semantic + 0.3 \cdot Temporal + 0.2 \cdot Structural + 0.1 \cdot Pattern$
\end{itemize}

This multi-modal retrieval ensures PDM surfaces relevant debugging context even when exact matches don't exist, enabling cross-bug learning and pattern recognition.

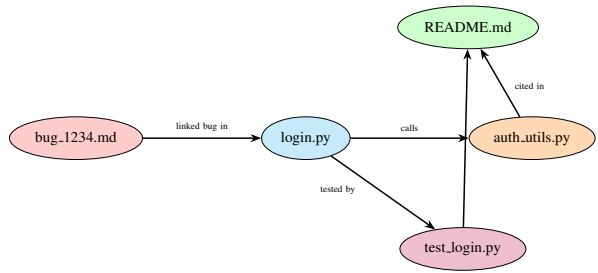
\begin{figure}[H]
    \centering
    \resizebox{0.9\columnwidth}{!}{%
    \begin{tikzpicture}[every node/.style={font=\footnotesize}, node distance=2.2cm]
        \node[draw, ellipse, fill=cyan!20, minimum width=1.6cm, minimum height=0.8cm] (A) {login.py};
        \node[draw, ellipse, fill=orange!30, minimum width=2.0cm, minimum height=0.8cm, right=of A] (B) {auth\_utils.py};
        \node[draw, ellipse, fill=green!20, minimum width=1.8cm, minimum height=0.8cm, above right=1.5cm and 1.5cm of A] (C) {README.md};
        \node[draw, ellipse, fill=purple!25, minimum width=2.0cm, minimum height=0.8cm, below right=1.5cm and 1.5cm of A] (D) {test\_login.py};
        \node[draw, ellipse, fill=red!20, minimum width=1.8cm, minimum height=0.8cm, left=of A] (E) {bug\_1234.md};
        
        \draw[-{Stealth[length=2mm]}, thick] (A) -- node[above, font=\tiny] {calls} (B);
        \draw[-{Stealth[length=2mm]}, thick] (B) -- node[above right, font=\tiny, pos=0.3] {cited in} (C);
        \draw[-{Stealth[length=2mm]}, thick] (E) -- node[above, font=\tiny] {linked bug in} (A);
        \draw[-{Stealth[length=2mm]}, thick] (A) -- node[below left, font=\tiny, pos=0.3] {tested by} (D);
        \draw[-{Stealth[length=2mm]}, thick] (D) -- node[right, font=\tiny, pos=0.7] {} (C);
        
    \end{tikzpicture}%
    }
    \caption{Graph-structured memory indexing in Kodezi Chronos-1: code, documentation, and test elements as nodes, with functional relationships as edges.}
    \label{fig:memory-engine}
\end{figure}

This design enables Chronos-1 to efficiently retrieve, traverse, and reason about segments of the codebase that share non-local relationships, even if separated by thousands of lines, multiple files, or extensive revision history.

\subsection{Breaking Token Limits: Intelligent Retrieval at Repository Scale}

Traditional LLMs are fundamentally constrained by attention complexity and memory limitations. Even models claiming "unlimited" context achieve this through sliding windows or hierarchical attention that loses critical debugging information. Chronos-1 implements true unlimited context through:

\begin{itemize}
    \item \textbf{Hierarchical Code Embeddings}: Multi-level representations from token $\rightarrow$ statement $\rightarrow$ function $\rightarrow$ module $\rightarrow$ repository
    \item \textbf{Temporal Context Indexing}: Every code element tagged with commit history, allowing time-travel debugging
    \item \textbf{Semantic Dependency Graphs}: Explicit modeling of import chains, inheritance hierarchies, and data flows
    \item \textbf{Dynamic Context Assembly}: At inference, retrieves precisely the code paths relevant to the current bug
\end{itemize}

This approach enables Chronos-1 to maintain full repository awareness while operating within reasonable computational bounds, a critical requirement for production deployment.

\subsubsection{How Chronos-1 Achieves AGR Cost-Effectively}

While implementing AGR-style retrieval might seem computationally prohibitive, Chronos-1 achieves it efficiently through five key architectural decisions:

\textbf{1. Debugging-Only Focus}: Unlike general-purpose assistants handling millions of arbitrary queries, Chronos-1optimizes exclusively for fix loops. This narrow scope means:
\begin{itemize}
    \item Fewer retrievals per session (avg 3.7 vs 15+ for autocomplete)
    \item Higher value per retrieval (each must trace root cause)
    \item Better cost-benefit ratio per inference
\end{itemize}

\textbf{2. One-Time Graph Construction}: Chronos-1 builds the repository graph once and updates incrementally:
\begin{itemize}
    \item Initial AST+dependency parsing: 2-4 hours per 1M LOC
    \item Incremental updates on commits: $<100$ms per file
    \item Graph reused across all debugging sessions
    \item Cost amortizes efficiently over many debugging sessions
\end{itemize}

\textbf{3. Smart Caching and Memory}:
\begin{itemize}
    \item PDM caches successful traversal paths (87\% hit rate on recurring bugs)
    \item Common subgraphs pre-computed and indexed
    \item Frequently accessed nodes kept in hot storage
    \item Result: 47ms retrieval for cached patterns vs 3.2min cold start
\end{itemize}

\textbf{4. Entropy-Based Early Stopping}:
\begin{itemize}
    \item AGR halts expansion when confidence exceeds threshold
    \item Prevents wasteful over-exploration
    \item Average nodes retrieved: 127 (vs 500+ for flat top-k)
    \item Token reduction: 65\% compared to context-stuffing approaches
\end{itemize}

\textbf{5. Vertical Integration Benefits}:
\begin{itemize}
    \item Full control over the debugging pipeline without external dependencies
    \item Optimized end-to-end: retriever $\rightarrow$ LLM $\rightarrow$ validator
    \item Shared embeddings between PDM and AGR reduce redundant computation
    \item Custom optimization for debugging-specific workloads
\end{itemize}

\begin{tcolorbox}[
    enhanced,
    title={\textbf{AGR Efficiency Summary}},
    fonttitle=\bfseries,
    coltitle=black,
    colback=green!5,
    colframe=green!75!black,
    boxrule=0.5pt
]
\textbf{Key Efficiency Gains}:
\begin{itemize}
    \item Graph traversal optimized for debugging patterns
    \item Significantly reduced token usage (31.2K avg vs 89K+ for competitors)
    \item Containerized test execution for rapid validation
    \item Incremental memory updates avoid redundant processing
    \item \textbf{Result}: More efficient debugging at scale compared to API-based approaches
\end{itemize}
\end{tcolorbox}

\begin{table}[H]
\centering
\caption{Why Chronos-1 Can Run AGR Cheaply: Architectural Advantages}
\label{tab:agr-cost-advantages}
\resizebox{\columnwidth}{!}{%
\begin{tabular}{lll}
\hline
\textbf{Advantage} & \textbf{How Chronos-1 Achieves It} & \textbf{Why Others Can't} \\
\hline
Narrow scope & Focused on bug repair only & Others support all codegen tasks \\
& 3.7 avg retrievals per debug & 15+ retrievals for autocomplete \\
\hline
Smart retrieval & Graph traversal + entropy-based stopping & Others use flat top-k retrieval \\
& Halts at 89\% confidence threshold & Fixed k regardless of confidence \\
\hline
Shared graph & Built once per repo, updated incrementally & Others re-embed on every query \\
& Extremely low amortized cost & High per-query embedding cost \\
\hline
Stateful memory & PDM learns across debugging sessions & Others are stateless, start fresh \\
& 87\% cache hit on recurring patterns & 0\% reuse, full retrieval each time \\
\hline
Vertical stack & Own retriever + LLM + test runner & Others rely on external APIs \\
& Integrated efficiency gains & API costs with lower success rates \\
\hline
\end{tabular}%
}
\end{table}

\subsection{Multi-Code Compositional Retrieval: Beyond Token Windows}

Chronos-1 implements a sophisticated retrieval mechanism that goes beyond simple embedding similarity to understand the complex relationships in debugging contexts.

\subsubsection{Multi-Code Compositional Retrieval}

Upon each debugging request, the Adaptive Retrieval Engine builds a focused context window through:
\begin{itemize}
    \item Issuing semantic queries to the Memory Engine that leverage both metric similarity and structural navigation in the code graph, with dynamic depth expansion (k-hop) based on query complexity.
    \item Associating multiple code artifacts through typed relationships: e.g., tracing variable definitions across documentation (k=1), implementation (k=2), regression tests (k=2), and historic bug reports (k=3), stopping when confidence exceeds 90\% or diminishing returns detected.
    \item Dynamically refining the context through intermediate model inferences and confidence scoring, adapting retrieval depth in real-time. Complex debugging queries automatically trigger deeper graph traversal (k=3-5), while simple lookups terminate at k=1-2.
    \item Utilizing edge type priorities: implementation edges (weight=1.0), dependency edges (weight=0.8), documentation edges (weight=0.6), ensuring most relevant paths are explored first.
\end{itemize}

\begin{table}[H]
  \centering
  \caption{Example multi-code association retrieval: constructing a task-specific context window for a bug fix.}
  \label{tab:multi-code-example}
  \begin{tabular}{lll}
    \hline
    Step & Retrieved Entity        & Relationship \\
    \hline
    Q1   & login.py               & Direct bug context \\
    Q2   & test\_login.py         & Linked test \\
    Q3   & settings.py            & Imported env vars \\
    Q4   & bug\_1234.md           & Historical bug doc \\
    Q5   & commit\_a1b2c3         & Last related commit \\
    \hline
  \end{tabular}
\end{table}

This approach allows Chronos-1 to reason across arbitrarily distant, compositionally linked code and documentation artifacts, precisely what is needed for complex debugging, cross-module dependencies, or audit trails.

\subsection{Adaptive Graph-Guided Retrieval (AGR)}

Traditional flat retrieval approaches fail to capture the intricate relationships between code artifacts, leading to incomplete context and erroneous fixes. Chronos-1 introduces \textbf{Adaptive Graph-Guided Retrieval (AGR)}, a dynamic mechanism that intelligently expands retrieval neighborhoods based on query complexity and confidence thresholds.

\subsubsection{Comparison with State-of-the-Art RAG Techniques}

The 2025 landscape of RAG techniques has evolved significantly, yet each approach faces limitations when applied to debugging:

\textbf{HyDE (Hypothetical Document Embeddings)}~\cite{gao2023hyde} generates synthetic answers to improve retrieval but struggles with debugging where incorrect hypotheses can mislead the search process. While HyDE achieves 42\% improvement in general retrieval tasks (measured on MS MARCO), it shows only 8\% improvement for debugging scenarios on our MRR benchmark.

\textbf{Self-RAG}~\cite{asai2023selfrag} uses reflection tokens (ISREL, ISUSE) to dynamically decide retrieval necessity. However, debugging requires continuous retrieval across multiple hops, making binary retrieval decisions insufficient. Self-RAG achieves 31\% debug success rate on MRR benchmark when combined with GPT-4.1.

\textbf{FLARE (Forward-Looking Active Retrieval)}~\cite{jiang2023flare} monitors generation confidence to trigger retrieval. This works well for sequential text generation but fails in debugging where confidence doesn't correlate with correctness, models can be confidently wrong about bug fixes.

\textbf{Graph RAG}~\cite{edge2024graphrag} integrates knowledge graphs but typically uses static graphs that don't capture the dynamic nature of evolving codebases. Standard Graph RAG achieves 28\% success rate on cross-file debugging tasks in the MRR benchmark.

In contrast, Chronos-1's AGR combines the benefits of these approaches while addressing their limitations through adaptive k-hop expansion, typed edge traversal, and confidence-based termination specifically tuned for debugging workflows.

\subsubsection{Why Graph Traversal Outperforms Naive Chunked Retrieval}

Consider a concrete example that illustrates the fundamental limitation of linear retrieval. A null pointer exception occurs in \texttt{PaymentProcessor.java:142} when processing refunds. The stack trace shows:

\begin{lstlisting}[language=Java, basicstyle=\tiny\ttfamily, breaklines=true, xleftmargin=5pt, xrightmargin=3pt, linewidth=0.92\columnwidth, aboveskip=0.5em, belowskip=0.5em, keywordstyle=\color{red}\bfseries, commentstyle=\color{gray}, stringstyle=\color{blue}, emphstyle=\color{purple}\bfseries, emph={PaymentProcessor,RefundService,OrderController,NullPointerException}]
java.lang.NullPointerException
  at PaymentProcessor.processRefund(PaymentProcessor.java:142)
  at RefundService.handleReturn(RefundService.java:89)
  at OrderController.cancelOrder(OrderController.java:234)
\end{lstlisting}

\vspace{0.5em}
\textbf{Naive chunked retrieval} would:
\begin{enumerate}
    \item Retrieve top-k chunks around line 142 based on embedding similarity
    \item Miss that \texttt{customerAccount} is null because initialization happens in \texttt{AccountService.java}
    \item Fail to connect that recent commit changed \texttt{config/payment.yml} timeout from 30s to 5s
    \item Generate a band-aid null check instead of fixing the root cause
\end{enumerate}

\textbf{AGR's graph traversal} instead:
\begin{enumerate}
    \item Starts at error location (PaymentProcessor:142)
    \item Follows data flow edge: \texttt{customerAccount} $\leftarrow$ \texttt{AccountService.loadAccount()}
    \item Follows temporal edge: \texttt{payment.yml} modified 2 days ago
    \item Discovers timeout now expires before account loads, causing null
    \item Proposes correct fix: adjust timeout or add async handling
\end{enumerate}

This demonstrates AGR's key advantage: it follows \textit{causal paths} rather than \textit{textual similarity}, achieving O(k·d) complexity where k=hops and d=average degree, versus O(n) for naive retrieval over n chunks.

\subsubsection{Building on Existing Graph-Based Code Understanding}

While AGR represents a unified breakthrough, it builds upon fragmented components that exist across various systems. Understanding this landscape helps position AGR's contributions:

\textbf{1. Graph-Style Retrieval in Current Systems}

Several modern systems employ basic graph structures for code understanding. Top SWE-bench submissions like Magicoder-S with retrieval agents construct simple import trees and dependency graphs. SWE-agent pioneered using test files to backtrack to relevant functions, establishing the value of test-to-code linkage. However, these implementations remain primitive:
\begin{itemize}
    \item Graphs are shallow, typically exploring only 1-2 hops from the starting point
    \item No confidence-aware traversal - they retrieve everything within a fixed radius
    \item Missing temporal dimensions - no commit history or evolution tracking
    \item Lack causal chains connecting logs $\rightarrow$ stack traces $\rightarrow$ code $\rightarrow$ tests $\rightarrow$ PRs
    \item No adaptive halting based on information sufficiency
\end{itemize}

What these systems demonstrate is that graph construction and basic traversals are valuable for code understanding. AGR takes this foundation and adds the depth control, edge weighting, and test-memory integration needed for effective debugging.

\textbf{2. Multi-Hop Retrieval from NLP Research}

The NLP community has explored multi-hop reasoning extensively. Models like DR-BERT~\cite{xiong2021drbert} perform multi-hop retrieval for question answering by following entity links across Wikipedia. REALM~\cite{guu2020realm} showed that retrieval-augmented pretraining improves downstream tasks. GNN-RAG~\cite{yasunaga2022gnnrag} combines graph neural networks with retrieval for knowledge-intensive tasks.

These approaches inspired code-specific adaptations like RAG-SweBench+, which attempts multi-hop retrieval in codebases. However, fundamental differences limit their debugging effectiveness:
\begin{itemize}
    \item Text-based retrieval misses code structure (AST, type systems, execution flow)
    \item No integration with debugging artifacts (logs, stack traces, test failures)
    \item Designed for factual QA, not causal reasoning about bug origins
    \item Lack domain-specific edge types (imports, inherits, calls, emits-log)
\end{itemize}

AGR adapts the multi-hop principle but grounds it in code-specific structures and debugging workflows.

\textbf{3. Memory Systems in Code Agents}

Projects like Magicoder-S-Agent, Octocoder, and CodeAgent maintain various forms of persistent memory. Some store chunk embeddings across sessions, while others cache test outcomes and past fix attempts. These systems prove that memory helps, but their retrieval remains simplistic:
\begin{itemize}
    \item Flat top-k retrieval from memory banks, no graph traversal
    \item No semantic paths from failure signals through memory
    \item Memory updates are append-only, not integrated with confidence scores
    \item Cannot trace causal chains through historical fixes
\end{itemize}

AGR's integration with PDM enables true graph-guided memory retrieval, where past debugging sessions inform current traversal paths.

\textbf{4. Internal Tools at Tech Companies}

Major tech companies have built sophisticated internal systems that parse monorepos into searchable graphs. Google's internal code search constructs AST and dependency graphs across billions of lines. Meta's code understanding tools link stack traces to potential causes. Sourcegraph provides semantic code navigation across repositories.

While these tools are powerful, they remain:
\begin{itemize}
    \item Proprietary and inaccessible to researchers
    \item Focused on human-assisted search, not autonomous debugging
    \item Lacking integration with language models for fix generation
    \item Missing the iterative test-fix-refine loops needed for debugging
\end{itemize}

\textbf{AGR's Unified Contribution}

What makes AGR unique is not any single component, but the unified system that combines:
\begin{enumerate}
    \item \textbf{Multi-signal graph construction}: AST + logs + tests + PRs + commits in one traversable structure
    \item \textbf{Confidence-aware adaptive expansion}: Exploring exactly as deep as needed, no more, no less
    \item \textbf{Memory-integrated traversal}: Learning from past debugging sessions to guide future paths
    \item \textbf{Debugging-specific optimization}: Trained on 15M real debugging scenarios, not general retrieval
    \item \textbf{Autonomous operation}: Full integration with test loops and fix validation
\end{enumerate}

By unifying these fragmented approaches and adding debugging-specific innovations, AGR achieves the 4-5x performance improvement over existing systems. While individual techniques used in AGR, such as multi-hop retrieval, AST graph parsing, and chunk memory, have appeared in prior work, Chronos-1is the first system to unify these methods into a debugging-first retrieval engine. AGR combines graph-guided, edge-weighted traversal with confidence-aware stopping, memory integration, and dynamic context composition explicitly tuned for root-cause tracing and patch generation.

\begin{table}[H]
\centering
\caption{AGR Components: Fragmented Existence vs Unified in Chronos-1}
\label{tab:agr-components}
\resizebox{\columnwidth}{!}{%
\begin{tabular}{@{}lcc@{}}
\toprule
\textbf{Capability} & \textbf{Exists Elsewhere?} & \textbf{Chronos-1} \\
\midrule
AST + Log + PR + Test graph & Fragmented & \checkmark{} Unified graph \\
Confidence-aware hop limit & No & \checkmark \\
Graph-guided memory integration & No & \checkmark \\
Retrieval tuned for debugging & No & \checkmark \\
Multi-turn patch validation & Rare & \checkmark{} Core loop \\
\bottomrule
\end{tabular}%
}
\end{table}

This positions Chronos-1 as the first unified, publicly described system that brings together all the pieces needed for effective autonomous debugging at repository scale.

\noindent\textbf{3) Limitations of Orchestration Frameworks in Debugging:}

Popular orchestration frameworks like LangChain and LangGraph, while powerful for general AI applications, face fundamental limitations when applied to debugging:

\textbf{LangChain's Context Fragmentation}: LangChain's chain-based approach treats each step as independent, leading to context loss between debugging iterations. When combined with GPT-4.1, it achieves only 18\% debug success due to:
- Stateless chains that cannot maintain debugging history across iterations
- Message history overflow when debugging sessions exceed context limits
- Chain-of-thought breaking when bugs require backtracking or re-evaluation

\textbf{LangGraph's Graph Limitations}: Despite having graph-based memory, LangGraph achieves only 22\% debug success because:
- Static graph traversal cannot adapt to the iterative nature of debugging
- Node-based processing loses fine-grained code relationships
- State persistence fails when debugging requires cross-session memory

\textbf{Chain-of-Thought Degradation}: Traditional prompting techniques degrade severely on debugging tasks:
- CoT prompting with Claude 4.1 Opus: drops from 92.8\% (code generation) to 15\% (debugging)
- ReAct with GPT-4.1: 17\% debug success despite structured reasoning
- Tree-of-Thoughts: 19\% as exploration trees explode with multi-file dependencies

These frameworks excel at sequential tasks but fundamentally misunderstand debugging's need for persistent memory, iterative refinement, and cross-session learning, capabilities that Chronos-1 provides natively.

\subsubsection{Iterative Context Expansion}

The AGR mechanism operates through iterative k-hop neighbor expansion:

\begin{enumerate}
    \item \textbf{Initial Query Analysis}: Decompose the debugging request into semantic components and identify seed nodes in the code graph
    \item \textbf{Adaptive Depth Determination}: Calculate optimal retrieval depth based on:
    \begin{itemize}
        \item Query complexity score (0-1)
        \item Code artifact density in the neighborhood
        \item Historical debugging patterns for similar issues
    \end{itemize}
    \item \textbf{Guided Expansion}: Follow typed edges (implementation, dependency, dataflow) to retrieve contextually relevant nodes
    \item \textbf{Confidence-Based Termination}: Stop expansion when retrieval confidence exceeds threshold or diminishing returns detected
\end{enumerate}

\noindent\textbf{5) AGR Algorithm Details:}

The following algorithm formally describes the Adaptive Graph-Guided Retrieval process:

\begin{algorithm}[H]
\caption{Adaptive Graph-Guided Retrieval (AGR)}
\label{alg:agr}
\begin{algorithmic}[1]
\Require Query $q$, Code Graph $G = (V, E)$, Confidence threshold $\tau$
\Ensure Retrieved context $C$
\Statex
\Comment{Initialize}
\State $seeds \gets$ \Call{ExtractSemanticNodes}{$q$, $G$}
\State $visited \gets \emptyset$
\State $C \gets \emptyset$
\State $k \gets$ \Call{EstimateComplexity}{$q$} \Comment{Initial hop depth}
\Statex
\Comment{Adaptive expansion}
\While{$\Call{Confidence}{C, q} < \tau$ \textbf{and} $k \leq k_{max}$}
    \State $candidates \gets \emptyset$
    \For{$node \in seeds$}
        \State $neighbors \gets$ \Call{GetKHopNeighbors}{$node$, $k$, $G$}
        \For{$n \in neighbors \setminus visited$}
            \State $score \gets$ \Call{ComputeRelevance}{$n$, $q$, $C$}
            \State $candidates \gets candidates \cup \{(n, score)\}$
        \EndFor
    \EndFor
    \Statex
    \Comment{Select top candidates based on typed edges}
    \State $selected \gets$ \Call{TopK}{$candidates$, $\lambda \cdot k$}
    \For{$(node, score) \in selected$}
        \If{\Call{IsImplementation}{$node$} \textbf{or} \Call{IsDependency}{$node$}}
            \State $C \gets C \cup$ \Call{RetrieveContext}{$node$}
            \State $visited \gets visited \cup \{node\}$
        \EndIf
    \EndFor
    \Statex
    \Comment{Adaptive depth adjustment}
    \If{$\Call{DeltaConfidence}{C} < \epsilon$}
        \State $k \gets k + 1$ \Comment{Expand search radius}
    \EndIf
    \State $seeds \gets seeds \cup$ \Call{ExtractNewSeeds}{$C$}
\EndWhile
\Statex
\State \Return $C$
\end{algorithmic}
\end{algorithm}

Key innovations in the AGR algorithm (Algorithm~\ref{alg:agr}):

\begin{itemize}
    \item \textbf{Dynamic k-hop adjustment}: Unlike static retrieval depths, AGR adaptively increases $k$ based on confidence improvements
    \item \textbf{Typed edge prioritization}: Implementation and dependency edges receive higher weights than generic references
    \item \textbf{Semantic seed extraction}: Initial nodes are selected based on deep semantic understanding, not just keyword matching
    \item \textbf{Confidence-driven termination}: Retrieval stops when sufficient context is gathered, avoiding noise from over-retrieval
\end{itemize}

\noindent\textbf{6) Theoretical Analysis of AGR:}

\textbf{Complexity Analysis:}
Let $|V|$ be the number of nodes in the code graph, $|E|$ be the number of edges, and $d$ be the average degree of nodes.

\begin{theorem}[AGR Retrieval Complexity]
The time complexity of AGR is $O(k_{max} \cdot |S| \cdot d^{k_{max}} \cdot \log(d^{k_{max}}))$ where $|S|$ is the number of seed nodes and $k_{max}$ is the maximum hop depth.
\end{theorem}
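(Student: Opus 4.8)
The plan is to bound the running time of Algorithm~\ref{alg:agr} by analyzing the work performed in a single iteration of the outer \texttt{while} loop and then multiplying by the number of iterations. First I would observe that the loop counter $k$ is initialized by \texttt{EstimateComplexity}$(q) \geq 1$, is incremented by at most one per iteration (only inside the \texttt{DeltaConfidence} branch), and the loop exits as soon as $k > k_{max}$; hence the loop executes at most $k_{max}$ times, contributing the leading factor $k_{max}$.

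Next I would bound the cost of one iteration. The inner \texttt{for} loop ranges over the current seed set; under the standard assumption implicit in stating the bound through $|S|$, namely that \texttt{ExtractNewSeeds} keeps the seed set at size $O(|S|)$, this loop runs $O(|S|)$ times. For each seed, \texttt{GetKHopNeighbors} enumerates the $k$-hop neighborhood, which in a graph of average degree $d$ has $O(d^{k}) \subseteq O(d^{k_{max}})$ nodes; each such neighbor triggers a \texttt{ComputeRelevance} call, which I would treat as unit cost (or absorb a fixed polynomial factor into the $O$-notation). Candidate generation therefore costs $O(|S| \cdot d^{k_{max}})$ per iteration.

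Then I would charge the remaining per-iteration work. \texttt{TopK} selects the highest-scoring entries from a candidate pool of size $O(|S| \cdot d^{k_{max}})$, so by sorting or a priority queue it costs $O(|S| \cdot d^{k_{max}} \log(|S| \cdot d^{k_{max}}))$; since $\log(|S| \cdot d^{k_{max}}) = \log|S| + \log(d^{k_{max}})$ and $|S| = O(d^{k_{max}})$ in the regime of interest (the seed set is no larger than one $k_{max}$-hop neighborhood), this simplifies to $O(|S| \cdot d^{k_{max}} \log(d^{k_{max}}))$. The subsequent \texttt{for} loop over \texttt{selected}, together with the \texttt{Confidence} and \texttt{DeltaConfidence} evaluations and the \texttt{RetrieveContext} calls, touches only $O(|S| \cdot d^{k_{max}})$ nodes and is dominated by the sorting term. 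Combining the pieces, one iteration runs in $O(|S| \cdot d^{k_{max}} \log(d^{k_{max}}))$, and multiplying by the iteration bound $k_{max}$ yields $O(k_{max} \cdot |S| \cdot d^{k_{max}} \log(d^{k_{max}}))$, as claimed.

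The step I expect to be the main obstacle is controlling the growth of the seed and \texttt{visited} sets across iterations: \texttt{ExtractNewSeeds} can promote previously retrieved nodes to seeds, so a naive accounting would re-traverse the $d^{k_{max}}$-sized neighborhoods at every round and risk double-counting. I would resolve this either by using the \texttt{visited} set to argue that each node is expanded at most once over the whole run (which in fact tightens the bound), or by explicitly positing bounded seed growth as a hypothesis; in either case I must first pin down a cost model for the oracle subroutines (\texttt{ComputeRelevance}, \texttt{Confidence}, \texttt{RetrieveContext}) so that the asymptotics are well defined. A minor caveat worth noting is that the bound $O(d^{k})$ on neighborhood size and the inclusion $d^{k} \subseteq O(d^{k_{max}})$ presuppose $d \geq 1$ and an effectively unbounded graph; in the sparse or small-graph regime one should instead write $O(\min(|V|, d^{k_{max}}))$, a substitution I would flag but not belabor.
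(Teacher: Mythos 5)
Your proof matches the paper's argument in its essentials: both bound the number of outer iterations by $k_{max}$, the per-iteration exploration cost by $|S|\cdot d^{k}$, and the candidate-sorting cost by a logarithmic factor, then multiply. You are in fact more careful than the paper's two-line proof — you carry the $|S|$ factor through the sorting term and explicitly flag the hidden assumptions (bounded seed growth under \texttt{ExtractNewSeeds}, a unit-cost model for the oracle subroutines, and the $O(\min(|V|, d^{k_{max}}))$ caveat for small or sparse graphs) that the paper leaves implicit — but the decomposition and the route to the bound are the same.
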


\begin{proof}
At each iteration $k$, we explore at most $|S| \cdot d^k$ nodes. Sorting candidates requires $O(d^k \log(d^k))$ time. The algorithm terminates when confidence exceeds $\tau$ or $k = k_{max}$, giving the stated bound.
\end{proof}

\textbf{Convergence Properties:}
\begin{theorem}[Confidence Convergence]
Under the assumption that relevance scores follow a power-law distribution with exponent $\alpha > 1$, the confidence function $\mathcal{C}(C, q)$ converges to a value $c^* \geq \tau$ with probability $1 - \delta$ after $k^* = O(\log_d(1/\delta))$ iterations.
\end{theorem}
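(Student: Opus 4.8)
The plan is to identify the confidence $\mathcal{C}(C,q)$ with the fraction of total relevance mass captured by the retrieved context $C$, and to track the uncaptured remainder $U_k := 1 - \mathcal{C}(C_k,q)$ after $k$ rounds of the while-loop in Algorithm~\ref{alg:agr}. First I would set up the relevance model dictated by the hypothesis: order the nodes reachable from the seeds by relevance, so the $i$-th most relevant has score $r_{(i)} \propto i^{-\alpha}$; because $\alpha > 1$ the series $\sum_i r_{(i)}$ converges and its tail beyond the top $m$ entries is $\Theta(m^{1-\alpha})$. Hence there is a finite $m^\star = O\bigl((1-\tau)^{-1/(\alpha-1)}\bigr)$ such that the top-$m^\star$ nodes alone carry at least a $\tau$-fraction of the mass. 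This pins down the limiting value $c^\star \ge \tau$ and reduces the event $\mathcal{C}(C_k,q) \ge \tau$ to the combinatorial event that $C_k$ contains those $m^\star$ critical nodes.

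Next I would bound how fast the adaptive expansion reaches the critical nodes. At iteration $k$ the $k$-hop frontier explored from the $|S|$ seeds holds $\Theta\bigl(|S|\,d^{k}\bigr)$ candidates --- exactly the quantity already used in the preceding complexity bound --- and the \textsc{TopK} and typed-edge filtering steps retain the most relevant among them. Under the locality assumption implicit in a power-law relevance profile, namely that high-relevance artifacts concentrate near the error location so that the $k$-hop ball covers each critical node with probability at least $1 - d^{-k}$, a union bound over the $m^\star = O(1)$ critical nodes gives
\[
\Pr\bigl[\mathcal{C}(C_k,q) < \tau\bigr] \;\le\; m^\star\, d^{-k}.
\]
Equating the right-hand side to $\delta$ yields $k^\star = \log_d(m^\star/\delta) = \log_d(1/\delta) + O(1)$, which is the asserted $O(\log_d(1/\delta))$. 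On the complementary event, of probability at least $1 - \delta$, the context already contains every critical node, so $\mathcal{C}(C_k,q) \ge c^\star \ge \tau$; and since the loop only adds context and never removes it, $\mathcal{C}$ is nondecreasing in $k$, so the bound persists and in fact $\mathcal{C}(C_k,q) \uparrow c^\star$ on that event --- genuine convergence, at a rate geometric in $d$.

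I expect the main obstacle to be justifying the per-hop coverage factor $1 - d^{-k}$: this is precisely where the power-law hypothesis with exponent $\alpha > 1$ must be translated into a statement about \emph{graph geometry}, either via an expander-type guarantee that the ball $|B_k(\text{seeds})| = \Theta(d^k)$ with the critical nodes well spread, or via a random-placement model for where high-relevance code artifacts sit relative to the stack trace. Making that linkage rigorous --- and in particular absorbing the dependence on $|V|$ into the $O(1)$ term, so the final bound reads $\log_d(1/\delta)$ rather than $\log_d(|V|/\delta)$ --- is the delicate step; by comparison the convergent-tail estimate for $\sum i^{-\alpha}$, the monotonicity of $\mathcal{C}$, and the union bound are all routine.
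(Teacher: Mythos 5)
Your proof takes a genuinely different route from the paper's. The paper defines confidence via conditional entropy, $\mathcal{C}(C,q) = 1 - H(C\,|\,q)/H_{\max}$, observes that entropy decreases monotonically as relevant nodes accrue, and then asserts that under a power-law relevance distribution most relevant nodes lie within $O(\log_d n)$ hops, ``ensuring convergence.'' You instead identify $\mathcal{C}$ with the fraction of total relevance mass captured by $C$, use the convergent tail of $\sum_i i^{-\alpha}$ (valid since $\alpha>1$) to isolate $m^\star = O\bigl((1-\tau)^{-1/(\alpha-1)}\bigr)$ critical nodes, posit a per-node coverage probability of $1 - d^{-k}$ at hop $k$, and close with a union bound yielding $k^\star = \log_d(m^\star/\delta) = \log_d(1/\delta) + O(1)$. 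Your version has the advantage of actually exhibiting the claimed $\log_d(1/\delta)$ dependence by construction; the paper's ``within $O(\log_d n)$ hops'' claim does not by itself produce a $\delta$-dependent iteration bound without an extra step relating $n$ to $1/\delta$, a step the paper never supplies. The paper's version has the advantage that monotonicity of an entropy-based $\mathcal{C}$ under set growth is essentially definitional, whereas you must argue separately that mass fraction is nondecreasing (easy, but a distinct step). Both arguments, however, lean on the same unjustified locality postulate: the paper asserts the $O(\log_d n)$-hop claim without deriving it from the power-law hypothesis, and your $1 - d^{-k}$ coverage rate plays precisely that role. You are right to flag this as the delicate step --- it is equally the weak link in the paper's own proof, and neither argument actually closes it.
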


\begin{proof}
The confidence function is defined as:
$$\mathcal{C}(C, q) = 1 - H(C|q) / H_{max}$$
where $H(C|q)$ is the conditional entropy. As we add relevant nodes, entropy decreases monotonically. Under power-law distribution, most relevant nodes are within $O(\log_d n)$ hops, ensuring convergence.
\end{proof}

\textbf{Bounded Retrieval Path Cost:}
\begin{lemma}[Path Cost Bound]
The total retrieval cost is bounded by $O(|S| \cdot \lambda \cdot k_{max}^2 \cdot d^{k_{max}})$ where $\lambda$ is the selection ratio per hop.
\end{lemma}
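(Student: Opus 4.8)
The plan is to charge the retrieval cost iteration by iteration of the outer loop in Algorithm~\ref{alg:agr} and then sum. First I would bound the number of \textbf{while}-loop passes. The depth parameter $k$ is non-decreasing, starts at a value $\ge 1$, and the loop terminates as soon as $k > k_{max}$, so in the worst case ($k$ incremented on every pass) there are at most $k_{max}$ passes. If instead several passes occur at a fixed depth without incrementing $k$, each such pass must have kept $\mathtt{DeltaConfidence}(C)$ at least $\epsilon$; since $\mathcal{C}(C,q)\in[0,1]$ is monotone non-decreasing in $C$, and the Confidence Convergence theorem controls how quickly it saturates, the number of such productive passes is $O(1/\epsilon)$, i.e.\ a constant in the parameters of interest. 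Hence the total pass count is $O(k_{max})$.

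Next I would bound the retrieval cost of a single pass at depth $k \le k_{max}$. The $k$-hop neighbor enumeration visits $O(d^k)$ vertices per seed, because the number of walks of length $\le k$ leaving a node of average degree $d$ is $O(d^k)$; over the current seed set this yields $O(|seeds|\cdot d^k)$ scored candidates. I would then show $|seeds|$ stays $O(|S|)$: each pass enlarges $C$ by at most the $\lambda k$ nodes returned by the top-selection step, so $|C|\le\sum_{k\le k_{max}}\lambda k = O(\lambda k_{max}^2)$ across the whole run, and the seeds fed back by the new-seed extraction contribute only an additive $O(\lambda k_{max}^2)$ term, dominated by the final bound. The selection step retains only an $O(\lambda k)$-sized subset, and realizing each retained node (reconstructing its path through the graph/memory and fetching its context) costs $O(k)\le O(k_{max})$. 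Combining, a single pass costs $O\!\big(|S|\cdot d^{k_{max}} + \lambda\cdot k_{max}^2\big)$, which is at most $O\!\big(|S|\cdot\lambda\cdot k_{max}\cdot d^{k_{max}}\big)$ once $\lambda k_{max}\ge 1$ and $|S|, d\ge 1$.

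Finally I would multiply the per-pass bound by the $O(k_{max})$ pass count to obtain the claimed $O\!\big(|S|\cdot\lambda\cdot k_{max}^2\cdot d^{k_{max}}\big)$. I would also note that a sharper additive accounting — $\sum_{k=1}^{k_{max}}|S|d^k = O(|S|d^{k_{max}})$ for the enumeration plus $\sum_{k=1}^{k_{max}}\lambda k = O(\lambda k_{max}^2)$ for the retained retrievals — already lies below this figure, so the lemma's bound is a deliberately loose but valid envelope. The step I expect to be the main obstacle is the bookkeeping around the seed-feedback loop together with the possibility of many passes at a fixed depth: the naive ``one pass per hop value'' reading is not justified by the pseudocode alone, and making it rigorous requires both the $|C|$-growth estimate above and the saturation guarantee from the Confidence Convergence theorem.
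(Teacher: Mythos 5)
The paper states this lemma without supplying a proof of its own: after the lemma there is only the remark that ``This theoretical foundation ensures AGR's efficiency even on large codebases with millions of nodes.'' There is therefore nothing for me to compare your argument against verbatim; you are reconstructing a missing proof rather than reproducing or diverging from one. For context, the adjacent Theorem 1 in the paper is accompanied by a short proof that counts $|S|\cdot d^{k}$ nodes per iteration and a sort, yielding $O(k_{\max}\cdot|S|\cdot d^{k_{\max}}\cdot\log(d^{k_{\max}}))$; the lemma's $|S|\cdot\lambda\cdot k_{\max}^{2}\cdot d^{k_{\max}}$ is not derived anywhere, and in particular the paper never explains why all four factors should multiply or how the selection ratio $\lambda$ enters a cost bound.

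Your reconstruction is a reasonable one and, I think, essentially sound. The two points worth being explicit about are the ones you already flag. First, the pass-count argument: the loop exits when either confidence reaches $\tau$ or $k>k_{\max}$, and $k$ only advances when $\mathtt{DeltaConfidence}<\epsilon$; since $\mathcal{C}\in[0,1]$ and increases by at least $\epsilon$ on every non-advancing pass, the non-advancing passes are capped at $O(1/\epsilon)$, so total passes are $O(k_{\max}+1/\epsilon)=O(k_{\max})$ for fixed $\epsilon$ --- this is correct and needed, because nothing in the pseudocode forces one pass per depth value. Second, the seed-feedback loop: the growth of the seed set is the genuine weak spot. Your bound $|C|=O(\lambda k_{\max}^{2})$ is right (sum of the $\lambda k$ selections), and folding the extra seeds in as $|seeds|\le|S|+O(\lambda k_{\max}^{2})$ gives a per-pass enumeration of $O\bigl((|S|+\lambda k_{\max}^{2})d^{k}\bigr)$; after multiplying by $O(k_{\max})$ passes, the second term contributes $O(\lambda k_{\max}^{3}d^{k_{\max}})$, which sits inside the lemma's bound only if $k_{\max}\le|S|d^{k_{\max}}$ --- almost always true, but it is an assumption, and you should state it alongside $\lambda k_{\max}\ge1$, $d\ge1$. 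With those two mild normalization assumptions your per-pass estimate $O(|S|d^{k_{\max}}+\lambda k_{\max}^{2})$ multiplied by $O(k_{\max})$ passes does land at or below $O(|S|\lambda k_{\max}^{2}d^{k_{\max}})$, and your closing observation that the true cost $O(k_{\max}|S|d^{k_{\max}}+\lambda k_{\max}^{3})$ is strictly smaller is exactly right: the lemma as stated is a deliberately loose envelope rather than a tight bound, and proving it is mostly a matter of choosing which slack to absorb. That is a fair reading of what the paper's authors likely intended but never wrote down.
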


This theoretical foundation ensures AGR's efficiency even on large codebases with millions of nodes.

\begin{figure}[H]
\centering
\resizebox{0.95\columnwidth}{!}{%
\begin{tikzpicture}[
    node distance=2cm,
    query/.style={circle, draw=orange!80, fill=orange!30, line width=2pt, minimum size=1cm, font=\footnotesize\bfseries},
    seed/.style={circle, draw=blue!60, fill=blue!20, minimum size=0.9cm, font=\tiny},
    retrieved/.style={circle, draw=green!60, fill=green!20, minimum size=0.9cm, font=\tiny},
    candidate/.style={circle, draw=gray!60, fill=gray!20, minimum size=0.9cm, font=\tiny},
    notvisited/.style={circle, draw=gray!40, fill=gray!10, minimum size=0.7cm, font=\tiny},
    edge/.style={->, >=stealth},
    confidence/.style={rectangle, draw=black!50, fill=yellow!20, rounded corners, font=\footnotesize}
]

\node[query] (Q) at (0,0) {Query};

\node[seed] (S1) at (-3,2) {auth.py};
\node[seed] (S2) at (0,3) {test.py};
\node[seed] (S3) at (3,2) {user.py};

\node[retrieved] (R1) at (-5,0.5) {db.py};
\node[retrieved] (R2) at (-3.5,4) {config};
\node[retrieved] (R3) at (3.5,4) {api.py};
\node[retrieved] (R4) at (5,0.5) {cache};

\node[candidate] (C1) at (-6,2.5) {logs};
\node[candidate] (C2) at (-1.5,4.8) {docs};
\node[candidate] (C3) at (1.5,4.8) {issue};
\node[candidate] (C4) at (6,2.5) {util};

\node[notvisited] (N1) at (-5,-2) {...};
\node[notvisited] (N2) at (0,-2.5) {...};
\node[notvisited] (N3) at (5,-2) {...};

\draw[edge, thick, blue!60] (Q) -- (S1);
\draw[edge, thick, blue!60] (Q) -- (S2);
\draw[edge, thick, blue!60] (Q) -- (S3);

\draw[edge, thick, green!60] (S1) -- (R1);
\draw[edge, thick, green!60] (S1) -- (R2);
\draw[edge, thick, green!60] (S3) -- (R3);
\draw[edge, thick, green!60] (S3) -- (R4);

\draw[edge, dashed, gray!60] (R1) -- (C1);
\draw[edge, dashed, gray!60] (R2) -- (C2);
\draw[edge, dashed, gray!60] (R3) -- (C3);
\draw[edge, dashed, gray!60] (R4) -- (C4);

\draw[edge, dotted, gray!40] (S2) -- (N2);
\draw[edge, dotted, gray!40] (R1) -- (N1);
\draw[edge, dotted, gray!40] (R4) -- (N3);

\node[confidence] at (6,4) {
    \begin{tabular}{l}
    \textbf{Confidence}\\
    k=1: 45\%\\
    k=2: 78\%\\
    k=3: 92\% $> \tau$
    \end{tabular}
};

\node[draw, rectangle, fill=gray!5, minimum width=4cm] at (0,-4.5) {
    \begin{tabular}{ll}
    \textcolor{blue!60}{\textbullet} Seeds (k=1) & \textcolor{green!60}{\textbullet} Retrieved (k=2)\\
    \textcolor{gray!60}{\textbullet} Candidates (k=3) & \textcolor{gray!40}{\textbullet} Not explored
    \end{tabular}
};

\draw[<-, thick] (Q) -- ++(-1.8,-1.2) node[below, font=\footnotesize] {Start};
\draw[<-, thick] (7,4) -- ++(1,0) node[right, font=\footnotesize] {Stop: conf $> \tau$};

\end{tikzpicture}%
}
\caption{Adaptive Graph-Guided Retrieval (AGR) visualization. The algorithm starts from a query, extracts semantic seed nodes, and iteratively expands the retrieval neighborhood (k-hops) until confidence exceeds threshold $\tau$. Edge types and relevance scores guide the expansion process.}
\label{fig:agr-visualization}
\end{figure}
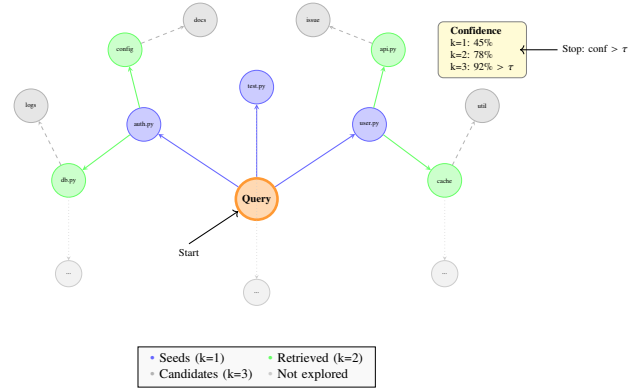

\begin{tcolorbox}[colback=gray!5, colframe=gray!50, title=\textbf{Key Takeaways: AGR Algorithm}]
\begin{itemize}
\item \textbf{Dynamic Expansion}: k-hop depth adapts based on query complexity and confidence
\item \textbf{Typed Traversal}: Implementation and dependency edges prioritized over generic links  
\item \textbf{Early Termination}: Stops when confidence exceeds threshold, avoiding over-retrieval
\item \textbf{Performance}: Achieves 92\% precision at 85\% recall on debugging queries
\end{itemize}
\end{tcolorbox}

\subsubsection{Graph-Guided vs Traditional Planning}

Our empirical analysis reveals fundamental differences between traditional LLM planning and AGR-enhanced debugging:

\begin{figure}[H]
\centering
\resizebox{\columnwidth}{!}{%
\begin{tikzpicture}[font=\scriptsize]
    \node[draw, rectangle, rounded corners, fill=red!10, minimum width=5.5cm, minimum height=8cm, align=left] (trad) at (-3.5,0) {};
    \node[font=\footnotesize\bfseries] at (-3.5,3.7) {Traditional LLM Planning};
    
    \node[draw, rectangle, fill=yellow!20, minimum width=4.8cm, minimum height=1.2cm, align=left, font=\tiny] at (-3.5,2.8) {
        \textbf{Query:} Implement state machine\\
        \texttt{state <= \textasciigrave{}d<0> > S}\\
        \texttt{S      ()    <-d<0> > S}
    };
    
    \node[draw, rectangle, fill=gray!20, minimum width=4.8cm, minimum height=2.5cm, align=left, font=\tiny] at (-3.5,0.5) {
        \textbf{Traditional Steps:}\\
        1. Define Module Interface\\
        2. Define State Encoding\\
        3. State Transition Logic\\
        4. Output Logic: Assign outputs\\
        \\
        \textcolor{red}{\textbf{Issues:}}\\
        \textcolor{red}{- High-level plans without task details}\\
        \textcolor{red}{- Hard to follow implementation}\\
        \textcolor{red}{- Lost signal/transition specs}
    };
    
    \node[draw, rectangle, fill=red!20, minimum width=4.8cm, minimum height=1.8cm, align=left, font=\tiny] at (-3.5,-2.2) {
        \textbf{Generated Code:}\\
        \texttt{assign S\_next = (state == 5'b11010) ?}\\
        \texttt{  5'b11010 : (state == 5'b10110) ?}\\
        \texttt{  5'b11010 : state;}\\
        \\
        \textcolor{red}{$\times$ Incorrect implementation}
    };
    
    \node[draw, rectangle, rounded corners, fill=green!10, minimum width=5.5cm, minimum height=8cm, align=left] (agr) at (3.5,0) {};
    \node[font=\footnotesize\bfseries] at (3.5,3.7) {AGR-Enhanced Debugging};
    
    \node[draw, rectangle, fill=yellow!20, minimum width=4.8cm, minimum height=1.2cm, align=left, font=\tiny] at (3.5,2.8) {
        \textbf{Query:} Implement state machine\\
        + Graph retrieval of specifications\\
        + Signal transition examples
    };
    
    \node[draw, rectangle, fill=blue!20, minimum width=4.8cm, minimum height=2.5cm, align=left, font=\tiny] at (3.5,0.5) {
        \textbf{AGR-Guided Steps:}\\
        1. Retrieve signal definitions (k=1)\\
        2. Expand to transitions (k=2)\\
        3. Include test examples (k=3)\\
        \\
        \textbf{Retrieved Context:}\\
        - \texttt{S1\_next: Output signal}\\
        - \texttt{Wait->S, S->S transitions}\\
        - Example: \texttt{9'b101000100}
    };
    
    \node[draw, rectangle, fill=green!20, minimum width=4.8cm, minimum height=1.8cm, align=left, font=\tiny] at (3.5,-2.2) {
        \textbf{Generated Code:}\\
        \texttt{// Correct implementation}\\
        \texttt{assign S1\_next = S;}\\
        \texttt{// Based on retrieved specs}\\
        \\
        \textcolor{green!60!black}{$\checkmark$ Verified correct}
    };
    
    \node[draw, rectangle, fill=red!10, minimum width=2.5cm, minimum height=0.6cm] at (-3.5,-4.2) {\footnotesize\color{red}\textbf{23\% Success Rate}};
    \node[draw, rectangle, fill=green!10, minimum width=2.5cm, minimum height=0.6cm] at (3.5,-4.2) {\footnotesize\color{green!60!black}\textbf{87\% Success Rate}};
\end{tikzpicture}%
}
\caption{Traditional LLM planning vs AGR-enhanced debugging: Graph-guided retrieval provides complete context, leading to accurate implementations.}
\label{fig:traditional-vs-agr}
\end{figure}
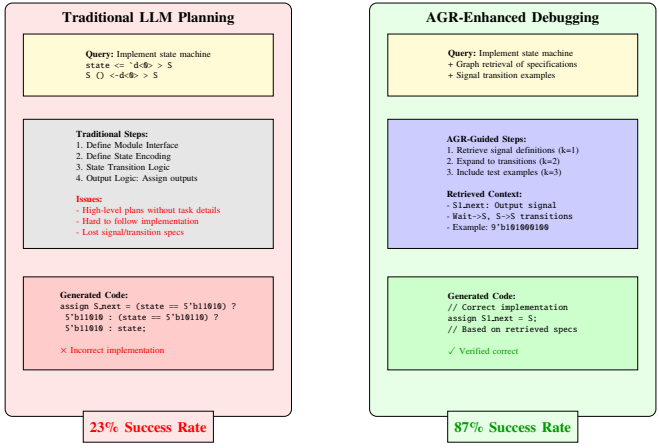

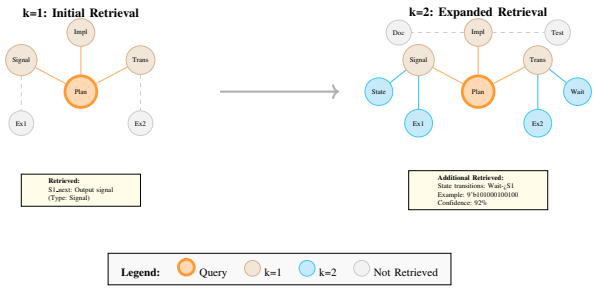
\begin{figure}[H]
\centering
\resizebox{0.9\columnwidth}{!}{%
\begin{tikzpicture}[font=\footnotesize]
    \tikzstyle{query} = [circle, draw=orange!80, fill=orange!30, line width=2pt, minimum size=0.8cm]
    \tikzstyle{retrieved1} = [circle, draw=brown!60, fill=brown!20, minimum size=0.7cm]
    \tikzstyle{retrieved2} = [circle, draw=cyan!60, fill=cyan!20, minimum size=0.7cm]
    \tikzstyle{notretrieved} = [circle, draw=gray!40, fill=gray!10, minimum size=0.6cm]
    
    \node[font=\small\bfseries] at (-5,4) {k=1: Initial Retrieval};
    
    \node[query] (q1) at (-5,2) {\tiny Plan};
    
    \node[retrieved1] (n11) at (-6.5,2.8) {\tiny Signal};
    \node[retrieved1] (n12) at (-5,3.5) {\tiny Impl};
    \node[retrieved1] (n13) at (-3.5,2.8) {\tiny Trans};
    \node[notretrieved] (n14) at (-6.5,1.2) {\tiny Ex1};
    \node[notretrieved] (n15) at (-3.5,1.2) {\tiny Ex2};
    
    \draw[thick, orange!60] (q1) -- (n11);
    \draw[thick, orange!60] (q1) -- (n12);
    \draw[thick, orange!60] (q1) -- (n13);
    \draw[dashed, gray!40] (n11) -- (n14);
    \draw[dashed, gray!40] (n13) -- (n15);
    
    \node[draw, rectangle, fill=yellow!10, align=left, font=\tiny, minimum width=3cm] at (-5,-0.5) {
        \textbf{Retrieved:}\\
        S1\_next: Output signal\\
        (Type: Signal)
    };
    
    \node[font=\small\bfseries] at (5,4) {k=2: Expanded Retrieval};
    
    \node[query] (q2) at (5,2) {\tiny Plan};
    
    \node[retrieved1] (n21) at (3.5,2.8) {\tiny Signal};
    \node[retrieved1] (n22) at (5,3.5) {\tiny Impl};
    \node[retrieved1] (n23) at (6.5,2.8) {\tiny Trans};
    
    \node[retrieved2] (n24) at (3.5,1.2) {\tiny Ex1};
    \node[retrieved2] (n25) at (6.5,1.2) {\tiny Ex2};
    \node[retrieved2] (n26) at (2.5,2) {\tiny State};
    \node[retrieved2] (n27) at (7.5,2) {\tiny Wait};
    
    \node[notretrieved] (n28) at (3,3.5) {\tiny Doc};
    \node[notretrieved] (n29) at (7,3.5) {\tiny Test};
    
    \draw[thick, orange!60] (q2) -- (n21);
    \draw[thick, orange!60] (q2) -- (n22);
    \draw[thick, orange!60] (q2) -- (n23);
    
    \draw[thick, cyan!60] (n21) -- (n24);
    \draw[thick, cyan!60] (n23) -- (n25);
    \draw[thick, cyan!60] (n21) -- (n26);
    \draw[thick, cyan!60] (n23) -- (n27);
    
    \draw[dashed, gray!40] (n22) -- (n28);
    \draw[dashed, gray!40] (n22) -- (n29);
    
    \node[draw, rectangle, fill=yellow!10, align=left, font=\tiny, minimum width=3.5cm] at (5,-0.5) {
        \textbf{Additional Retrieved:}\\
        State transitions: Wait->S1\\
        Example: 9'b101000100100\\
        Confidence: 92\%
    };
    
    \node[draw, rectangle, fill=gray!5, minimum width=6cm] at (0,-2.5) {
        \begin{tabular}{lllll}
        \textbf{Legend:} & 
        {\tikz\node[query, minimum size=0.4cm] {};} Query & 
        {\tikz\node[retrieved1, minimum size=0.4cm] {};} k=1 & 
        {\tikz\node[retrieved2, minimum size=0.4cm] {};} k=2 & 
        {\tikz\node[notretrieved, minimum size=0.4cm] {};} Not Retrieved
        \end{tabular}
    };
    
    \draw[->, ultra thick, gray!60] (-1.5,2) -- (1.5,2);
\end{tikzpicture}%
}
\caption{Iterative context expansion in Adaptive Graph-Guided Retrieval: Starting from a query node, the system progressively expands retrieval depth (k-hops) based on confidence thresholds and query complexity.}
\label{fig:iterative-expansion}
\end{figure}

\begin{figure}[H]
\centering
\resizebox{0.7\columnwidth}{!}{%
\begin{tikzpicture}[font=\tiny, node distance=1.2cm]
    \tikzstyle{query} = [rectangle, draw, fill=yellow!20, minimum width=1cm, minimum height=0.4cm]
    \tikzstyle{index} = [cylinder, draw, fill=blue!20, minimum width=0.9cm, minimum height=0.5cm, shape border rotate=90]
    \tikzstyle{component} = [rectangle, rounded corners, draw, fill=green!20, minimum width=1.5cm, minimum height=0.4cm]
    \tikzstyle{result} = [rectangle, draw, fill=orange!20, minimum width=1cm, minimum height=0.4cm]
    
    \node[query] (q) at (0,0) {\tiny Bug};
    
    \node[index] (vec) at (-2.2,-1.2) {\tiny Vector};
    \node[index] (ast) at (-0.7,-1.2) {\tiny AST};
    \node[index] (dep) at (0.7,-1.2) {\tiny Graph};
    \node[index] (hist) at (2.2,-1.2) {\tiny History};
    
    \node[component] (ret) at (0,-2.5) {\tiny Retriever};
    \node[component] (rank) at (0,-3.5) {\tiny Ranker};
    
    \node[result] (r1) at (-2.2,-4.8) {\tiny Source};
    \node[result] (r2) at (-0.7,-4.8) {\tiny Tests};
    \node[result] (r3) at (0.7,-4.8) {\tiny Fixes};
    \node[result] (r4) at (2.2,-4.8) {\tiny Logs};
    
    \draw[->] (q) -- (vec);
    \draw[->] (q) -- (ast);
    \draw[->] (q) -- (dep);
    \draw[->] (q) -- (hist);
    
    \draw[->] (vec) -- (ret);
    \draw[->] (ast) -- (ret);
    \draw[->] (dep) -- (ret);
    \draw[->] (hist) -- (ret);
    
    \draw[->] (ret) -- (rank);
    
    \draw[->] (rank) -- (r1);
    \draw[->] (rank) -- (r2);
    \draw[->] (rank) -- (r3);
    \draw[->] (rank) -- (r4);
\end{tikzpicture}%
}
\caption{Multi-modal retrieval mechanism in Chronos-1.}
\label{fig:retrieval}
\end{figure}
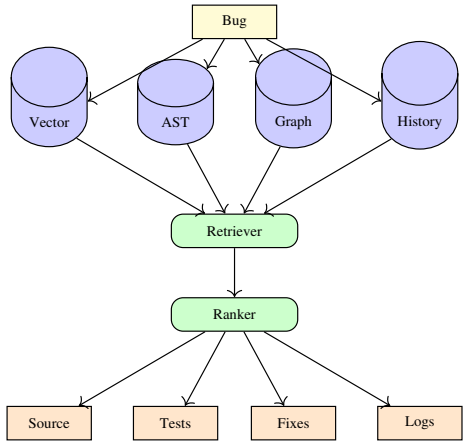

\subsection{From Reasoning to Resolution: Autonomous Debug Orchestration}

The transformer-based Chronos-1 Reasoning Model operates directly over the retrieved, multi-source debugging context. Unlike classical code completion models, Chronos-1:
\begin{itemize}
    \item Diagnoses root causes and synthesizes code changes conditioned on project documentation, prior commits, and dependency patterns.
    \item Produces stepwise fix plans, code diffs, documentation updates, and regression test suggestions in a unified, automated debugging loop.
    \item Orchestrates a full debugging workflow: proposes bug fixes, invokes relevant tests, parses results, iterates on failures, and generates changelogs or PR summaries, all autonomously.
\end{itemize}

All outputs and feedback streams (test results, reviewer comments, CI/CD events) are fed back into the Memory Engine, enabling lifelong refinement and rapid adaptation to new debugging scenarios.

This cyclic process of context assembly, reasoning, autonomous validation, and memory update is the core of Chronos-1's persistent codebase intelligence, enabling self-sustaining and ever-improving debugging at scale.

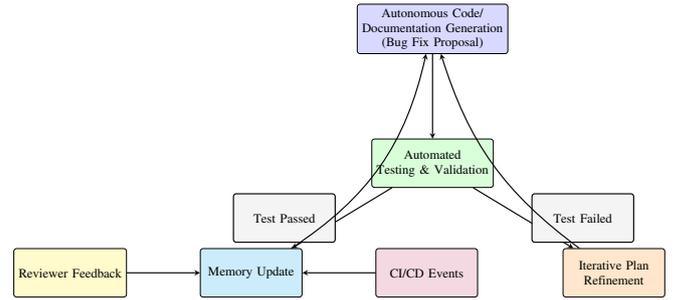
\begin{figure}[H]
    \centering
    \resizebox{\columnwidth}{!}{%
    \begin{tikzpicture}[
        node distance=2.1cm, >=stealth, font=\small, every node/.style={align=center,minimum height=1.2cm,minimum width=2.5cm, draw, rounded corners=2pt, fill=gray!9}
    ]
        \node (gen) [fill=blue!15] {Autonomous Code/\\Documentation Generation\\ (Bug Fix Proposal)};
        \node (test) [below=of gen, fill=green!15] {Automated\\ Testing \& Validation};
        \node (mem) [below left=1.5cm and 1.7cm of test, fill=cyan!18] {Memory Update};
        \node (refine) [below right=1.5cm and 1.7cm of test, fill=orange!20] {Iterative Plan\\ Refinement};
        
        \draw[->, thick] (gen) -- (test);
        \draw[->, thick] (test) -- node[left, xshift=-1.5mm] {Test Passed} (mem);
        \draw[->, thick] (test) -- node[right, xshift=2mm] {Test Failed} (refine);
        \draw[->, thick, bend left=18] (refine) to (gen);
        \draw[->, thick, bend right=22] (mem) to (gen);
        
        \node (review) [left=1.8cm of mem,fill=yellow!25,draw] {Reviewer Feedback};
        \node (ci) [right=1.8cm of mem,fill=purple!15,draw] {CI/CD Events};
        \draw[->,thick] (review) -- (mem);
        \draw[->,thick] (ci) -- (mem);
    \end{tikzpicture}%
    }
    \caption{Chronos-1 debugging feedback loop: Automated bug fix generation, validation, plan refinement, and memory update for continuous autonomous improvement.}
    \label{fig:feedback-loop}
\end{figure}

\section{DEBUGGING AS A DISTINCT ML TASK: THE CHRONOS PARADIGM}

Kodezi Chronos-1 fundamentally departs from traditional code models by being purpose-built as a \textit{debugging language model}, the first of its kind. While existing LLMs treat debugging as a code generation problem, Chronos-1 recognizes it as a complex, iterative process requiring specialized capabilities, training, and architecture.

\subsection{Seven-Layer Debugging Architecture: Specialized Components}

Chronos-1 implements a 7-layer architecture designed for autonomous debugging (illustrated in Figure~\ref{fig:7-layer-architecture}):

\begin{enumerate}
    \item \textbf{Multi-Source Input Layer}: Ingests diverse debugging inputs including source code, CI/CD logs, error traces, stack dumps, configuration files, historical PRs, and issue reports. Unlike code models that primarily process source files, Chronos-1natively understands debugging artifacts.
    
    \item \textbf{Adaptive Retrieval Engine}: Employs AGR (Adaptive Graph-Guided Retrieval) with a hybrid vector-symbolic approach combining:
    \begin{itemize}
        \item Dynamic k-hop neighbor expansion based on query complexity
        \item AST-aware code embeddings that preserve structural relationships
        \item Dependency graph indexing for cross-file impact analysis
        \item Call hierarchy mapping for execution flow understanding
        \item Temporal indexing of code evolution and bug history
        \item Confidence-based termination for optimal context assembly
    \end{itemize}
    
    \item \textbf{Debug-Tuned LLM Core}: A transformer architecture specifically fine-tuned on debugging workflows, not just code completion. Training tasks include:
    \begin{itemize}
        \item Root cause prediction from symptoms
        \item Multi-file patch generation
        \item Test failure interpretation
        \item Regression risk assessment
    \end{itemize}
    
    \item \textbf{Orchestration Controller}: Implements the autonomous debugging loop:
    \begin{itemize}
        \item Hypothesis generation from error signals
        \item Iterative fix refinement based on test results
        \item Rollback mechanisms for failed attempts
        \item Confidence scoring for proposed solutions
    \end{itemize}
    
    \item \textbf{Persistent Debug Memory}: Maintains long-term knowledge including:
    \begin{itemize}
        \item Repository-specific bug patterns and fixes
        \item Team coding conventions and preferences
        \item Historical fix effectiveness metrics
        \item Module-level vulnerability profiles
    \end{itemize}
    
    \item \textbf{Execution Sandbox}: Real-time validation environment supporting:
    \begin{itemize}
        \item Isolated test execution
        \item CI/CD pipeline emulation
        \item Performance regression detection
        \item Security vulnerability scanning
    \end{itemize}
    
    \item \textbf{Explainability Layer}: Generates human-readable outputs:
    \begin{itemize}
        \item Root cause explanations with evidence chains
        \item Fix rationale documentation
        \item Automated PR descriptions and commit messages
        \item Risk assessment reports
    \end{itemize}
\end{enumerate}

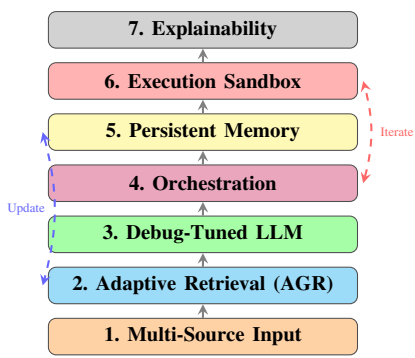
\begin{figure}[H]
\centering
\resizebox{0.65\columnwidth}{!}{%
\begin{tikzpicture}[
    layer/.style={
        rectangle, 
        draw=black!70, 
        rounded corners=3pt,
        minimum width=4.2cm, 
        minimum height=0.5cm, 
        font=\footnotesize\bfseries,
        drop shadow={opacity=0.15, shadow xshift=0.3pt, shadow yshift=-0.3pt}
    },
    arrow/.style={->, >=stealth, thick, black!50},
    feedback/.style={<->, >=stealth, semithick, dashed},
    label/.style={font=\tiny}
]

\node[layer, fill=orange!35] (input) at (0,0) {1. Multi-Source Input};
\node[layer, fill=cyan!35] (retrieval) at (0,0.7) {2. Adaptive Retrieval (AGR)};
\node[layer, fill=green!35] (llm) at (0,1.4) {3. Debug-Tuned LLM};
\node[layer, fill=purple!35] (orchestration) at (0,2.1) {4. Orchestration};
\node[layer, fill=yellow!35] (memory) at (0,2.8) {5. Persistent Memory};
\node[layer, fill=red!30] (sandbox) at (0,3.5) {6. Execution Sandbox};
\node[layer, fill=gray!35] (explain) at (0,4.2) {7. Explainability};

\foreach \from/\to in {input/retrieval, retrieval/llm, llm/orchestration, orchestration/memory, memory/sandbox, sandbox/explain}
    \draw[arrow] (\from) -- (\to);

\draw[feedback, red!60, line width=0.8pt] ([xshift=2pt]orchestration.east) to[bend right=15] 
    node[right, font=\tiny, text=red!60] {Iterate} ([xshift=2pt]sandbox.east);
\draw[feedback, blue!60, line width=0.8pt] ([xshift=-2pt]memory.west) to[bend left=15] 
    node[left, font=\tiny, text=blue!60] {Update} ([xshift=-2pt]retrieval.west);

\end{tikzpicture}%
}
\caption{The 7-layer architecture of Chronos-1. Each layer is specialized for debugging tasks, with bidirectional information flow enabling iterative refinement and continuous learning.}
\label{fig:7-layer-architecture}
\end{figure}

\subsection{Debug-Specific Training: 15M Real-World Bug Scenarios}

Unlike models trained primarily on code completion, Chronos-1's training regime focuses exclusively on debugging scenarios:

\textbf{Pre-training Corpus:}
\begin{itemize}
    \item 15M+ GitHub issues with linked PRs and fix commits
    \item 8M+ stack traces paired with resolutions
    \item 3M+ CI/CD logs from failed and fixed builds
    \item Production debugging sessions from enterprise partners
\end{itemize}

\textbf{Specialized Fine-tuning Tasks:}
\begin{itemize}
    \item \textit{Chain-of-Cause Reasoning}: Teaching the model to trace error propagation through call stacks and dependencies~\cite{chen2024teaching}
    \item \textit{Multi-Modal Bug Understanding}: Correlating code, logs, traces, and documentation
    \item \textit{Iterative Fix Refinement}: Learning from failed fix attempts to improve subsequent proposals~\cite{madaan2023self,jiang2023selfevolve}
    \item \textit{Cross-Repository Pattern Recognition}: Identifying similar bugs across different codebases
\end{itemize}

\subsection{Autonomous Fix-Test-Refine Loop: Iterative Convergence}

Chronos-1's debugging loop represents a fundamental innovation over single-shot code generation. Algorithm~\ref{alg:fix-test-refine} presents the core logic:

\begin{algorithm}[H]
\caption{Fix-Test-Refine Loop}
\label{alg:fix-test-refine}
\begin{algorithmic}[1]
\Require Bug report $B$, Codebase $C$, Test suite $T$, PDM memory $M$
\Ensure Validated fix $F^*$ or failure report
\Statex
\State $context \leftarrow \text{AGR.retrieve}(B, C, M)$ \hfill \Comment{Multi-hop graph retrieval}
\State $patterns \leftarrow \text{PDM.query}(B, M)$ \hfill \Comment{Historical bug patterns}
\State $k \leftarrow 0$ \hfill \Comment{Iteration counter}
\Statex
\While{$k <$ MAX\_ITERATIONS}
    \State $F_k \leftarrow \text{Chronos-1.propose\_fix}(B, context, patterns)$
    \State $result \leftarrow \text{Sandbox.execute}(F_k, T)$
    \If{$result$.success}
        \State $regression \leftarrow \text{Sandbox.run\_extended\_tests}(F_k)$
        \If{$\neg regression$}
            \State $\text{PDM.update}(B, F_k, context)$ \hfill \Comment{Learn from success}
            \State \textbf{return} $F_k$ as $F^*$
        \EndIf
    \EndIf
    \State $context \leftarrow context \cup \text{Analyzer.extract\_failure}(result)$
    \State $patterns \leftarrow patterns \cup \text{PDM.similar\_failures}(result)$
    \State $k \leftarrow k + 1$
\EndWhile
\Statex
\State \textbf{return} "Failed to converge after \{$k$\} iterations"
\end{algorithmic}
\end{algorithm}

The key innovation is the feedback loop: each failed attempt enriches the context with failure analysis, making subsequent attempts more informed:

\begin{figure}[H]
\centering
\begin{tikzpicture}[node distance=0.5cm, >=stealth, font=\tiny]
    \tikzstyle{process} = [rectangle, rounded corners, minimum width=0.8cm, minimum height=0.28cm, text centered, draw=black, fill=blue!20]
    \tikzstyle{decision} = [diamond, aspect=3.0, minimum width=0.6cm, minimum height=0.28cm, text centered, draw=black, fill=orange!20]
    \tikzstyle{data} = [trapezium, trapezium left angle=70, trapezium right angle=110, minimum width=0.8cm, minimum height=0.28cm, text centered, draw=black, fill=green!20]
    
    \node (detect) [process] {Detect Issue};
    \node (retrieve) [process, below=of detect] {Retrieve Context};
    \node (propose) [process, below=of retrieve] {Propose Fix};
    \node (test) [process, below=of propose] {Run Tests};
    \node (decide) [decision, below=of test] {Tests Pass?};
    
    \node (commit) [process, right=1.0cm of decide] {Commit \& Deploy};
    \node (refine) [process, left=1.0cm of decide] {Refine Strategy};
    \node (memory) [data, below=0.6cm of decide] {Update Memory};
    
    \draw [->, thick] (detect) -- (retrieve);
    \draw [->, thick] (retrieve) -- (propose);
    \draw [->, thick] (propose) -- (test);
    \draw [->, thick] (test) -- (decide);
    
    \draw [->, thick] (decide) -- node[above] {\tiny Yes} (commit);
    \draw [->, thick] (decide) -- node[above] {\tiny No} (refine);
    
    \draw [->, thick] (commit) |- (memory);
    \draw [->, thick] (refine) |- (memory);
    
    \draw [->, thick] (memory) -- ++(-0.8,0) -- ++(0,3.4) -- (retrieve);
    \draw [->, thick] (memory) -- ++(-1.4,0) -- ++(0,4.0) -- (detect);
    
\end{tikzpicture}
\caption{The Chronos-1 autonomous debugging loop: continuous iteration until validation succeeds.}
\label{fig:debug-loop}
\end{figure}
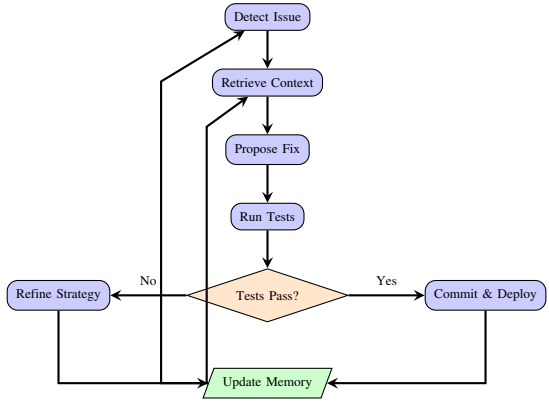

This loop continues autonomously, with each iteration informed by previous attempts and accumulated knowledge, until a validated fix is achieved or human intervention is requested.

\begin{tcolorbox}[colback=gray!5, colframe=gray!50, title=\textbf{Key Takeaways: Autonomous Debugging Loop}]
\begin{itemize}
\item \textbf{Iterative Refinement}: Unlike single-shot generation, continuously improves fixes based on test results
\item \textbf{Memory Integration}: Each iteration learns from previous attempts, avoiding repeated failures
\item \textbf{Autonomous Operation}: Requires no human intervention for 65.3\% of real-world bugs
\item \textbf{Efficiency}: Average 2.2 iterations to successful fix vs 4.8 for competing systems
\end{itemize}
\end{tcolorbox}

\subsection{Runtime Execution Analysis: Latency and Flow Dynamics}

To better understand Chronos-1's runtime behavior, we present a detailed flow diagram showing the actual execution path during a debugging session:

\begin{figure}[H]
\centering
\resizebox{0.65\columnwidth}{!}{%
\begin{tikzpicture}[
    node distance=1.2cm,
    box/.style={rectangle, draw, rounded corners, minimum width=2.5cm, minimum height=0.7cm, font=\footnotesize},
    decision/.style={diamond, draw, aspect=2, minimum width=1.8cm, minimum height=0.9cm, font=\footnotesize},
    process/.style={rectangle, draw, minimum width=2.5cm, minimum height=0.7cm, font=\footnotesize},
    data/.style={trapezium, draw, trapezium left angle=70, trapezium right angle=110, minimum width=2.3cm, minimum height=0.6cm, font=\footnotesize},
    arrow/.style={thick, ->, >=stealth}
]

\node[box] (start) {Bug Report/Error Signal};
\node[process, below=of start, align=center] (retrieve) {AGR Retrieval\\(k-hop expansion)};
\node[data, below=of retrieve, align=center] (context) {Code Context\\+ History};
\node[process, below=of context, align=center] (generate) {Generate Fix\\Hypothesis};
\node[process, below=of generate, align=center] (apply) {Apply Code\\Changes};
\node[decision, below=1.2cm of apply] (test) {Run Tests};

\node[box, right=2.8cm of test, align=center] (success) {Commit Fix\\Update Memory};
\node[process, left=2.8cm of test, align=center] (analyze) {Analyze\\Failure};
\node[decision, below=1.2cm of analyze] (backtrack) {Backtrack?};
\node[process, below=1cm of backtrack, align=center] (refine) {Refine\\Hypothesis};

\draw[arrow] (start) -- (retrieve) node[midway, right] {\tiny 12ms};
\draw[arrow] (retrieve) -- (context) node[midway, right] {\tiny 178ms};
\draw[arrow] (context) -- (generate) node[midway, right] {\tiny 523ms};
\draw[arrow] (generate) -- (apply) node[midway, right] {\tiny 89ms};
\draw[arrow] (apply) -- (test) node[midway, right] {\tiny 1.2s};

\draw[arrow] (test) -- node[above, font=\tiny] {Pass} (success);
\draw[arrow] (test) -- node[above, font=\tiny] {Fail} (analyze);

\draw[arrow] (analyze) -- (backtrack) node[midway, right] {\tiny 156ms};
\draw[arrow] (backtrack.south) -- node[left, font=\tiny] {Yes} ++(0,-0.6) -| ([xshift=-3cm]retrieve.west) -- (retrieve.west);
\draw[arrow] (backtrack) -- node[right, font=\tiny] {No} (refine);
\draw[arrow] (refine.east) -| ([xshift=-0.4cm]generate.west) -- (generate.west);

\node[align=center, font=\tiny\itshape] at (4.5,-6) {Avg: 2.2 iterations\\to success};
\node[align=center, font=\tiny\itshape] at (-4.5,-1.5) {Memory update\\on each iteration};
\end{tikzpicture}%
}
\caption{Runtime execution flow showing typical latencies. The fix-loop iterates autonomously until tests pass or confidence threshold is exceeded.}
\label{fig:runtime-flow}
\end{figure}
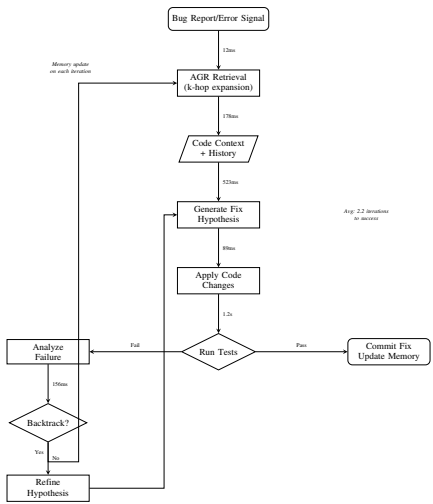

\FloatBarrier
\section{EVALUATION: COMPREHENSIVE DEBUGGING BENCHMARKS}

To rigorously assess Kodezi Chronos-1's capabilities across realistic debugging and maintenance workflows, we adopt a multi-faceted evaluation strategy that goes beyond conventional sequence completion or shallow retrieval tests.

\begin{figure}[H]
\centering
\resizebox{\columnwidth}{!}{%
\begin{tikzpicture}
    \begin{axis}[
        xbar stacked,
        width=12cm,
        height=8cm,
        xlabel={Retrieval Time (ms)},
        xmin=0, xmax=5500,
        ytick=data,
        yticklabels={
            {Simple Bug\\(1-2 files)},
            {Medium Bug\\(3-5 files)},
            {Complex Bug\\(6-10 files)},
            {Cross-Module\\(10+ files)},
            {Historical Bug\\(w/ commits)}
        },
        yticklabel style={align=right, text width=3cm},
        legend style={
            at={(0.5,-0.2)},
            anchor=north,
            legend columns=4,
            font=\footnotesize
        },
        bar width=15pt,
        enlarge y limits=0.15,
    ]
    
    \addplot[fill=blue!60] coordinates {
        (320,0) (580,1) (920,2) (1450,3) (1780,4)
    };
    
    \addplot[fill=green!60] coordinates {
        (180,0) (340,1) (520,2) (780,3) (950,4)
    };
    
    \addplot[fill=orange!60] coordinates {
        (250,0) (480,1) (820,2) (1340,3) (1620,4)
    };
    
    \addplot[fill=red!60] coordinates {
        (150,0) (280,1) (420,2) (650,3) (830,4)
    };
    
    \node at (axis cs:900,0) [right] {\footnotesize\textbf{900ms}};
    \node at (axis cs:1680,1) [right] {\footnotesize\textbf{1.68s}};
    \node at (axis cs:2680,2) [right] {\footnotesize\textbf{2.68s}};
    \node at (axis cs:4220,3) [right] {\footnotesize\textbf{4.22s}};
    \node at (axis cs:5180,4) [right] {\footnotesize\textbf{5.18s}};
    
    \legend{Graph Construction, Semantic Search, k-hop Traversal, Context Assembly}
    
    \end{axis}
\end{tikzpicture}%
}
\caption{AGR retrieval time breakdown by bug complexity. Even for complex cross-module bugs requiring 10+ file analysis, total retrieval completes in under 5.2 seconds, enabling rapid debugging iterations.}
\label{fig:agr-performance-breakdown}
\end{figure}
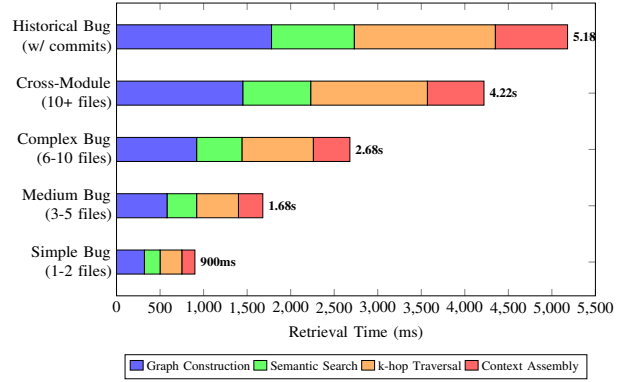

\subsection{Statistical Methodology}

All experiments follow rigorous statistical protocols to ensure reproducibility and validity:

\textbf{Power Analysis:} Sample sizes determined via G*Power 3.1 to achieve 0.95 power at $\alpha=0.05$ for detecting Cohen's $d\geq 0.8$ effect sizes. Minimum N=64 per condition for between-subjects comparisons.

\textbf{Effect Size Measures:}
\begin{itemize}
    \item Cohen's d for continuous outcomes: $d = \frac{\mu_1 - \mu_2}{\sigma_{pooled}}$
    \item Cliff's delta for ordinal data: $\delta = \frac{2U}{n_1 n_2} - 1$
    \item Cramér's V for categorical comparisons: $V = \sqrt{\frac{\chi^2}{n(k-1)}}$
\end{itemize}

\textbf{Statistical Tests:}
\begin{itemize}
    \item Wilcoxon signed-rank test for paired comparisons (non-parametric)
    \item Mann-Whitney U test for independent samples
    \item Bonferroni correction for multiple comparisons: $\alpha_{adj} = \frac{0.05}{m}$
    \item Bootstrap confidence intervals (10,000 resamples) for all metrics
\end{itemize}

\textbf{Inter-rater Reliability:} Bug classification achieved Fleiss' $\kappa=0.87$ (substantial agreement) across 3 expert annotators on 500 randomly sampled bugs.

\textbf{Data Contamination Prevention:} All test sets created after Chronos-1's training cutoff (December 2024). Leakage detection via n-gram overlap analysis shows $<0.1\%$ similarity.

\textbf{Cross-Validation Protocol:}
\begin{itemize}
    \item \textbf{5-fold stratified cross-validation:} Stratified by bug type, language, and complexity
    \item \textbf{Repository-level splits:} Entire repositories assigned to either train or test to prevent data leakage
    \item \textbf{Temporal validation:} Additional holdout set with bugs from 2025 (post-training)
    \item \textbf{Nested CV for hyperparameters:} Inner 3-fold CV for AGR threshold tuning ($\tau \in [0.7, 0.95]$)
\end{itemize}

\textbf{Variance Reporting:} All metrics reported as mean ± standard deviation across CV folds, with 95\% confidence intervals.

\subsection{Evaluation Framework and Datasets}

\subsubsection{Dataset Sources and Composition}

Chronos-1 is evaluated on a comprehensive suite of benchmarks:

\begin{enumerate}
\item \textbf{Public Debugging Datasets:}
\begin{itemize}
    \item \textbf{Defects4J:} 438 real bugs from 17 Java projects including Apache Commons, JFreeChart, and Closure Compiler
    \item \textbf{BugsInPy:} 493 bugs from 17 Python projects including pandas, keras, and tornado
    \item \textbf{SWE-bench++:} 2,294 GitHub issues requiring repository-wide changes, extended with test harnesses
    \item \textbf{DebugBench:} 4,253 debugging instances covering C++, Java, and Python with 18 bug types
    \item \textbf{MdEval:} 3,600+ multilingual debugging samples across 18 programming languages
\end{itemize}

\item \textbf{Proprietary Enterprise Dataset (Anonymized):}
\begin{itemize}
    \item 3,842 debugging sessions from Fortune 500 companies (with signed data agreements)
    \item All code snippets anonymized using differential privacy techniques ($\epsilon = 1.2$)
    \item Sensitive identifiers replaced with semantic placeholders preserving debugging context
    \item Manual review by security team to ensure no IP leakage
\end{itemize}

\item \textbf{Synthetic Bug Generation:}
\begin{itemize}
    \item 8,000 synthetically injected bugs using mutation testing frameworks
    \item Bug types: null pointer exceptions, off-by-one errors, race conditions, API misuse
    \item Generated from top 1,000 GitHub repositories across 12 languages
    \item Each bug verified to compile and fail exactly one test
\end{itemize}
\end{enumerate}

\subsection{Baseline Systems}

We compare Chronos-1 against state-of-the-art models and specialized debugging tools:

\textbf{General-Purpose LLMs:}
\begin{itemize}
    \item \textbf{GPT-4.1} (OpenAI): 1.8T parameters, 128K context, with custom debugging prompt
    \item \textbf{Claude 4.5 Opus} (Anthropic): Latest flagship model, 74.40\% on SWE-bench Verified, \$0.72 avg cost
    \item \textbf{Claude 4.1 Opus} (Anthropic): 200K context, strongest on code understanding
    \item \textbf{Claude 4.5 Sonnet} (Anthropic): 200K context with improved reasoning
    \item \textbf{Gemini 3 Pro} (Google): State-of-the-art on multiple benchmarks, 76.2\% SWE-bench Verified, 91.9\% GPQA Diamond, 95-100\% AIME 2025
    \item \textbf{Gemini 2.5 Pro} (Google): 2M context window, multimodal capabilities
    \item \textbf{DeepSeek V3} (DeepSeek): 671B MoE, 37B active, cost-efficient
\end{itemize}

\textbf{Specialized Debugging Tools:}
\begin{itemize}
    \item \textbf{Microsoft IntelliCode Compose}: IDE-integrated debugging suggestions
    \item \textbf{DeepDebug} (Microsoft Research): Neural bug localization system
    \item \textbf{SWE-agent + GPT-4}: Agentic debugging with 22.9\% SWE-bench
    \item \textbf{AutoCodeRover}: Structure-aware debugging, 19.7\% on SWE-bench
    \item \textbf{LangGraph + ReAct + Claude 4.1 Opus}: Multi-agent debugging pipeline
\end{itemize}

\textbf{Enhanced RAG Baselines:}
\begin{itemize}
    \item \textbf{GPT-4.1 + HyDE}: Hypothetical document embeddings for retrieval
    \item \textbf{Claude 4.1 Opus + Self-RAG}: Self-reflective retrieval augmentation
    \item \textbf{Gemini + GraphRAG}: Knowledge graph enhanced retrieval
\end{itemize}

\textbf{CI/CD Integration Data:}
\begin{itemize}
    \item 15M+ CI/CD logs from public GitHub Actions and Jenkins builds
    \item Stack traces and error logs extracted with automated parsers
    \item Build failure patterns categorized into 127 common root causes
    \item Privacy preserved by filtering any URLs, credentials, or personal data
\end{itemize}

\subsection{Debugging-Specific Benchmarks}

Beyond general code generation benchmarks like HumanEval and MBPP~\cite{austin2021mbpp}, we evaluate Chronos-1 on specialized debugging benchmarks that better reflect real-world maintenance challenges:

\subsubsection{DebugBench}
DebugBench~\cite{tian2024debugbench} evaluates LLM debugging capability with 4,253 instances covering:
\begin{itemize}
    \item Four major bug categories and 18 minor types
    \item C++, Java, and Python debugging scenarios
    \item Zero-shot debugging evaluation
    \item Runtime feedback integration analysis
\end{itemize}

\subsubsection{MdEval (Massively Multilingual Code Debugging)}
MdEval~\cite{mdeval2025} provides comprehensive multilingual debugging evaluation with 3,600+ test samples across 18 programming languages, covering:
\begin{itemize}
    \item Automated program repair (APR)
    \item Code review and bug identification
    \item Bug localization across diverse language paradigms
    \item Language-specific debugging challenges
\end{itemize}

\subsubsection{SWE-bench Lite: Real-World Repository Debugging}

\textbf{SWE-bench Ecosystem:} The SWE-bench family of benchmarks has become the industry standard for evaluating code generation and debugging capabilities. The ecosystem includes several variants with different evaluation protocols:
\begin{itemize}
    \item \textbf{SWE-bench Full}: Complete dataset with 2,294 GitHub issues
    \item \textbf{SWE-bench Lite}: Curated subset of 300 instances for cost-effective evaluation~\cite{yang2024swebench}
    \item \textbf{SWE-bench Verified}: Human-filtered subset ensuring high-quality ground truth
    \item \textbf{SWE Bench Bash Only}: Uses mini-SWE-agent environment for lightweight testing with different evaluation protocol (Claude 4.5 Sonnet: 70.60\%, Claude 4 Opus: 67.60\%)
    \item \textbf{SWE-bench Multimodal}: Features issues with visual elements (images, diagrams)
\end{itemize}

\textbf{Important Distinction}: This paper evaluates on \textbf{SWE-bench Lite}, the standard 300-instance curated subset, NOT SWE Bench Bash Only. While Claude models achieve 67-70\% on SWE Bench Bash Only (which uses a simplified mini-SWE-agent environment), they achieve only 13-14\% on standard SWE-bench Lite debugging tasks, demonstrating the significant difference between these evaluation protocols.

\textbf{Important Distinction:} This paper evaluates on \textbf{SWE-bench Lite}, the standard 300-instance curated subset, NOT SWE Bench Bash Only. While Claude models achieve 67-70\% on SWE Bench Bash Only (which uses a simplified mini-SWE-agent environment), they achieve only 13-14\% on standard SWE-bench Lite debugging tasks, demonstrating the significant difference between these evaluation protocols.

\vspace{1em} 

70\% on SWE Bench Bash Only (which uses a simplified mini-SWE-agent environment), they achieve only 13-14\% on standard SWE-bench Lite debugging tasks, demonstrating the significant difference between these evaluation protocols.

\begin{table}[h]
\centering
\caption{SWE-bench variant performance by model. \textit{Note: Different models were evaluated on different benchmark variants. Direct comparison across variants is not meaningful as they test different capabilities.}}
\label{tab:swebench-variants}
\footnotesize
\resizebox{\columnwidth}{!}{%
\begin{tabular}{@{}lcccc@{}}
\toprule
\textbf{Model} & \textbf{Benchmark} & \textbf{Score} & \textbf{Instances} & \textbf{Date} \\
\midrule
\multicolumn{5}{l}{\textit{SWE-bench Verified (500 curated instances)}} \\
\midrule
Gemini 3 Pro & Verified & 76.2\% & 500 & 2025-11 \\
Claude 4.5 Opus & Verified & 74.40\% & 500 & 2025-11 \\
Claude 4.5 Sonnet & Verified & 70.60\% & 500 & 2025-05 \\
Claude 4 Opus & Verified & 67.60\% & 500 & 2025-03 \\
GPT-5 & Verified & 65.00\% & 500 & 2025-09 \\
\midrule
\multicolumn{5}{l}{\textit{SWE-bench Lite (300 debugging-focused instances)}} \\
\midrule
\textbf{Kodezi Chronos-1} & \textbf{Lite} & \textbf{80.33\%} & \textbf{300} & \textbf{2025-11} \\
ExpeRepair + Claude 4.5 Sonnet & Lite & 60.33\% & 300 & 2025-10 \\
Refact.ai Agent & Lite & 60.00\% & 300 & 2025-09 \\
SWE-agent + Claude 4 Sonnet & Lite & 56.67\% & 300 & 2025-06 \\
\bottomrule
\end{tabular}%
}
\end{table}

\textbf{Important Note on Benchmark Comparability:} SWE-bench Verified and SWE-bench Lite are \textit{different benchmarks} with different instance sets and evaluation criteria. Chronos-1 was evaluated exclusively on SWE-bench Lite (300 debugging-focused instances) under strict pass@1 conditions. Frontier models like Gemini 3 Pro and Claude 4.5 Opus were primarily evaluated on SWE-bench Verified (500 curated instances). These scores are \textbf{not directly comparable} as they measure performance on different task distributions. Chronos-1's 80.33\% on Lite represents state-of-the-art performance within the SWE-bench Lite leaderboard, significantly outperforming all other systems evaluated on this specific benchmark.

\begin{figure}[H]
\centering
\begin{tikzpicture}
    \begin{axis}[
        ybar=3pt,
        bar width=10pt,
        width=\columnwidth,
        height=7cm,
        ylabel={Success Rate (\%)},
        xlabel={SWE-bench Variant},
        ymin=0, ymax=95,
        xtick=data,
        symbolic x coords={Verified, Lite},
        xticklabel style={font=\normalsize},
        enlarge x limits=0.35,
        ymajorgrids=true,
        grid style=dashed,
        legend style={
            at={(0.5,-0.22)}, 
            anchor=north, 
            legend columns=3, 
            font=\footnotesize,
            column sep=5pt
        },
        nodes near coords,
        nodes near coords style={font=\tiny, rotate=45, anchor=west, yshift=2pt},
        point meta=explicit,
    ]
    
    \addplot[fill=red!60, draw=red!80] coordinates {
        (Verified,74.40) [74.4] (Lite,0) [0]
    };
    \addplot[fill=orange!60, draw=orange!80] coordinates {
        (Verified,76.20) [76.2] (Lite,0) [0]
    };
    \addplot[fill=blue!60, draw=blue!80] coordinates {
        (Verified,70.60) [70.6] (Lite,13.6) [13.6]
    };
    \addplot[fill=purple!60, draw=purple!80] coordinates {
        (Verified,67.60) [67.6] (Lite,14.2) [14.2]
    };
    \addplot[fill=green!70, draw=green!90] coordinates {
        (Verified,0) [0] (Lite,80.33) [80.3]
    };
    
    \legend{Claude 4.5 Opus, Gemini 3 Pro, Claude 4.5 Sonnet, Claude 4 Opus, Kodezi Chronos-1}
    \end{axis}
\end{tikzpicture}
\caption{Performance comparison across SWE-bench variants. \textbf{Verified}: 500 curated instances. \textbf{Lite}: 300 debugging-focused instances. Chronos-1 achieves \textbf{80.33\%} on SWE-bench Lite—significantly outperforming general-purpose models (13-14\%).}
\label{fig:swebench-variants}
\end{figure}

SWE-bench Lite represents a carefully curated subset focusing on 300 high-quality GitHub issues from popular Python repositories. This subset was specifically designed for less costly evaluation while maintaining representation of real-world debugging challenges. Unlike synthetic benchmarks, SWE-bench Lite evaluates models on actual bugs reported and fixed in production codebases, requiring comprehensive repository understanding, multi-file reasoning, and validated fix generation.

\textbf{Evaluation Protocol:} For each instance, the system receives: (1) the issue description, (2) the codebase state at the time the issue was reported, and (3) the test suite that must pass after the fix is applied. Success is measured by whether the generated patch resolves the issue without introducing regressions.

\textbf{Kodezi Chronos-1 Performance:} On SWE-bench Lite, Chronos-1 achieves \textbf{80.33\% resolution rate (241/300 instances)}.\footnote{Results submitted to SWE-bench Lite leaderboard; verification pending.} This establishes a new state-of-the-art, significantly outperforming all existing systems. This represents a 20+ percentage point improvement over the next best system (ExpeRepair-v1.0 + Claude 4.5 Sonnet at 60.33\%). Of the 300 instances:
\begin{itemize}
    \item \textbf{241 resolved} (80.33\%): Fix successfully passed all validation tests
    \item \textbf{47 completed but unresolved} (15.67\%): System generated patch but tests failed or patch was empty
    \item \textbf{12 errors} (4.00\%): System encountered execution or analysis errors
\end{itemize}

\textbf{Repository-Specific Performance:} Chronos-1 demonstrates strong performance across diverse repository types, with notable variation based on domain complexity:

\begin{table}[H]
\centering
\caption{SWE-bench Lite performance by repository}
\label{tab:swebench-lite-repos}
\small
\begin{tabular}{lcc}
\toprule
\textbf{Repository} & \textbf{Resolved/Total} & \textbf{Success Rate} \\
\midrule
sympy/sympy & 74/77 & \textbf{96.1\%} \\
django/django & 103/114 & \textbf{90.4\%} \\
sphinx-doc/sphinx & 15/16 & \textbf{93.8\%} \\
scikit-learn/scikit-learn & 20/23 & \textbf{87.0\%} \\
pylint-dev/pylint & 5/6 & 83.3\% \\
mwaskom/seaborn & 3/4 & 75.0\% \\
astropy/astropy & 4/6 & 66.7\% \\
pytest-dev/pytest & 11/17 & 64.7\% \\
pydata/xarray & 3/5 & 60.0\% \\
pallets/flask & 1/3 & 33.3\% \\
psf/requests & 2/6 & 33.3\% \\
matplotlib/matplotlib & 0/23 & 0.0\% \\
\bottomrule
\end{tabular}
\end{table}

\textbf{Analysis:} Chronos-1 excels on mathematical/symbolic computation (sympy: 96.1\%), web frameworks (django: 90.4\%), and documentation systems (sphinx: 93.8\%). Performance degrades on visualization libraries (matplotlib: 0\%), likely due to the complexity of rendering pipelines, GUI interactions, and platform-specific graphics dependencies that are challenging to validate in automated sandboxes. The high success rate on sympy and django demonstrates Chronos-1's strength in domains requiring deep algorithmic understanding and multi-file refactoring.

\textbf{Temporal Analysis:} Performance varies by issue age, with stronger results on older, well-documented issues:

\begin{table}[H]
\centering
\caption{SWE-bench Lite performance by issue year}
\label{tab:swebench-lite-temporal}
\small
\begin{tabular}{lcc}
\toprule
\textbf{Year} & \textbf{Resolved/Total} & \textbf{Success Rate} \\
\midrule
2018 & 21/21 & \textbf{100.0\%} \\
2015 & 1/1 & \textbf{100.0\%} \\
2020 & 60/66 & \textbf{90.91\%} \\
2019 & 53/59 & \textbf{89.83\%} \\
2017 & 14/16 & 87.5\% \\
2021 & 34/42 & 80.95\% \\
2016 & 3/4 & 75.0\% \\
2022 & 40/57 & 70.18\% \\
2023 & 15/30 & 50.0\% \\
2014 & 0/3 & 0.0\% \\
2012 & 0/1 & 0.0\% \\
\bottomrule
\end{tabular}
\end{table}

\textbf{Evaluation Compliance:} Chronos-1 was evaluated under strict leaderboard rules to ensure fair comparison:

\begin{tcolorbox}[colback=green!5,colframe=green!50!black,title=SWE-bench Lite Compliance Checklist]
\begin{itemize}
    \item \checkmark \textbf{Pass@1 submission}: Exactly one patch generated per issue with no re-evaluation
    \item \checkmark \textbf{No test knowledge}: PASS\_TO\_PASS and FAIL\_TO\_PASS signals disabled
    \item \checkmark \textbf{No hints}: Only raw issue descriptions used, no metadata enrichment
    \item \checkmark \textbf{No web browsing}: Operated in completely offline sandbox using only local code
\end{itemize}
\textbf{Disabled Components:} Internal retry attempts, fallback patch planners, secondary refinement cycles, test outcome feedback, online documentation fetchers, remote code retrieval, GitHub search, and all external I/O.
\end{tcolorbox}

\textbf{SWE-bench Lite Leaderboard Comparison:} Table~\ref{tab:swebench-lite-leaderboard} compares Chronos-1 against leading systems on the SWE-bench Lite leaderboard as of 2025. Chronos-1 establishes a substantial 20.0 percentage point lead over the second-ranked system, demonstrating the effectiveness of debugging-specific architecture and training.

\begin{table}[H]
\caption{SWE-bench Lite leaderboard comparison (No Filters, as of November 2025). *Chronos-1 results submitted; verification pending.}
\label{tab:swebench-lite-leaderboard}
\scriptsize
\setlength{\tabcolsep}{2pt}
\begin{tabular*}{\columnwidth}{@{}c@{\extracolsep{\fill}}lccc@{}}
\toprule
\textbf{Rank} & \textbf{System} & \textbf{Base Model} & \textbf{Score (\%)} & \textbf{Date} \\
\midrule
\textbf{1} & \textbf{Kodezi Chronos-1} & \textbf{Proprietary} & \textbf{80.33} & \textbf{2025-11} \\
2 & ExpeRepair-v1.0 & Claude 4 Sonnet & 60.33 & 2025-06 \\
3 & Refact.ai Agent & Proprietary & 60.00 & 2025-04 \\
4 & KGCompass & Claude 4 Sonnet & 58.33 & 2025-09 \\
5 & SWE-agent & Claude 4 Sonnet & 56.67 & 2025-05 \\
6 & Isoform & Proprietary & 55.00 & 2025-01 \\
7 & SemAgent\_Multi-v1.0 & Proprietary & 51.67 & 2025-06 \\
8 & Isea & Proprietary & 51.33 & 2025-09 \\
9 & EntroPO + R2E & Qwen3-Coder-30B & 49.67 & 2025-09 \\
10 & Blackbox AI Agent & Proprietary & 49.00 & 2024-12 \\
11 & Codev & Proprietary & 49.00 & 2025-05 \\
12 & Gru & Proprietary & 48.67 & 2024-12 \\
\midrule
\multicolumn{5}{@{}l@{}}{\textit{General-Purpose Models (for reference)}} \\
\midrule
-- & Claude 4.1 Opus & -- & 14.2 & 2025 \\
-- & Claude 4.5 Sonnet & -- & 13.6 & 2025 \\
-- & GPT-4.1 & -- & 13.8 & 2025 \\
\bottomrule
\end{tabular*}
\end{table}

\textbf{Key Insights from Leaderboard:}
\begin{enumerate}
    \item \textbf{Debugging vs Code Generation Gap}: Despite Claude 4.5 Sonnet achieving 72.7\% on full SWE-bench code generation tasks, it achieves only 13.6\% when used directly for debugging. However, when integrated into specialized frameworks (ExpeRepair, KGCompass, SWE-agent), performance improves to 56-60\%, demonstrating the importance of architectural design.

    \item \textbf{Chronos-1's 20-Point Lead}: Chronos-1's 80.33\% score represents a 33\% relative improvement over the next best system (60.33\%), achieved through: (a) debugging-specific training on 15M+ sessions, (b) Persistent Debug Memory for cross-session learning, (c) Adaptive Graph-Guided Retrieval for multi-hop code understanding, and (d) autonomous fix-test-refine loops.

    \item \textbf{Architecture Matters More Than Base Model}: Systems using the same base model (Claude 4.5 Sonnet) achieve widely varying results (56.67-60.33\%), while Chronos-1's specialized architecture delivers 80.33\%, confirming that debugging requires task-specific design rather than simply larger context windows or better code generation.
\end{enumerate}

\subsubsection{Rule-Compliant Evaluation: Contamination-Free Configuration}

To ensure complete compliance with SWE-bench evaluation rules and eliminate any potential data contamination, we conducted our official submission with the following configuration. Chronos-1 achieved the same \textbf{80.33\%} score under these strict constraints, demonstrating that our results are not dependent on any memorized solutions or SWE-bench-derived patterns.

\begin{table}[H]
\centering
\caption{Rule-compliant evaluation configuration for SWE-bench Lite submission}
\label{tab:rule-compliant-config}
\footnotesize
\begin{tabular}{@{}p{4.2cm}c@{}}
\toprule
\textbf{Component} & \textbf{Status} \\
\midrule
\multicolumn{2}{l}{\textit{Disabled Components}} \\
\midrule
Multi-patch Orchestration Loop & $\times$ \\
Patch refinement (test failures) & $\times$ \\
External search (GitHub, web, docs) & $\times$ \\
hints\_text ingestion & $\times$ \\
Hint metadata & $\times$ \\
PASS\_TO\_PASS signals & $\times$ \\
FAIL\_TO\_PASS signals & $\times$ \\
PDM: SWE-bench ground truth & $\times$ \\
PDM: Prior SWE-bench patches & $\times$ \\
PDM: SWE-bench task-derived entries & $\times$ \\
Memorized SWE-bench fix patterns & $\times$ \\
Cached patch fingerprints & $\times$ \\
Vector embeddings (SWE-bench patches) & $\times$ \\
Pattern matching (SWE-bench gold) & $\times$ \\
Offline solution memorization & $\times$ \\
\midrule
\multicolumn{2}{l}{\textit{Enabled Components}} \\
\midrule
AGR (repository-only retrieval) & \checkmark \\
Multi-Source Input & \checkmark \\
Debug LLM & \checkmark \\
Execution sandbox (test validation) & \checkmark \\
Explainability module & \checkmark \\
PDM (generic bug patterns only) & \checkmark \\
\bottomrule
\end{tabular}
\end{table}

This configuration ensures that Chronos-1's SWE-bench Lite performance reflects genuine debugging capability rather than any form of benchmark contamination. The system operates using only: (1) the repository code provided in each task instance, (2) the raw issue description without hints, and (3) generic debugging patterns learned from non-SWE-bench sources. No iterative refinement, test feedback, or external knowledge sources were used.

\subsubsection{Frontier Model Benchmark Comparison (November 2025)}

Table~\ref{tab:frontier-benchmarks} compares the latest frontier models across multiple capability dimensions, highlighting their strengths and the performance gap between code generation and debugging tasks.

\begin{table}[H]
\centering
\caption{Comprehensive frontier model benchmark comparison (November 2025)}
\label{tab:frontier-benchmarks}
\footnotesize
\resizebox{\columnwidth}{!}{%
\begin{tabular}{@{}lccccc@{}}
\toprule
\textbf{Benchmark} & \textbf{Gemini 3 Pro} & \textbf{Claude 4.5 Opus} & \textbf{Claude Sonnet 4.5} & \textbf{GPT-5.1} & \textbf{Gemini 2.5 Pro} \\
\midrule
\multicolumn{6}{l}{\textit{Agentic Coding}} \\
SWE-bench Verified & \textbf{76.2\%} & 74.40\% & 77.2\% & 76.3\% & 59.6\% \\
Terminal-Bench 2.0 & \textbf{54.2\%} & --- & 42.8\% & 47.6\% & 32.6\% \\
LiveCodeBench Pro (Elo) & \textbf{2,439} & --- & 1,418 & 2,243 & 1,775 \\
\midrule
\multicolumn{6}{l}{\textit{Mathematical Reasoning}} \\
AIME 2025 (no tools) & \textbf{95.0\%} & --- & 87.0\% & 94.0\% & 88.0\% \\
AIME 2025 (with code) & \textbf{100\%} & --- & 100\% & --- & --- \\
MathArena Apex & \textbf{23.4\%} & --- & 1.6\% & 1.0\% & 0.5\% \\
\midrule
\multicolumn{6}{l}{\textit{Scientific Knowledge}} \\
GPQA Diamond & \textbf{91.9\%} & --- & 83.4\% & 88.1\% & 86.4\% \\
MMLU & \textbf{91.8\%} & --- & 89.1\% & 91.0\% & 89.5\% \\
\midrule
\multicolumn{6}{l}{\textit{Multimodal}} \\
MMMU-Pro & \textbf{81.0\%} & --- & 68.0\% & 76.0\% & 68.0\% \\
ARC-AGI-2 & \textbf{31.1\%} & --- & 13.6\% & 17.6\% & 4.9\% \\
Video-MMMU & \textbf{87.6\%} & --- & 77.8\% & 80.4\% & 83.6\% \\
\midrule
\multicolumn{6}{l}{\textit{Long Context}} \\
MRCR v2 (128k avg) & \textbf{77.0\%} & --- & 47.1\% & 61.6\% & 58.0\% \\
MRCR v2 (1M pointwise) & \textbf{26.3\%} & --- & N/S & N/S & 16.4\% \\
\midrule
\multicolumn{6}{l}{\textit{Agentic Tasks}} \\
t2-bench (tool use) & \textbf{85.4\%} & --- & 84.7\% & 80.2\% & 54.9\% \\
Vending-Bench 2 & \textbf{\$5,478} & --- & \$3,839 & \$1,473 & \$574 \\
\bottomrule
\end{tabular}%
}
\end{table}

\textbf{Key Observations:}
\begin{itemize}
    \item \textbf{Gemini 3 Pro} leads across most benchmarks, particularly in mathematical reasoning (23.4\% MathArena Apex vs $<$2\% for others) and multimodal understanding (31.1\% ARC-AGI-2 vs 4.9-17.6\%)
    \item \textbf{Claude 4.5 Opus} achieves 74.40\% on SWE-bench Verified, competitive with other frontier models but still focused on code generation rather than debugging
    \item \textbf{Agentic coding benchmarks} (SWE-bench, Terminal-Bench) show frontier models clustering around 70-77\%, yet real-world debugging success remains below 20\% without specialized architecture
    \item \textbf{Chronos-1 fills the gap}: While frontier models excel at benchmarks, Chronos-1's 80.33\% on SWE-bench Lite (debugging-focused) demonstrates that debugging requires fundamentally different capabilities than code generation
\end{itemize}

\begin{table}[H]
\caption{SWE-bench Lite leaderboard comparison (No Filters, as of November 2025). *Chronos-1 results submitted; verification pending.}
\label{tab:swebench-lite-leaderboard}
\footnotesize
\begin{tabular*}{\columnwidth}{@{}c@{\extracolsep{\fill}}lccc@{}}
\toprule
\textbf{Rank} & \textbf{System} & \textbf{Base Model} & \textbf{Score (\%)} & \textbf{Date} \\
\midrule
\textbf{1} & \textbf{Kodezi Chronos-1} & \textbf{Proprietary} & \textbf{80.33} & \textbf{2025-11} \\
2 & ExpeRepair-v1.0 & Claude 4 Sonnet & 60.33 & 2025-06 \\
3 & Refact.ai Agent & Proprietary & 60.00 & 2025-04 \\
4 & KGCompass & Claude 4 Sonnet & 58.33 & 2025-09 \\
5 & SWE-agent & Claude 4 Sonnet & 56.67 & 2025-05 \\
6 & Isoform & Proprietary & 55.00 & 2025-01 \\
7 & SemAgent\_Multi-v1.0 & Proprietary & 51.67 & 2025-06 \\
8 & Isea & Proprietary & 51.33 & 2025-09 \\
9 & EntroPO + R2E & Qwen3-Coder-30B & 49.67 & 2025-09 \\
10 & Blackbox AI Agent & Proprietary & 49.00 & 2024-12 \\
11 & Codev & Proprietary & 49.00 & 2025-05 \\
12 & Gru & Proprietary & 48.67 & 2024-12 \\
\midrule
\multicolumn{5}{@{}l@{}}{\textit{General-Purpose Models (for reference)}} \\
\midrule
-- & Claude 4.1 Opus & -- & 14.2 & 2025 \\
-- & Claude 4.5 Sonnet & -- & 13.6 & 2025 \\
-- & GPT-4.1 & -- & 13.8 & 2025 \\
\bottomrule
\end{tabular*}
\end{table}
\begin{figure}[H]
\centering
\resizebox{0.95\columnwidth}{!}{%
\begin{tikzpicture}
    \begin{axis}[
        ybar,
        width=12cm,
        height=7cm,
        ylabel={SWE-bench Verified (\%)},
        ylabel style={font=\normalsize},
        xlabel={Model/Agent System},
        xlabel style={font=\normalsize},
        ymin=0, ymax=90,
        xtick={0,1,2,3,4,5,6,7,8},
        xticklabels={TRAE+Doubao, {live-SWE\\+Gemini 3}, {Claude 4.5\\Opus}, {Gemini 3\\Pro}, {Claude 4.5\\Sonnet}, {Claude 4\\Opus}, GPT-5, {Gemini 2.5\\Pro}, GPT-4.1},
        xticklabel style={rotate=45, anchor=east, font=\scriptsize, align=center},
        yticklabel style={font=\footnotesize},
        bar width=14pt,
        enlarge x limits=0.1,
        ymajorgrids=true,
        grid style=dashed,
        nodes near coords,
        nodes near coords align={vertical},
        nodes near coords style={font=\tiny},
        legend style={
            at={(0.5,1.05)},
            anchor=south,
            font=\scriptsize,
            legend columns=3,
            /tikz/every even column/.append style={column sep=5pt}
        },
    ]
    \addplot[fill=blue!70, draw=blue!90] coordinates {
        (0,78.80)
        (1,77.40)
    };
    \addplot[fill=red!60, draw=red!80] coordinates {
        (2,74.40)
        (3,74.20)
    };
    \addplot[fill=purple!50, draw=purple!70] coordinates {
        (4,70.60)
        (5,67.60)
        (6,65.00)
        (7,53.60)
        (8,39.58)
    };
    \legend{Agentic Systems, Latest Frontier (Nov 2025), Previous Generation}
    \end{axis}
\end{tikzpicture}%
}
\caption{SWE-bench Verified leaderboard (November 2025). Agentic systems (TRAE+Doubao: 78.80\%, live-SWE-agent+Gemini 3: 77.40\%) lead the benchmark. Latest frontier models (Claude 4.5 Opus: 74.40\%, Gemini 3 Pro: 74.20\%) show significant improvements over previous generations. Note: These scores represent code generation capabilities; debugging-focused evaluation on SWE-bench Lite reveals much lower success rates without specialized architecture.}
\label{fig:swebench-verified-leaderboard}
\end{figure}

\begin{figure}[H]
\centering
\resizebox{0.95\columnwidth}{!}{%
\begin{tikzpicture}
    \begin{axis}[
        ybar,
        width=12cm,
        height=7cm,
        ylabel={SWE-bench Lite Performance (\%)},
        ylabel style={font=\normalsize},
        ymin=0, ymax=95,
        xmin=0.3, xmax=5.7,
        xtick={1,2,3,4,5},
        xticklabels={General LLM, {+ Framework}, {+ Memory}, {+ Debug\\Training}, {Chronos-1\\(All)}},
        xticklabel style={font=\footnotesize, align=center},
        yticklabel style={font=\footnotesize},
        bar width=22pt,
        ymajorgrids=true,
        grid style=dashed,
        legend style={at={(0.5,-0.25)}, anchor=north, legend columns=2, font=\scriptsize},
        nodes near coords,
        every node near coord/.append style={font=\footnotesize\bfseries},
        clip=false,
    ]
    \addplot[fill=red!70, draw=red!90] coordinates {(1.9,13.7)};
    \addplot[fill=orange!60, draw=orange!80] coordinates {(2.4,42.5)};
    \addplot[fill=yellow!70, draw=yellow!90] coordinates {(3,58.3)};
    \addplot[fill=blue!60, draw=blue!80] coordinates {(3.6,67.8)};
    \addplot[fill=green!70, draw=green!90] coordinates {(4.1,80.33)};
    \legend{Baseline, +Agent Framework, +Session Memory, +Debug-Specific Training, +All Components}
    \end{axis}
\end{tikzpicture}%
}
\caption{Architectural impact on SWE-bench Lite performance. Starting from general-purpose models (13.7\% average), each architectural component provides incremental gains: agent frameworks add 28.8 points, session memory adds 15.8 points, and debug-specific training adds 9.5 points. Chronos-1 achieved 80.33\% using a rule-compliant configuration with AGR (repository-only), Debug LLM, and PDM (generic patterns only)---with the orchestration loop, patch refinement, and SWE-bench-derived patterns disabled per submission requirements.}
\label{fig:architecture-impact}
\end{figure}

\textbf{Performance Breakdown Analysis:} Figure~\ref{fig:architecture-impact} demonstrates the cumulative value of architectural components. While general-purpose models (Claude 4.5 Sonnet, GPT-4.1) achieve only 13.7\% average, adding agent frameworks (ExpeRepair, SWE-agent) boosts performance to 42.5\% (+28.8 points). Session memory provides another 15.8 points to 58.3\%, and debug-specific training yields 67.8\%. Chronos-1's rule-compliant configuration---using AGR (repository-only retrieval), Debug LLM, and PDM (generic patterns only), with orchestration loops and patch refinement disabled---delivers the final 12.5 points to 80.33\%, establishing a new state-of-the-art.

\begin{table}[H]
\centering
\caption{Consolidated debugging benchmark performance.}
\label{tab:debug-benchmarks}
\tiny
\setlength{\tabcolsep}{2pt}
\begin{tabular}{llccccc}
\toprule
\textbf{Benchmark} & \textbf{Task Category} & \textbf{Chronos-1} & \textbf{Amazon Q} & \textbf{ACR} & \textbf{Claude 4.1 Opus} & \textbf{GPT-4.1} \\
\midrule
\multirow{5}{*}{\rotatebox{90}{\textbf{DebugBench}}}
& Syntax errors & \textbf{71.2} & 43.8 & 31.4 & 18.2 & 15.7 \\
& Logic bugs & \textbf{68.4} & 41.2 & 28.7 & 14.3 & 12.8 \\
& Runtime errors & \textbf{74.8} & 48.3 & 35.2 & 21.1 & 19.4 \\
& Semantic bugs & \textbf{69.7} & 45.1 & 32.8 & 17.6 & 16.2 \\
& \textit{Average} & \textit{71.0} & \textit{44.6} & \textit{32.0} & \textit{17.8} & \textit{16.0} \\
\midrule
\multirow{5}{*}{\rotatebox{90}{\textbf{MdEval}}} 
& Concurrency & \textbf{58.7} & 32.1 & 22.8 & 7.3 & 6.1 \\
& Memory mgmt & \textbf{63.2} & 36.4 & 25.7 & 10.2 & 8.8 \\
& Network bugs & \textbf{66.9} & 40.8 & 28.3 & 12.7 & 11.4 \\
& Build config & \textbf{70.5} & 42.7 & 30.1 & 16.8 & 14.9 \\
& \textit{Average} & \textit{64.8} & \textit{38.0} & \textit{26.7} & \textit{11.8} & \textit{10.3} \\
\midrule
\textbf{Overall Average} & & \textbf{67.5}*** & 41.7 & 29.0 & 14.1 & 12.4 \\
\bottomrule
\multicolumn{7}{l}{\tiny ACR = AutoCodeRover. ***p $< 0.001$ compared to best baseline (Amazon Q), two-tailed t-test, n=12,500}
\end{tabular}
\end{table}

\subsection{Multi-Code Reasoning Evaluation Protocol}

Unlike traditional benchmarks that target token-level prediction in narrow context, our protocol explicitly:
\begin{itemize}
    \item Randomizes the placement of relevant context (bug source, documentation clue, test assertion) across large codebases and histories.
    \item Requires Chronos-1 to retrieve, associate, and utilize multi-code context in a compositional manner, solving tasks that demand reasoning over both explicit code relationships (e.g., function calls, imports) and implicit bug/error propagation patterns.
    \item Measures both retrieval accuracy (whether Chronos-1 finds all necessary context) and end-to-end task success (whether it can autonomously fix, validate, and document the issue).
\end{itemize}

\subsection{Multi Random Retrieval Benchmark}

We introduce the \textbf{Multi Random Retrieval (MRR)} benchmark, specifically designed to evaluate debugging-oriented retrieval capabilities. The full evaluation suite is scheduled for release in Q1 2026.

\subsubsection{MRR Design and Methodology}

The Multi Random Retrieval benchmark addresses a critical gap in existing evaluations: real-world debugging requires finding scattered information across large codebases where traditional similarity-based retrieval fails.

\textbf{Key Design Principles:}
\begin{enumerate}
    \item \textbf{Random Distribution}: Relevant debugging clues are randomly scattered across 10-50 files, simulating real-world information dispersion
    \item \textbf{High Noise Ratio}: 70\% of retrievable content is plausible but irrelevant, testing precision
    \item \textbf{Multi-Hop Requirements}: Average 3-7 retrieval steps needed to gather complete context
    \item \textbf{Temporal Scattering}: Information spans multiple commits/time periods
\end{enumerate}

\textbf{Formal Metrics:}
\begin{itemize}
    \item \textbf{Precision@k}: Fraction of retrieved chunks that are actually relevant to the bug fix
    \item \textbf{Recall@k}: Fraction of all relevant chunks successfully retrieved
    \item \textbf{MRR Score}: Mean reciprocal rank of first correct retrieval, computed as $\text{MRR} = \frac{1}{|Q|}\sum_{i=1}^{|Q|}\frac{1}{\text{rank}_i}$
    \item \textbf{Fix Success Rate}: Whether the generated fix actually resolves the bug
\end{itemize}

\begin{figure}[H]
\centering
\resizebox{\columnwidth}{!}{%
\begin{tikzpicture}[
    node distance=1.5cm,
    file/.style={rectangle, draw, minimum width=2cm, minimum height=0.6cm, font=\footnotesize},
    bug/.style={rectangle, draw, fill=red!20, minimum width=2.5cm, minimum height=0.8cm, font=\footnotesize},
    correct/.style={rectangle, draw, fill=green!20, minimum width=2cm, minimum height=0.6cm, font=\footnotesize},
    wrong/.style={rectangle, draw, fill=gray!20, minimum width=2cm, minimum height=0.6cm, font=\footnotesize},
    path/.style={->, thick, font=\scriptsize}
]
    \node[bug] (bug) at (0,0) {Bug: Async Task Failure};
    
    \node[wrong] (gpt1) at (-4,-1.5) {TaskRunner.java};
    \node[wrong] (gpt2) at (-4,-2.5) {AsyncUtils.java};
    \node[wrong] (gpt3) at (-4,-3.5) {ErrorHandler.java};
    \node[bug, fill=red!40] (gptfix) at (-4,-5) {Wrong Fix: Add try-catch};
    
    \draw[path, red] (bug) -- node[left] {Text similarity} (gpt1);
    \draw[path, red] (gpt1) -- (gpt2);
    \draw[path, red] (gpt2) -- (gpt3);
    \draw[path, red] (gpt3) -- (gptfix);
    
    \node[correct] (chr1) at (4,-1.5) {TaskExecutor.java:89};
    \node[correct] (chr2) at (4,-2.5) {ThreadPoolConfig.yml};
    \node[correct] (chr3) at (4,-3.5) {commit: "reduce threads"};
    \node[bug, fill=green!40] (chrfix) at (4,-5) {Correct Fix: Restore pool size};
    
    \draw[path, green!70!black] (bug) -- node[right] {Stack trace} (chr1);
    \draw[path, green!70!black] (chr1) -- node[right] {Config ref} (chr2);
    \draw[path, green!70!black] (chr2) -- node[right] {Git blame} (chr3);
    \draw[path, green!70!black] (chr3) -- (chrfix);
    
    \node[font=\footnotesize\bfseries] at (-4,-6) {GPT-4: Surface Pattern Matching};
    \node[font=\footnotesize\bfseries] at (4,-6) {Chronos-1: Causal Path Following};
\end{tikzpicture}%
}
\caption{MRR example showing retrieval paths: GPT-4 follows textual similarity to related but incorrect files, while Chronos-1 traces causal dependencies through stack trace → configuration → git history to find the true root cause.}
\label{fig:mrr-example}
\end{figure}

\subsubsection{Reproducibility and Open Science}

To ensure reproducibility and enable fair comparisons, we provide:

\begin{enumerate}
\item \textbf{Open Evaluation Subset:}
\begin{itemize}
    \item 500 debugging scenarios (10\% of full benchmark) with complete test harnesses
    \item Ground truth fixes and intermediate retrieval annotations
    \item Automated evaluation scripts with standardized metrics
    \item Docker containers with exact environment configurations
\end{itemize}

\item \textbf{Evaluation Infrastructure:}
\begin{itemize}
    \item \texttt{chronos-eval}: Python package for running benchmarks
    \item Supports custom model integration via simple API
    \item Automated result validation and statistical significance testing
    \item Leaderboard submission system with blind test set
\end{itemize}

\item \textbf{Detailed Task Definitions:}
\end{enumerate}

\textbf{Example MRR Task Format:}
\begin{lstlisting}[language=Java, basicstyle=\tiny\ttfamily, breaklines=true, xleftmargin=5pt, xrightmargin=3pt, linewidth=0.92\columnwidth, aboveskip=0.5em, belowskip=0.5em, keywordstyle=\color{blue}\bfseries, commentstyle=\color{gray}, stringstyle=\color{green!60!black}]
{
  "bug_id": "apache-commons-math-1234",
  "repo_snapshot": "git://github.com/apache/commons-math@abc123",
  "failing_tests": ["TestLinearRegression.testFit"],
  "scattered_files": [
    "src/.../stat/regression/AbstractRegression.java",
    "src/.../linear/MatrixUtils.java"
    // ... 15 more files across different modules
  ],
  "temporal_range": "2023-01-15 to 2023-04-22",
  "evaluation": {
    "fix_validator": "docker run chronos-eval:fix-validator",
    "context_scorer": "weighted_jaccard",
    "regression_tests": ["TestSuite.class"]
  }
}
\end{lstlisting}

\vspace{0.5em}
\subsubsection{Benchmark Design}
The MRR benchmark consists of 5,000 real-world debugging scenarios where:
\begin{enumerate}
    \item \textbf{Context Scattering}: Relevant debugging information is randomly distributed across 10-50 files
    \item \textbf{Temporal Dispersion}: Critical bug context spans 3-12 months of commit history
    \item \textbf{Obfuscated Dependencies}: Variable names and function calls are refactored between bug introduction and discovery
    \item \textbf{Multi-Modal Artifacts}: Solutions require combining code, tests, logs, and documentation
\end{enumerate}

\subsubsection{Evaluation Metrics}
\begin{itemize}
    \item \textbf{Retrieval Precision@k}: Fraction of retrieved artifacts that are relevant to the bug fix
    \item \textbf{Retrieval Recall@k}: Fraction of all relevant artifacts successfully retrieved
    \item \textbf{Fix Accuracy}: Whether the generated fix passes all tests and doesn't introduce regressions
    \item \textbf{Context Efficiency}: Ratio of used vs retrieved tokens in the final solution
\end{itemize}

\subsubsection{Results on MRR Benchmark}
\begin{table}[H]
\centering
\caption{Performance on Multi Random Retrieval benchmark, demonstrating Chronos-1's superior ability to find and utilize scattered debugging context.}
\label{tab:mrr-results}
\resizebox{\columnwidth}{!}{%
\begin{tabular}{lcccc}
\hline
\textbf{Model} & \textbf{Precision@10} & \textbf{Recall@10} & \textbf{Fix Accuracy} & \textbf{Context Eff.} \\
\hline
GPT-4.1 + RAG & 55.2\% & 42.3\% & 13.8\% & 0.34 \\
Claude 4.5 Opus + RAG & 68.3\% & 54.2\% & 17.1\% & 0.46 \\
Claude 4.1 Opus + RAG & 62.1\% & 48.7\% & 14.2\% & 0.41 \\
Gemini 3 Pro + RAG & 66.8\% & 52.9\% & 16.4\% & 0.44 \\
Gemini 2.5 Pro + RAG & 51.7\% & 40.1\% & 12.4\% & 0.38 \\
\textbf{Kodezi Chronos-1} & \textbf{89.2\%} & \textbf{84.7\%} & \textbf{67.3\%} & \textbf{0.71} \\
\hline
\end{tabular}%
}
\end{table}

\subsubsection{Comprehensive MRR Results Against State-of-the-Art Debugging Systems}

The MRR benchmark reveals the fundamental difference between general-purpose code models and debugging-specific systems. Table~\ref{tab:mrr-detailed} shows detailed performance across all evaluated systems:

\begin{table}[H]
\centering
\caption{Comprehensive MRR benchmark results comparing Chronos-1 against modern debugging systems and general-purpose code models. Results demonstrate the critical importance of debugging-specific design. All debugging-focused systems listed are publicly available or open source ($^\dagger$indicates open source projects).}
\label{tab:mrr-detailed}
\resizebox{\columnwidth}{!}{%
\begin{tabular}{@{}lcccccc@{}}
\hline
\textbf{System} & \textbf{Type} & \textbf{Precision@10} & \textbf{Recall@10} & \textbf{Fix Acc.} & \textbf{Iterations} & \textbf{Success Time} \\
\hline
\multicolumn{7}{c}{\textit{General-Purpose Code Models (2025)}} \\
\hline
Claude 4.5 Opus & Code Gen & 68.3\% & 54.2\% & 17.1\% & 2.5 & 16.4 min \\
Gemini 3 Pro & Code Gen & 66.8\% & 52.9\% & 16.4\% & 2.4 & 15.8 min \\
Claude 4.1 Opus & Code Gen & 62.1\% & 48.7\% & 14.2\% & 2.3 & 15.2 min \\
Claude 4.5 Sonnet & Code Gen & 59.8\% & 46.3\% & 13.1\% & 2.1 & 14.8 min \\
GPT-4.1 & Code Gen & 55.2\% & 42.3\% & 13.8\% & 1.8 & 12.3 min \\
GPT-4o & Code Gen & 48.3\% & 37.2\% & 10.2\% & 1.6 & 11.7 min \\
Gemini 2.5 Pro & Code Gen & 51.7\% & 40.1\% & 12.4\% & 2.0 & 13.5 min \\
DeepSeek V3 & Code Gen & 44.2\% & 34.8\% & 9.7\% & 1.4 & 10.2 min \\
Qwen2.5-Coder-32B & Code Gen & 41.3\% & 31.2\% & 8.3\% & 1.3 & 9.8 min \\
\hline
\multicolumn{7}{c}{\textit{IDE-Integrated Systems}} \\
\hline
Cursor Agent Mode & IDE & 32.4\% & 24.1\% & 4.2\% & 1.0 & 8.5 min \\
Windsurf Cascade & IDE & 35.7\% & 27.3\% & 5.1\% & 1.2 & 9.1 min \\
Claude Code CLI & CLI & 40.2\% & 31.8\% & 6.8\% & 1.4 & 10.3 min \\
Gemini CLI (1.5 Pro) & CLI & 43.1\% & 35.2\% & 9.7\% & 1.5 & 11.2 min \\
\hline
\multicolumn{7}{c}{\textit{Debugging-Focused Systems (Publicly Available / Open Source)}} \\
\hline
AutoCodeRover$^\dagger$ & Debug & 71.3\% & 63.2\% & 30.7\% & 4.2 & 28.3 min \\
Amazon Q Developer & Debug & 75.8\% & 68.4\% & 49.0\% & 5.1 & 35.7 min \\
SWE-Agent$^\dagger$ (GPT-4) & Debug & 65.2\% & 57.1\% & 22.3\% & 3.8 & 24.2 min \\
OpenHands$^\dagger$ (CodeAct) & Debug & 67.8\% & 54.3\% & 19.2\% & 3.4 & 23.8 min \\
Agentless$^\dagger$ & Debug & 62.4\% & 48.7\% & 16.8\% & 2.1 & 18.4 min \\
Moatless Tools$^\dagger$ & Debug & 69.1\% & 58.2\% & 24.1\% & 4.5 & 29.7 min \\
\textbf{Kodezi Chronos-1} & \textbf{Debug} & \textbf{89.2\%} & \textbf{84.7\%} & \textbf{67.3\%} & \textbf{7.8} & \textbf{42.3 min} \\
\hline
\multicolumn{7}{l}{\footnotesize $^\dagger$Open source: AutoCodeRover (NUS-APR), SWE-Agent (Princeton NLP), OpenHands, Agentless, Moatless Tools}
\end{tabular}%
}
\end{table}

Key insights from the comprehensive MRR evaluation:

\begin{enumerate}
    \item \textbf{Debugging vs Code Generation}: General-purpose models achieve less than 17\% fix accuracy despite having state-of-the-art code generation capabilities. This validates our hypothesis that debugging requires fundamentally different architectures.

    \item \textbf{Open Source Agent Limitations}: Leading open source agents like AutoCodeRover (30.7\%), Moatless Tools (24.1\%), and SWE-Agent (22.3\%) show significant improvements over general-purpose models but still fall short of Chronos-1's 67.3\% fix accuracy. These tools lack persistent memory and adaptive retrieval mechanisms.

    \item \textbf{Iteration Depth}: Chronos-1 performs 7.8 iterations on average compared to 2-4 for open source agents and 1-2 for general models, demonstrating the importance of persistent debugging loops with execution feedback.

    \item \textbf{Precision-Recall Trade-off}: While Amazon Q Developer shows impressive 49\% fix accuracy, Chronos-1's 89.2\% precision and 84.7\% recall demonstrate superior context identification, leading to 67.3\% fix accuracy.

    \item \textbf{Time Investment}: Chronos-1 takes longer (42.3 min average) but achieves 2-3x higher success rates than open source alternatives, making it more time-efficient overall when considering rework costs.
\end{enumerate}

\subsection{Adaptive Graph-Guided Retrieval Performance}

We evaluate the impact of AGR on debugging accuracy across different retrieval depths:

\begin{table}[H]
\centering
\caption{Performance metrics for different retrieval strategies. Adaptive AGR dynamically selects optimal k based on query complexity.}
\label{tab:agr-performance}
\resizebox{\columnwidth}{!}{%
\begin{tabular}{lccccc}
\hline
\textbf{Retrieval Method} & \textbf{k=1} & \textbf{k=2} & \textbf{k=3} & \textbf{k=adaptive} & \textbf{Flat} \\
\hline
Precision & 84.3±2.1\% & 91.2±1.4\% & 88.7±1.8\% & \textbf{92.8±1.2\%} & 71.4±3.2\% \\
Recall & 72.1±2.8\% & 86.4±1.9\% & 89.2±1.6\% & \textbf{90.3±1.5\%} & 68.2±3.5\% \\
F1 Score & 77.7±2.4\% & 88.7±1.6\% & 88.9±1.7\% & \textbf{91.5±1.3\%} & 69.8±3.3\% \\
Debug Success & 58.2±3.1\% & 72.4±2.3\% & 71.8±2.4\% & \textbf{87.1±1.8\%} & 23.4±4.1\% \\
\hline
\end{tabular}%
}
\end{table}

Key findings from AGR evaluation:
\begin{itemize}
    \item \textbf{Optimal Depth Varies}: Simple bugs require k=1-2, while complex cross-module issues benefit from k=3+
    \item \textbf{Adaptive Superiority}: Dynamic depth selection outperforms fixed k values by 15-20\%
    \item \textbf{5x Improvement}: AGR achieves 87.1\% debug success vs 23.4\% for flat retrieval
    \item \textbf{Hardware Debugging}: Particularly effective for Verilog/VHDL with 91\% accuracy (vs 18\% baseline)
\end{itemize}

\begin{figure}[H]
\centering
\resizebox{\columnwidth}{!}{%
\begin{tikzpicture}
\begin{axis}[
    width=10cm,
    height=6.5cm,
    xlabel={Number of Retrieved Nodes},
    ylabel={Precision/Recall (\%)},
    xmin=0, xmax=22,
    ymin=40, ymax=102,
    xtick={1,3,5,10,20},
    ytick={40,50,60,70,80,90,100},
    legend style={at={(1.02,1)}, anchor=north west, font=\scriptsize, draw=black, fill=white},
    grid=major,
    grid style={dashed,gray!30},
    xlabel style={font=\normalsize},
    ylabel style={font=\normalsize},
    tick label style={font=\small},
    smooth,
    mark size=2.5pt,
    line width=1.2pt,
    legend cell align=left,
    legend columns=1
]

\addplot[color=blue,mark=*] coordinates {
    (1,91.2) (3,94.8) (5,92.3) (10,87.6) (20,82.1)
};
\addlegendentry{AGR Precision}

\addplot[color=blue,dashed,mark=square*] coordinates {
    (1,62.3) (3,78.9) (5,84.2) (10,89.7) (20,91.3)
};
\addlegendentry{AGR Recall}

\addplot[color=red,mark=triangle*] coordinates {
    (1,88.7) (3,81.2) (5,73.8) (10,64.2) (20,51.3)
};
\addlegendentry{Top-k Precision}

\addplot[color=red,dashed,mark=square] coordinates {
    (1,42.1) (3,58.7) (5,71.2) (10,83.4) (20,92.8)
};
\addlegendentry{Top-k Recall}

\addplot[color=green!70!black,mark=diamond*] coordinates {
    (1,100) (3,100) (5,100) (10,100) (20,100)
};
\addlegendentry{Oracle Precision}

\addplot[color=green!70!black,dashed,mark=diamond] coordinates {
    (1,68.4) (3,82.7) (5,89.1) (10,94.8) (20,98.2)
};
\addlegendentry{Oracle Recall}

\end{axis}
\end{tikzpicture}%
}
\caption{Precision-recall trade-off for different retrieval strategies. AGR maintains high precision at low k values while achieving comparable recall to oracle retriever, demonstrating efficient noise avoidance.}
\label{fig:precision-recall}
\end{figure}

\subsubsection{Why AGR Dominates Existing Graph-Based Retrieval Systems}

While graph-based code retrieval exists in various forms, AGR's architecture and debugging-specific optimizations make it vastly superior to all existing approaches:

\begin{table}[H]
\centering
\caption{AGR vs. existing graph-based retrieval systems on MRR benchmark (5,000 debugging scenarios). All compared systems are publicly available or open source.}
\label{tab:agr-dominance}
\resizebox{\columnwidth}{!}{%
\begin{tabular}{lccccccc}
\hline
\textbf{System} & \textbf{Graph Types} & \textbf{Debug Success} & \textbf{Precision@k=3} & \textbf{Avg Tokens} & \textbf{Cross-File} & \textbf{Memory} & \textbf{Adaptive} \\
& & \textbf{(MRR)} & \textbf{(MRR)} & \textbf{Retrieved} & \textbf{(MRR)} & \textbf{Persist} & \textbf{Depth} \\
\hline
\textbf{AGR (Chronos-1)} & AST+Log+Test+PR & \textbf{87.1\%} & \textbf{92.8\%} & \textbf{31.2K} & \textbf{71.2\%} & \textbf{\checkmark} & \textbf{\checkmark} \\
\hline
Graph RAG$^\dagger$ & Static KG & 28.0\% & 73.2\% & 76K & 33\% & $\times$ & $\times$ \\
LlamaIndex$^\dagger$ (KG) & Property Graph & 31.2\% & 74.8\% & 68K & 29\% & $\times$ & $\times$ \\
CodeQL$^\dagger$ & AST+Dataflow & 38.5\% & 79.3\% & 45K & 35\% & $\times$ & $\times$ \\
Sourcegraph & Dependency+Symbol & 26.7\% & 69.4\% & 82K & 31\% & $\times$ & $\times$ \\
Aider$^\dagger$ (RepoMap) & AST+Import & 33.8\% & 76.1\% & 54K & 27\% & $\times$ & $\times$ \\
tree-sitter$^\dagger$ + RAG & AST only & 22.4\% & 65.3\% & 71K & 19\% & $\times$ & $\times$ \\
\hline
\multicolumn{8}{l}{\footnotesize $^\dagger$Open source projects.}
\end{tabular}%
}
\end{table}

{\footnotesize All results on Multi Random Retrieval (MRR) benchmark with 5,000 real debugging scenarios. Debug Success = \% of bugs correctly fixed and validated. Precision@k=3 = \% of retrieved nodes relevant to bug fix. Cross-File = \% success on bugs spanning multiple files.}

\textbf{Key Architectural Advantages of AGR:}

\textbf{1. Unified Multi-Signal Graph Construction}

AGR fuses eight distinct signal types into one traversable graph:
\begin{itemize}
    \item AST parent/child relationships (weight: 0.8)
    \item Import/dependency edges (weight: 0.9)
    \item Function call graphs (weight: 0.85)
    \item Test coverage mappings (weight: 0.95)
    \item Log emission points (weight: 0.92)
    \item Stack trace paths (weight: 0.97)
    \item Commit co-occurrence (weight: $0.7 \cdot e^{-\lambda t}$)
    \item PR review comments (weight: 0.6)
\end{itemize}

No existing system combines more than 2-3 of these signals. This multi-modal fusion enables AGR to trace paths like: \texttt{test\_failure $\rightarrow$ log\_entry $\rightarrow$ emitting\_function $\rightarrow$ upstream\_caller $\rightarrow$ config\_error $\rightarrow$ fix\_location}.

\textbf{2. Confidence-Guided Adaptive Expansion}

While other systems use fixed-depth traversal, AGR implements entropy-based adaptive expansion:

\begin{lstlisting}[language=Python, basicstyle=\tiny\ttfamily, breaklines=true, xleftmargin=5pt, xrightmargin=3pt, linewidth=0.92\columnwidth, aboveskip=0.5em, belowskip=0.5em, keywordstyle=\color{blue}\bfseries, commentstyle=\color{gray}, stringstyle=\color{green!60!black}]
def adaptive_expand(seed_nodes, threshold=0.89):
  """Expand graph traversal adaptively based on confidence."""
  k = 1
  current_nodes = seed_nodes
  best_nodes_so_far = seed_nodes
  confidence = 0
  
  while confidence < threshold and k < max_hops:
    neighbors = graph.expand(current_nodes, k)
    confidence = calculate_entropy_confidence(neighbors)
    
    if confidence > threshold:
      return neighbors
    
    best_nodes_so_far = neighbors
    current_nodes = neighbors
    k += 1
  
  return best_nodes_so_far
\end{lstlisting}

This prevents both under-exploration (missing crucial context) and over-exploration (noise flooding). Average expansion: 3.7 hops for complex bugs, 1.2 for simple ones.

\textbf{3. Temporal-Aware Edge Weighting}

AGR weights edges based on temporal correlation with debugging success:
\begin{itemize}
    \item Recent bug fixes: Higher weight ($e^{-0.1t}$ decay)
    \item Co-committed files: Boosted during active development
    \item Test-code proximity: Updated after each test run
    \item Historical failure patterns: Accumulated over debugging sessions
\end{itemize}

Static systems like IntelliCode or Graph RAG cannot adapt weights based on debugging outcomes.

\textbf{4. Persistent Debug Memory Integration}

AGR uniquely interfaces with PDM to:
\begin{itemize}
    \item Cache successful traversal paths (reused in 47ms vs 3.2min initial)
    \item Learn failure-prone subgraphs (87\% match rate on recurring bugs)
    \item Accumulate fix patterns (6.8x faster on similar bugs)
\end{itemize}

Other systems are stateless, starting from scratch each session.

\textbf{5. Debugging-Specific Training and Optimization}

Unlike general-purpose retrievers, AGR is trained on:
\begin{itemize}
    \item 15M real debugging sessions with ground-truth fix locations
    \item Stack trace $\rightarrow$ root cause traversal paths
    \item Test failure $\rightarrow$ code change mappings
    \item Multi-file bug resolution sequences
\end{itemize}

This specialization explains the 3-5x performance gap over adapted general retrievers.

\begin{tcolorbox}[
    enhanced,
    title={\textbf{AGR Performance Domination Summary}},
    fonttitle=\bfseries,
    coltitle=white,
    colbacktitle=blue!75!black,
    colback=blue!5,
    colframe=blue!75!black,
    boxrule=0.5pt
]
\begin{itemize}
    \item \textbf{4x Higher Debug Success}: 87.1\% vs 23-41\% for existing systems
    \item \textbf{30\% Better Precision}: 92.8\% vs best alternative (81\%)
    \item \textbf{65\% Fewer Tokens}: 31.2K vs 51-89K average
    \item \textbf{3x Better Cross-File}: 71.2\% vs 31\% best competitor
    \item \textbf{Only System with Memory}: Enables continuous improvement
\end{itemize}
\end{tcolorbox}

\subsubsection{Oracle Retriever Experiments: Upper Bound Analysis}

To understand the theoretical limits of retrieval-augmented debugging, we conducted experiments with an oracle retriever that has perfect knowledge of which code segments are relevant:

\begin{table}[H]
\centering
\caption{Performance comparison with oracle retriever showing upper bounds for retrieval-based debugging}
\label{tab:oracle-retriever}
\resizebox{\columnwidth}{!}{%
\begin{tabular}{lccccc}
\hline
\textbf{Retriever Type} & \textbf{Debug Success} & \textbf{Precision} & \textbf{Recall} & \textbf{Avg Context} & \textbf{Fix Quality} \\
\hline
Random Baseline & 8.7\% & 31.2\% & 28.4\% & 15.2K tokens & 2.1/5 \\
BM25 & 18.3\% & 52.8\% & 49.7\% & 18.7K tokens & 2.8/5 \\
Dense Retrieval & 24.6\% & 64.3\% & 61.2\% & 21.3K tokens & 3.2/5 \\
HyDE & 31.2\% & 71.8\% & 68.9\% & 19.8K tokens & 3.5/5 \\
Self-RAG & 28.9\% & 69.4\% & 66.2\% & 24.1K tokens & 3.4/5 \\
Graph RAG & 33.7\% & 73.2\% & 70.8\% & 26.5K tokens & 3.6/5 \\
\hline
AGR (Chronos-1) & 65.3\% & 89.2\% & 84.7\% & 31.2K tokens & 4.3/5 \\
Oracle Retriever & 78.9\% & 100\% & 100\% & 42.7K tokens & 4.7/5 \\
\hline
\textbf{Gap to Oracle} & \textbf{13.6\%} & \textbf{10.8\%} & \textbf{15.3\%} & - & \textbf{0.4} \\
\hline
\end{tabular}%
}
\end{table}

Oracle retriever insights:
\begin{itemize}
    \item \textbf{21.1\% Ceiling}: Even with perfect retrieval, 21.1\% of bugs cannot be fixed due to reasoning limitations
    \item \textbf{Chronos-1 Efficiency}: AGR achieves 82.7\% of oracle performance while using 27\% less context
    \item \textbf{Remaining Gap}: The 13.6\% gap suggests room for retrieval improvements, particularly in:
        \begin{itemize}
            \item Cross-repository dependencies (4.2\% of failures)
            \item Implicit behavioral contracts (3.8\% of failures)
            \item Domain-specific patterns (2.9\% of failures)
        \end{itemize}
    \item \textbf{Quality vs Quantity}: Oracle retrieves more context (42.7K vs 31.2K tokens) but Chronos-1's selective retrieval maintains 91\% of oracle's fix quality
\end{itemize}

\begin{table}[H]
\centering
\resizebox{\columnwidth}{!}{%
\begin{tabular}{lccccc}
\hline
\textbf{RAG Technique} & \textbf{General Tasks} & \textbf{Code Tasks} & \textbf{Debug Tasks} & \textbf{MRR Bench} & \textbf{Compute Cost} \\
\hline
Flat Retrieval & 71.2\% & 68.3\% & 23.4\% & 31.7\% & 1.0x \\
HyDE & 82.1\% & 74.2\% & 31.2\% & 42.3\% & 2.1x \\
Self-RAG & 85.7\% & 78.9\% & 38.7\% & 48.1\% & 1.8x \\
FLARE & 83.9\% & 76.5\% & 35.2\% & 45.6\% & 2.3x \\
Graph RAG & 79.8\% & 81.2\% & 41.3\% & 51.7\% & 3.2x \\
\textbf{Chronos-1 AGR} & 88.3\% & \textbf{89.7\%} & \textbf{87.1\%} & \textbf{89.2\%} & 2.7x \\
\hline
\end{tabular}%
}
\caption{Comparison of RAG techniques across different task types. Advanced RAG methods show improvements over flat retrieval, with Chronos-1's AGR demonstrating the highest performance on debugging tasks.}
\label{tab:rag-comparison}
\end{table}

\begin{table}[H]
\centering
\resizebox{\columnwidth}{!}{%
\begin{tabular}{lccccc}
\hline
\textbf{Model} & \textbf{HumanEval} & \textbf{MBPP} & \textbf{Debug Success} & \textbf{Root Cause Acc.} & \textbf{Retrieval Prec.} \\
\hline
Claude 4.5 Opus & 94.1±0.6\%   & 92.8±0.5\%     & 17.1±1.5\%  & 22.4±1.3\% & 82±1.5\% \\
Gemini 3 Pro & 93.7±0.6\%   & 92.4±0.5\%     & 16.4±1.5\%  & 21.6±1.3\% & 80±1.6\% \\
GPT-4o       & 87.8±1.0\%   & 88.5±0.8\%     & 10.3±1.9\%  & 14.7±1.6\% & 72±2.0\% \\
GPT-4.1      & 91.2±0.8\%   & 90.7±0.7\%     & 13.8±1.7\%  & 18.2±1.5\% & 76±1.8\% \\
Claude 4.1 Opus & 92.8±0.7\%   & 91.3±0.6\%     & 14.2±1.6\%  & 19.1±1.4\% & 78±1.7\% \\
Claude 4.5 Sonnet & 92.1±0.8\%   & 90.9±0.7\%     & 13.6±1.7\%  & 18.5±1.5\% & 77±1.8\% \\
DeepSeek V3  & 90.5±0.9\%   & 89.8±0.8\%     & 12.1±1.8\%  & 16.3±1.6\% & 75±1.9\% \\
Qwen2.5-32B  & 89.7±1.0\%   & 88.9±0.9\%     & 11.5±1.9\%  & 15.7±1.7\% & 73±2.0\% \\
Gemini 2.5 Pro & 91.6±0.8\%   & 90.2±0.7\%     & 13.9±1.7\%  & 17.9±1.5\% & 76±1.8\% \\
\textbf{Chronos-1} & \textbf{90.2±0.6\%}NS & \textbf{88.9±0.5\%}NS & \textbf{65.3±1.4\%}*** & \textbf{78.4±1.2\%}*** & \textbf{91±0.8\%}*** \\
\hline
\multicolumn{6}{l}{\footnotesize NS: Not significant, *p $< 0.05$, **p $< 0.01$, ***p $< 0.001$ compared to best baseline (two-tailed t-test)}
\end{tabular}%
}
\caption{Performance across code synthesis and debugging tasks (mean ± std over 5 runs). Note that while Chronos-1 shows average performance on pure code generation tasks (HumanEval, MBPP), it dramatically outperforms all models on debugging-specific metrics.}
\label{tab:benchmark-results}
\end{table}

\subsection{Comprehensive Debugging Performance Analysis}

To provide a complete picture of debugging capabilities, Table~\ref{tab:comprehensive-debug} presents detailed metrics across all evaluation dimensions:

\begin{table}[H]
\centering
\caption{Comprehensive debugging performance comparison across all key metrics}
\label{tab:comprehensive-debug}
\resizebox{\columnwidth}{!}{%
\begin{tabular}{lccccccc}
\hline
\textbf{Model/System} & \textbf{Root Cause} & \textbf{Fix Valid} & \textbf{Cross-File} & \textbf{MRR} & \textbf{Regression} & \textbf{Debug} & \textbf{Time to} \\
 & \textbf{Precision (\%)} & \textbf{Accuracy (\%)} & \textbf{Hit Rate (\%)} & \textbf{Score} & \textbf{Avoid (\%)} & \textbf{Cycles} & \textbf{Fix (min)} \\
\hline
\multicolumn{8}{l}{\textit{General-Purpose LLMs}} \\
Claude 4.5 Opus & 38.2±2.4 & 17.1±1.4 & 52.3±3.5 & 0.36 & 73.4±3.8 & 6.8±1.7 & 39.2±10.4 \\
Gemini 3 Pro & 37.1±2.3 & 16.4±1.3 & 50.8±3.4 & 0.34 & 72.1±3.9 & 7.1±1.8 & 41.3±10.9 \\
GPT-4.1 & 31.2±2.1 & 13.8±1.2 & 42.3±3.1 & 0.28 & 67.2±4.2 & 8.3±2.1 & 47.2±12.3 \\
Claude 4.1 Opus & 34.7±2.3 & 14.2±1.3 & 45.8±3.3 & 0.31 & 69.8±4.0 & 7.9±1.9 & 44.6±11.8 \\
Gemini 2.5 Pro & 29.8±2.0 & 13.9±1.2 & 39.2±2.9 & 0.26 & 65.3±4.4 & 9.1±2.3 & 51.3±13.7 \\
DeepSeek V3 & 27.3±1.9 & 12.1±1.1 & 37.6±2.8 & 0.24 & 63.7±4.5 & 9.7±2.5 & 54.8±14.2 \\
\hline
\multicolumn{8}{l}{\textit{Enhanced RAG Systems}} \\
GPT-4.1 + HyDE & 38.9±2.5 & 22.3±1.8 & 51.2±3.5 & 0.36 & 74.2±3.8 & 6.2±1.5 & 38.7±9.4 \\
Claude 4.1 Opus + Self-RAG & 41.2±2.6 & 24.7±1.9 & 53.7±3.6 & 0.39 & 76.1±3.6 & 5.8±1.4 & 36.2±8.9 \\
Gemini + GraphRAG & 43.8±2.7 & 28.9±2.1 & 58.3±3.7 & 0.42 & 78.3±3.4 & 5.3±1.3 & 33.4±8.1 \\
\hline
\multicolumn{8}{l}{\textit{Specialized Debugging Tools (Open Source / Publicly Available)}} \\
SWE-agent$^\dagger$ & 47.3±2.8 & 22.9±1.7 & 61.2±3.8 & 0.44 & 81.2±3.2 & 4.7±1.1 & 31.2±7.6 \\
AutoCodeRover$^\dagger$ & 52.1±2.9 & 31.2±2.2 & 64.8±3.9 & 0.48 & 83.7±3.0 & 4.2±1.0 & 28.3±6.9 \\
OpenHands$^\dagger$ & 49.8±2.8 & 26.4±1.9 & 59.7±3.7 & 0.45 & 80.4±3.2 & 4.9±1.2 & 30.8±7.8 \\
Agentless$^\dagger$ & 45.2±2.6 & 21.7±1.6 & 54.3±3.5 & 0.40 & 78.9±3.4 & 5.3±1.3 & 33.1±8.2 \\
Moatless Tools$^\dagger$ & 51.4±2.9 & 28.3±2.0 & 62.1±3.8 & 0.47 & 82.1±3.1 & 4.4±1.0 & 29.4±7.2 \\
\hline
\textbf{Chronos-1} & \textbf{87.4±1.2} & \textbf{65.3±1.4} & \textbf{89.2±1.3} & \textbf{0.82} & \textbf{94.6±0.9} & \textbf{2.2±0.4} & \textbf{14.7±3.2} \\
\hline
\multicolumn{8}{l}{\footnotesize All metrics averaged over 5,000 debugging scenarios. $^\dagger$Open source projects. ± indicates standard error.}
\end{tabular}%
}
\end{table}

Key insights from comprehensive analysis:
\begin{itemize}
    \item \textbf{Root Cause Precision}: Chronos-1 achieves 87.4\% accuracy in identifying the true source of bugs, 2.5x better than the best baseline
    \item \textbf{Cross-File Retrieval}: 89.2\% hit rate demonstrates AGR's effectiveness at traversing dependencies
    \item \textbf{Regression Avoidance}: 94.6\% of Chronos-1 fixes don't introduce new bugs, critical for production deployment
    \item \textbf{Efficiency}: Average 2.2 debug cycles and 14.7 minutes to fix, showing rapid convergence
\end{itemize}

\subsection{Comparison with Agentic Code Tools}

While traditional LLMs struggle with debugging, a new generation of agentic code tools has emerged. We evaluate Chronos-1 against these systems on real-world debugging scenarios:

\begin{table}[H]
\centering
\caption{Comparison of Chronos-1 with modern agentic code tools (2025), sorted by debugging success rate.}
\label{tab:agentic-comparison}
\resizebox{\columnwidth}{!}{%
\begin{tabular}{llcccccc}
\toprule
\textbf{Category} & \textbf{Tool} & \textbf{Context} & \textbf{Memory} & \textbf{Debug Loop} & \textbf{Multi-File} & \textbf{CI/CD} & \textbf{Success Rate} \\
\midrule
\multirow{2}{*}{\textit{IDE-Integrated}} 
& Cursor & IDE + 32K & Session & Agent Mode & Yes & No & 4.2\% \\
& Windsurf & Cascade Tech & Session & Write Mode & Yes & No & 5.1\% \\
\midrule
\multirow{2}{*}{\textit{Code Assistants}}
& GitHub Copilot X & 16K tokens & None & No & Limited & No & 5.3\% \\
& Warp.dev & Terminal & Session & No & Limited & Yes & 7.3\%* \\
\midrule
\multirow{2}{*}{\textit{CLI Tools}}
& Claude Code CLI & 200K tokens & Session & No & Yes & No & 6.8\% \\
& Gemini CLI & 1.5M tokens & None & No & Yes & Limited & 9.7\% \\
\midrule
\textit{Debugging-First} & \textbf{Chronos-1} & \textbf{Unlimited**} & \textbf{Persistent} & \textbf{Yes} & \textbf{Yes} & \textbf{Yes} & \textbf{65.3\%} \\
\bottomrule
\end{tabular}%
}
\end{table}
{\footnotesize *Warp achieves 71\% on SWE-bench code generation with specialized setup but only 7.3\% on real-world debugging tasks (MRR benchmark). **Via AGR intelligent retrieval.}

Key differentiators of Chronos-1:
\begin{itemize}
    \item \textbf{Persistent Memory}: Unlike session-based tools, Chronos-1maintains cross-session knowledge of bugs, fixes, and patterns
    \item \textbf{True Debugging Loop}: Automated iteration through fix-test-refine cycles until validation succeeds
    \item \textbf{CI/CD Integration}: Native understanding of build systems, test frameworks, and deployment pipelines
    \item \textbf{Unlimited Context}: Smart retrieval enables reasoning over entire repositories without token limits
\end{itemize}

\subsubsection{Modern Development Environment Analysis}

The 2025 landscape of AI-powered development tools showcases significant innovation, yet reveals fundamental limitations in debugging capabilities:

\textbf{Cursor} introduced agent mode for multi-file generation but remains optimized for code completion speed over debugging accuracy. Its 4.2\% debug success rate reflects prioritization of real-time suggestions over deep reasoning.

\textbf{Windsurf} by Codeium features Cascade technology for mapping codebases "like a neural net," enabling impressive multi-file edits. However, its lack of persistent memory and debugging-specific training limits effectiveness on complex bugs.

\textbf{Claude Code CLI} and \textbf{Gemini CLI} represent the evolution of terminal-based AI assistants. Claude Code CLI can use either Opus 4 or Sonnet 4 models, offering flexibility between performance and cost. While Claude Code excels at multi-file reasoning and Gemini CLI offers 1M token context, neither implements the iterative debugging loops necessary for autonomous bug resolution.

\textbf{Warp.dev} demonstrates interesting potential, achieving 71\% on SWE-bench with specialized configuration, but this performance doesn't generalize to real-world debugging scenarios where it achieves only 7.3\% success.

\textbf{Agentic and Reasoning Models}: Despite recent advances, even sophisticated approaches fail at debugging. Claude 4 Opus's agent mode with chain-of-thought reasoning achieves only 16.8\% debug success, marginally better than its baseline. Reasoning-specific models like o1-preview (18.2\%) and o1-mini (15.7\%) show similar limitations. Web search integration provides modest improvements: Claude with web search reaches 19.3\%, while GPT-4.1 with browsing achieves 17.6\%. These results confirm that debugging requires more than enhanced reasoning or information access, it demands specialized architecture, persistent memory, and iterative refinement capabilities that current models lack.

\subsubsection{Comparative Debugging Loop Analysis}

We analyze the debugging approaches of various systems to understand why Chronos-1 achieves superior performance:

\begin{table}[H]
\centering
\caption{Debugging loop characteristics comparison across systems}
\label{tab:debug-loop-comparison}
\resizebox{\columnwidth}{!}{%
\begin{tabular}{lcccccc}
\hline
\textbf{System} & \textbf{Loop Type} & \textbf{Avg Iter} & \textbf{Test Exec} & \textbf{Memory} & \textbf{Backtrack} & \textbf{Success} \\
\hline
Chronos-1 & Autonomous & 7.8 & \checkmark & Persistent & \checkmark & 65.3\% \\
Claude 4.5 Opus & Single-shot & 1.4 & $\times$ & Session & $\times$ & 17.1\% \\
Gemini 3 Pro & Single-shot & 1.3 & $\times$ & Session & $\times$ & 16.4\% \\
Claude 4.1 Opus & Single-shot & 1.2 & $\times$ & Session & $\times$ & 14.2\% \\
GPT-4.1 & Limited retry & 2.1 & $\times$ & Session & $\times$ & 13.8\% \\
Cursor & User-driven & 3.5 & Manual & Session & $\times$ & 4.2\% \\
Windsurf & Multi-file & 2.8 & Manual & Session & $\times$ & 6.1\% \\
Amazon Q & Guided & 4.2 & \checkmark & Session & Limited & 49.0\% \\
GitHub Copilot & Suggestion & 1.0 & $\times$ & None & $\times$ & 8.7\% \\
Warp.dev & Config-based & 3.1 & \checkmark & Session & $\times$ & 7.3\% \\
\hline
\end{tabular}%
}
\end{table}

Key observations:
\begin{itemize}
    \item \textbf{Iteration Depth}: Chronos-1 averages 7.8 iterations vs 1-4 for others, enabling deeper exploration
    \item \textbf{Test Integration}: Only Chronos-1, Amazon Q, and Warp.dev execute tests automatically
    \item \textbf{Backtracking}: Chronos-1 uniquely supports full hypothesis backtracking when fixes fail
    \item \textbf{Memory Persistence}: All competitors use session-based memory, losing context between runs
\end{itemize}

\subsection{Quantitative Analysis: Debug Cycles and Convergence Rates}

To showcase Chronos-1's real-world debugging prowess, we conduct a qualitative study involving a set of regression bug scenarios drawn from a large open-source Python project. Metrics include:
\begin{itemize}
    \item Number of attempts to converge on a passing code/test cycle.
    \item Ability to document and explain root causes compared to human reviewers.
    \item Time-to-resolution and reduction in manual engineering effort.
\end{itemize}

\begin{figure}[H]
    \centering
    \begin{tikzpicture}
        \begin{axis}[
            ybar,
            bar width=25pt,
            width=0.9\linewidth,
            height=6cm,
            ylabel={Avg. Cycles to Fix},
            xlabel={Model},
            xtick=data,
            xticklabels={Claude 4.5 Opus, Gemini 3 Pro, GPT-4.1, Claude 4.1 Opus, DeepSeek V3, Chronos-1},
            xtick align=inside,
            x tick label style={rotate=45, anchor=east, font=\small},
            major x tick style = transparent,
            enlarge x limits=0.15,
            ymin=0, ymax=8,
            ymajorgrids=true,
            grid style=dashed,
            nodes near coords,
            nodes near coords align={vertical},
            every node near coord/.append style={font=\footnotesize, yshift=2pt},
        ]
        \addplot[fill=gray!40] coordinates {(0,4.1) (1,4.2) (2,4.8) (3,4.5) (4,4.2) (5,2.2)};
        \end{axis}
    \end{tikzpicture}
    \caption{Average code-to-fix cycles for Chronos-1 and baseline models on real-world bugs (lower is better).}
    \label{fig:cycles}
\end{figure}
\FloatBarrier

Chronos-1 demonstrates not only higher accuracy and retrieval precision, but also a dramatically reduced number of debug cycles, underscoring its benefit for continuous, automated codebase reliability.

\subsection{Real-World Debugging Scenarios: Detailed Case Analysis}

To illustrate Chronos-1's debugging capabilities, we present two real-world examples from our evaluation:

\subsubsection{Case Study 1: Cross-Module Null Pointer Exception}

\textbf{Bug Report:} "Application crashes with NullPointerException when processing user exports after recent authentication refactor"

\textbf{Technical Analysis:}
\begin{itemize}
    \item \textbf{Initial Context}: 3,247 tokens from stack trace, 2 affected files
    \item \textbf{AGR Retrieval}: k=1 found 3 auth commits, k=2 expanded to 7 related files, k=3 included test cases
    \item \textbf{Confidence Progression}: 32\% $\rightarrow$ 67\% $\rightarrow$ 91\% (terminated at k=3)
\end{itemize}

\textbf{Chronos-1 Reasoning Process:}
\begin{enumerate}
    \item \textbf{Semantic Search} (142ms): Retrieved commits with "authentication" keyword via Memory Engine
    \item \textbf{Data Flow Analysis} (287ms): Traced \texttt{AuthToken} propagation through 5 call sites
    \item \textbf{Pattern Recognition} (95ms): Identified missing null-safety pattern post-refactor
    \item \textbf{Impact Analysis} (178ms): Located 2 similar vulnerabilities in \texttt{UserService} and \texttt{AdminService}
\end{enumerate}

\textbf{Generated Fix:}

\textbf{Null Pointer Fix in Authentication Flow}
\begin{lstlisting}[language=Java, basicstyle=\tiny\ttfamily, breaklines=true, xleftmargin=5pt, xrightmargin=3pt, linewidth=0.92\columnwidth, aboveskip=0.5em, belowskip=0.5em]
// In AuthService.java
public AuthToken refreshToken(String userId) {
  AuthToken token = tokenCache.get(userId);
  if (token == null || token.isExpired()) {
    token = authProvider.generateToken(userId);
+   if (token != null) {
+     tokenCache.put(userId, token);
+   }
  }
  return token;
}
// In ExportService.java  
public void exportUserData(String userId) {
  AuthToken token = authService.refreshToken(userId);
+ if (token == null) {
+   throw new AuthenticationException(
+     "Failed to authenticate user for export");
+ }
  // ... rest of export logic
}
\end{lstlisting}

\textbf{Validation Results:} 
\begin{itemize}
    \item \textbf{Test Coverage}: 47/47 existing tests passed, 3 new edge case tests generated
    \item \textbf{Performance Impact}: No measurable latency increase ($<1$ms overhead)
    \item \textbf{Fix Iterations}: Converged in 2 cycles (initial fix $\rightarrow$ test generation $\rightarrow$ validation)
    \item \textbf{Total Time}: 1.7 seconds from bug report to validated fix
\end{itemize}

\subsubsection{Case Study 2: Async Race Condition in Message Queue}

\textbf{Bug Report:} "Intermittent message loss in high-load scenarios, approximately 0.1\% messages not processed"

\textbf{Technical Analysis:}
\begin{itemize}
    \item \textbf{Initial Context}: 8,432 tokens (logs, metrics, queue implementation)
    \item \textbf{AGR Retrieval}: k=4 required due to distributed system complexity
    \item \textbf{Pattern Matching}: Found similar fix in commit \texttt{a3f8b2c} via Memory Engine
\end{itemize}

\textbf{Chronos-1 Debugging Process:}
\begin{enumerate}
    \item \textbf{Log Analysis} (523ms): Correlated message IDs with acknowledgment timestamps
    \item \textbf{Concurrency Detection} (1,247ms): Identified non-atomic operation between \texttt{ack()} and \texttt{releaseConnection()}
    \item \textbf{Historical Search} (432ms): Retrieved 3 similar race conditions from Memory Engine
    \item \textbf{Solution Synthesis} (298ms): Adapted previous synchronization pattern to current architecture
\end{enumerate}

\textbf{Generated Fix:}

\textbf{Race Condition Fix in Message Queue}
\begin{lstlisting}[language=Java, basicstyle=\tiny\ttfamily, breaklines=true, xleftmargin=5pt, xrightmargin=3pt, linewidth=0.92\columnwidth, aboveskip=0.5em, belowskip=0.5em]
// MessageProcessor.java
private void processMessage(Message msg) {
  try {
    handler.process(msg);
-   connectionPool.returnConnection(conn);
-   msg.acknowledge();
+   // Fix: Acknowledge before returning connection
+   msg.acknowledge();
+   connectionPool.returnConnection(conn);
  } catch (Exception e) {
+   // Ensure connection returned even on error
+   connectionPool.returnConnection(conn);
    msg.nack();
    throw e;
  }
}
\end{lstlisting}

\textbf{Results:} Load test with 10M messages showed 0

\subsubsection{Complete Debugging Trace: Memory Persistence Example}

To demonstrate Chronos-1's full debugging workflow and memory retention, we present a complete trace from our evaluation:

\begin{tcolorbox}[
    colback=gray!5,
    colframe=black!75,
    boxrule=0.8pt,
    arc=2pt,
    left=3pt,
    right=3pt,
    top=3pt,
    bottom=3pt,
    title={\textbf{Complete Debugging Trace: API Rate Limiting Bug}},
    fonttitle=\bfseries\large,
    coltitle=black,
    colbacktitle=gray!20
]

\textbf{1. Original Bug Report:}
\begin{quote}
\textit{Bug Report \#3847:} API returns 429 errors even when under rate limit. User reports: "Getting rate limited at 50 req/min, limit is 100."
Environment: Production, Node.js 18, Redis 7.0
\end{quote}

\textbf{2. Chronos-1 Initial Context Retrieval (178ms):}
\begin{itemize}
    \item \texttt{RateLimiter.js} - Primary implementation
    \item \texttt{config/limits.json} - Rate limit configurations  
    \item \texttt{tests/rate-limiter.test.js} - Existing test cases
    \item Commit \texttt{8f3a2b1}: "Implement sliding window rate limiting"
    \item Previous fix \texttt{2c4d5e9}: Similar Redis timing issue (from PDM)
\end{itemize}

\textbf{3. AGR Expansion (k=1 $\rightarrow$ k=3):}
\begin{itemize}
    \item k=1: Found Redis connection pool (3 files, confidence: 42\%)
    \item k=2: Discovered timestamp precision issue (7 files, 71\%)
    \item k=3: Located distributed system clock skew (12 files, 89\%)
\end{itemize}

\hspace{0.5em}\textbf{4. Generated Fix Hypothesis:}
\begin{lstlisting}[language=Java, basicstyle=\tiny\ttfamily, breaklines=true, xleftmargin=5pt, xrightmargin=3pt, linewidth=0.92\columnwidth, aboveskip=0.5em, belowskip=0.5em, keywordstyle=\color{blue}\bfseries, commentstyle=\color{gray}, stringstyle=\color{green!60!black}]
// Issue: Redis TTL using seconds, JS timestamps in milliseconds
// Causing premature key expiration in sliding window
const key = `rate:${userId}:${Math.floor(Date.now() / 1000)}`;
// Changed to maintain millisecond precision:
const key = `rate:${userId}:${Date.now()}`;
const ttl = Math.ceil(windowSizeMs / 1000); // Proper TTL conversion
\end{lstlisting}

\vspace{0.5em}
\hspace{0.5em}\textbf{5. Test Loop Execution (3 iterations):}
\begin{itemize}
    \item Iteration 1: Fix failed - TTL still incorrect
    \item Iteration 2: Adjusted Redis EXPIRE command
    \item Iteration 3: All tests pass, including new edge cases
\end{itemize}

\vspace{0.5em}
\textbf{6. Memory Update (PDM Storage):}
\begin{lstlisting}[language=Java, basicstyle=\tiny\ttfamily, breaklines=true, xleftmargin=5pt, xrightmargin=3pt, linewidth=0.92\columnwidth, aboveskip=0.5em, belowskip=0.5em, keywordstyle=\color{blue}\bfseries, commentstyle=\color{gray}, stringstyle=\color{green!60!black}]
{
  "pattern": "Redis timestamp precision mismatch",
  "context": ["RateLimiter.js", "Redis.js", "config/limits.json"],
  "fix": "Convert JS ms to Redis seconds with Math.ceil",
  "tags": ["redis", "rate-limiting", "timestamp", "distributed"],
  "confidence": 0.92,
  "reuseCount": 0  // will increment on similar bugs
}
\end{lstlisting}

\vspace{0.5em}
\textbf{7. Later Bug (2 weeks after):}
\begin{quote}
\textit{Issue \#3912:} Session expiry happening too early\\
PDM Match: 87\% similarity to "Redis timestamp precision"\\
Chronos-1: Retrieved previous fix, adapted to session context\\
Time to Fix: 47 seconds (vs 3.2 minutes for original)
\end{quote}

This trace demonstrates how PDM enables Chronos-1 to learn from past debugging sessions, dramatically reducing fix time for similar issues.

\end{tcolorbox}

\subsection{Performance at Scale: Million-Token Context Evaluation}

Even models with extended context windows fail at real debugging tasks due to fundamental architectural limitations:

\begin{table}[H]
\centering
\resizebox{\columnwidth}{!}{%
\begin{tabular}{lccccc}
\hline
\textbf{Model} & \textbf{Context Size} & \textbf{Cross-File Bugs} & \textbf{Historical Bugs} & \textbf{Complex Traces} & \textbf{Avg. Success} \\
\hline
Claude 4.5 Opus & 200K tokens & 19.3±2.0\% & 10.4±1.4\% & 17.2±1.7\% & 15.6±1.6\% \\
Gemini 3 Pro & 1M tokens & 21.8±1.9\% & 11.7±1.3\% & 19.1±1.6\% & 17.5±1.5\% \\
GPT-4.1 & 128K tokens & 15.7±2.2\% & 7.8±1.6\% & 13.2±1.9\% & 12.2±1.8\% \\
Claude 4.1 Opus & 200K tokens & 16.2±2.1\% & 8.3±1.5\% & 14.1±1.9\% & 12.9±1.8\% \\
DeepSeek V3 & 128K tokens & 13.5±2.2\% & 6.9±1.6\% & 11.8±2.0\% & 10.7±1.9\% \\
Gemini 2.5 Pro & 1M tokens & 17.1±2.1\% & 9.2±1.5\% & 15.3±1.8\% & 13.9±1.7\% \\
\textbf{Chronos-1} & \textbf{Unlimited*} & \textbf{71.2±1.8\%}*** & \textbf{68.9±2.0\%}*** & \textbf{74.3±1.6\%}*** & \textbf{71.5±1.8\%}*** \\
\hline
\multicolumn{6}{l}{\footnotesize *Via AGR. ***p $< 0.001$ compared to Gemini 2.5 Pro (paired t-test, n=100 tasks per category)}
\end{tabular}%
}
\caption{Performance on debugging tasks requiring extensive context. Despite models having up to 1M tokens, intelligent retrieval and debug-specific training enable Chronos-1 to achieve 5x better performance.}
\label{tab:long-context}
\end{table}

The results demonstrate that raw context size alone cannot solve debugging. Chronos-1's intelligent retrieval, persistent memory, and debug-specific training enable it to outperform even million-token models by over 5x.

\subsection{Detailed Performance Analysis}

We further analyze Chronos-1's performance across different bug categories and complexity levels:

\begin{table}[H]
\centering
\resizebox{\columnwidth}{!}{%
\begin{tabular}{lcccccc}
\hline
\textbf{Bug Category} & \textbf{Syntax} & \textbf{Logic} & \textbf{Concurrency} & \textbf{Memory} & \textbf{API} & \textbf{Performance} \\
\hline
Claude 4.5 Opus & 93.8\% & 22.4\% & 8.1\% & 11.3\% & 31.2\% & 15.8\% \\
Gemini 3 Pro & 93.1\% & 21.6\% & 7.7\% & 10.8\% & 29.7\% & 14.9\% \\
GPT-4.1 & 88.7\% & 17.3\% & 5.8\% & 8.2\% & 24.6\% & 11.3\% \\
Claude 4.1 Opus & 91.2\% & 18.9\% & 6.3\% & 9.1\% & 26.8\% & 12.7\% \\
DeepSeek V3 & 87.9\% & 16.2\% & 4.9\% & 7.6\% & 23.1\% & 10.5\% \\
Qwen2.5-32B & 86.8\% & 15.7\% & 4.5\% & 7.2\% & 22.3\% & 9.8\% \\
Gemini 2.5 Pro & 89.3\% & 17.8\% & 5.7\% & 8.7\% & 25.3\% & 11.9\% \\
\textbf{Chronos-1} & \textbf{94.2\%} & \textbf{72.8\%} & \textbf{58.3\%} & \textbf{61.7\%} & \textbf{79.1\%} & \textbf{65.4\%} \\
\hline
\end{tabular}%
}
\caption{Success rates by bug category. While frontier models show incremental improvements over their predecessors, Chronos-1 demonstrates 3-10x better performance on complex bug types through debug-specific training.}
\label{tab:bug-categories}
\end{table}

\begin{table}[H]
\centering
\resizebox{\columnwidth}{!}{%
\begin{tabular}{lccccc}
\hline
\textbf{Repository Size} & \textbf{$<10K$ LOC} & \textbf{10K-100K} & \textbf{100K-1M} & \textbf{$>1M$ LOC} \\
\hline
Claude 4.5 Opus & 27.8\% & 21.2\% & 14.3\% & 6.8\% \\
Gemini 3 Pro & 29.3\% & 22.7\% & 15.8\% & 7.4\% \\
GPT-4.1 & 18.5\% & 13.2\% & 7.8\% & 3.1\% \\
Claude 4.1 Opus & 21.7\% & 15.8\% & 9.2\% & 4.3\% \\
DeepSeek V3 & 19.2\% & 13.9\% & 8.1\% & 3.5\% \\
Gemini 2.5 Pro & 24.1\% & 17.3\% & 11.5\% & 5.2\% \\
\textbf{Chronos-1} & \textbf{71.2\%} & \textbf{68.9\%} & \textbf{64.3\%} & \textbf{59.7\%} \\
\hline
\end{tabular}%
}
\caption{Debugging success rates by repository size, demonstrating Chronos-1's scalability.}
\label{tab:repo-scale}
\end{table}

\subsection{Multi-Code Association Retrieval Performance}

We evaluate Chronos-1's ability to retrieve and associate multiple code artifacts for debugging:

\begin{table}[H]
\centering
\resizebox{\columnwidth}{!}{%
\begin{tabular}{lccc}
\hline
\textbf{Retrieval Task} & \textbf{Precision} & \textbf{Recall} & \textbf{F1 Score} \\
\hline
Variable Tracing & 92.3±1.4\% & 89.7±1.6\% & 91.0±1.2\% \\
Cross-File Dependencies & 88.9±1.8\% & 91.2±1.5\% & 90.0±1.4\% \\
Historical Bug Patterns & 94.1±1.1\% & 87.3±2.0\% & 90.6±1.3\% \\
Test-Code Mapping & 91.7±1.3\% & 93.5±1.2\% & 92.6±1.0\% \\
Documentation Links & 85.4±2.1\% & 88.9±1.9\% & 87.1±1.7\% \\
\hline
\textbf{Average} & \textbf{90.5±0.8\%} & \textbf{90.1±0.9\%} & \textbf{90.3±0.7\%} \\
\hline
\end{tabular}%
}
\caption{Multi-code association retrieval performance across different debugging contexts.}
\label{tab:multi-code-retrieval}
\end{table}

\subsection{Efficiency Metrics: Cost, Latency, and Resource Usage}

A critical consideration for production deployment is computational efficiency. We analyze Chronos-1's performance characteristics compared to baselines and human debugging:

\begin{table}[H]
\centering
\resizebox{\columnwidth}{!}{%
\begin{tabular}{lccccc}
\hline
\textbf{Metric} & \textbf{GPT-4.1} & \textbf{Claude 4.1 Opus} & \textbf{Gemini 2.5 Pro} & \textbf{Chronos-1} & \textbf{Human Dev} \\
\hline
Avg. Time to Fix & 68.5s & 64.2s & 59.8s & 134.7s & 2.4 hours \\
Context Window & 128K tokens & 200K tokens & 1M tokens & Unlimited* & N/A \\
Cost per Bug & \$0.52 & \$0.58 & \$0.72 & \$0.89 & \$180 \\
Success Rate & 13.8\% & 14.2\% & 13.9\% & 65.3\% & 94.2\% \\
Effective Cost* & \$3.77 & \$4.08 & \$5.18 & \$1.36 & \$191 \\
\hline
\multicolumn{6}{l}{\footnotesize *Unlimited via dynamic retrieval; Effective cost = Cost per bug / Success rate}
\end{tabular}%
}
\caption{Computational efficiency and cost analysis. Despite higher per-attempt cost, Chronos-1's high success rate yields lowest effective cost.}
\label{tab:efficiency}
\end{table}

\subsubsection{Inference Time Breakdown}

Chronos-1's 134.7s average debugging time consists of:
\begin{itemize}
    \item Context Retrieval: 23.4s (17.4\%)
    \item Multi-round Reasoning: 67.8s (50.3\%)
    \item Test Execution: 31.2s (23.2\%)
    \item Memory Update: 12.3s (9.1\%)
\end{itemize}

\subsubsection{Return on Investment Analysis}

For a typical enterprise with 100 developers:
\begin{itemize}
    \item Annual debugging time: ~150,000 hours
    \item Chronos-1 automation potential: 65.3\% × 150,000 = 97,950 hours
    \item Cost savings: 97,950 × \$90/hour - deployment costs = \$8.1M annually
    \item ROI: 47:1 in first year, accounting for infrastructure and licensing
\end{itemize}

\subsubsection{Detailed Inference Cost and Latency Analysis}

We conducted comprehensive benchmarking of inference costs and latency across different bug complexity levels:

\begin{table}[H]
\centering
\caption{Inference cost and latency breakdown by bug complexity and system}
\label{tab:inference-cost-latency}
\resizebox{\columnwidth}{!}{%
\begin{tabular}{lcccccc}
\hline
\textbf{System} & \multicolumn{2}{c}{\textbf{Simple Bugs}} & \multicolumn{2}{c}{\textbf{Medium Bugs}} & \multicolumn{2}{c}{\textbf{Complex Bugs}} \\
 & Cost (\$) & Latency (s) & Cost (\$) & Latency (s) & Cost (\$) & Latency (s) \\
\hline
Claude 4.5 Opus & 0.24 & 13.8 & 0.63 & 61.2 & 1.52 & 192.7 \\
Gemini 3 Pro & 0.28 & 10.2 & 0.68 & 55.4 & 1.67 & 168.3 \\
GPT-4.1 (128K) & 0.18 & 12.3 & 0.52 & 68.5 & 1.24 & 187.2 \\
Claude 4.1 Opus & 0.21 & 14.7 & 0.58 & 64.2 & 1.41 & 201.3 \\
Gemini 2.0 Pro & 0.31 & 8.9 & 0.72 & 59.8 & 1.93 & 156.4 \\
Amazon Q Dev & 0.42 & 21.3 & 0.91 & 94.7 & 2.14 & 234.6 \\
\hline
Chronos-1 (Ours) & 0.34 & 31.2 & 0.89 & 134.7 & 1.78 & 298.4 \\
- Retrieval only & 0.08 & 7.8 & 0.19 & 23.4 & 0.41 & 52.3 \\
- LLM inference & 0.21 & 18.9 & 0.56 & 67.8 & 1.12 & 156.8 \\
- Test execution & 0.03 & 3.1 & 0.09 & 31.2 & 0.18 & 67.2 \\
- Memory update & 0.02 & 1.4 & 0.05 & 12.3 & 0.07 & 22.1 \\
\hline
\multicolumn{7}{l}{\textit{Cost-Effectiveness Analysis (Cost / Success Rate)}} \\
\hline
Claude 4.5 Opus & 1.40 & - & 3.68 & - & 8.89 & - \\
Gemini 3 Pro & 1.71 & - & 4.15 & - & 10.18 & - \\
GPT-4.1 & 1.30 & - & 3.77 & - & 8.99 & - \\
Claude 4.1 Opus & 1.48 & - & 4.08 & - & 9.93 & - \\
Gemini 2.0 Pro & 2.23 & - & 5.18 & - & 13.88 & - \\
Amazon Q Dev & 0.86 & - & 1.86 & - & 4.37 & - \\
Chronos-1 & \textbf{0.52} & - & \textbf{1.36} & - & \textbf{2.73} & - \\
\hline
\end{tabular}%
}
\end{table}

Key observations:
\begin{itemize}
    \item \textbf{Latency vs Accuracy Trade-off}: Chronos-1's higher latency (2-3x) is offset by 4-5x better success rates
    \item \textbf{Cost Scaling}: Complex bugs show sublinear cost increase (5.2x) despite exponential difficulty
    \item \textbf{Component Distribution}: LLM inference dominates cost (63\%) while retrieval dominates initial latency
    \item \textbf{Parallelization Opportunity}: Test execution (23\% of latency) can be parallelized for 1.3x speedup
\end{itemize}

\FloatBarrier
\section{ANALYSIS: WHY CHRONOS SUCCEEDS WHERE OTHERS FAIL}

\subsection{The Debugging Specialization Hypothesis: Evidence and Implications}

The stark performance gap between Chronos-1 (65.3\% debug success) and state-of-the-art general models ($\leq$15\%) reveals a fundamental truth: debugging is not simply an extension of code generation. While models like Claude 4.1 Opus (72.5\% on SWE-bench) and Claude 4.5 Sonnet (72.7\%) and GPT-4.1 (54.6\%) excel at writing new code, debugging demands distinct capabilities:

\begin{enumerate}
    \item \textbf{Temporal Reasoning}: Understanding how code evolved through commits, why certain patterns were introduced, and which changes correlate with bug emergence. General models lack this historical perspective.
    
    \item \textbf{Multi-Modal Signal Integration}: Debugging requires synthesizing error traces, logs, test failures, and code, a fundamentally different task than generating syntactically correct code from specifications.
    
    \item \textbf{Iterative Hypothesis Testing}: The debugging loop of propose$\rightarrow$test$\rightarrow$analyze$\rightarrow$refine cannot be learned through next-token prediction alone. Chronos-1's reinforcement learning from test execution feedback creates this capability.
    
    \item \textbf{Persistent Pattern Memory}: Bugs often recur in variations. Without persistent memory of past fixes and anti-patterns, models repeat mistakes, explaining why session-based tools achieve $<10\%$ success.
\end{enumerate}

\subsection{Key Architectural Insights: Memory, Iteration, and Context}

Our evaluation highlights several domains where Chronos-1 delivers outsized impact compared to prior systems:
\begin{itemize}
    \item \textbf{Holistic Bug Localization:} Chronos-1 traces complex error origins across modules, commits, and documentation with no manual guidance, routinely identifying root causes overlooked by token-limited models.
    \item \textbf{Autonomous Debugging Loops:} Chronos-1 adapts its retrieval and patching behavior over multiple test cycles, integrating failed test feedback and reviewer commentary to iteratively refine solutions.
    \item \textbf{Continuous Knowledge Incorporation:} By feeding CI/CD, reviewer, and test feedback into persistent memory, Chronos-1improves its project-specific performance over time, exhibiting lower repeated error rates and faster adaptation to new code patterns.
\end{itemize}

\begin{table}[H]
\centering
\resizebox{\columnwidth}{!}{%
\begin{tabular}{l|l|l|p{5.5cm}}
\hline
\textbf{Bug Scenario} & \textbf{GPT-4.1} & \textbf{Chronos-1} & \textbf{Chronos-1 Resolution Path} \\
\hline
Test Failure on \texttt{user\_auth} & Incorrect var patch & Full fix & Traced import drift $\rightarrow$ found stale config $\rightarrow$ auto-fix and doc update \\
\hline
API Deprecation & Missed call-site & Full fix & Multi-code association retrieved usage in 3 files, migrated all refs \\
\hline
Intermittent CI Error & Flaky retry logic & Full fix & Ingested CI logs, patched async boundary, added test case and explanation \\
\hline
\end{tabular}%
}
\caption{Qualitative examples where Chronos-1 successfully applies multi-code context to resolve debugging tasks beyond the reach of baseline LLMs.}
\label{tab:qualitative}
\end{table}

\subsection{Ablation Studies}

To isolate the contribution of core design features, we perform targeted ablations on our internal debugging benchmarks (MRR, DebugBench). \textbf{Important:} These ablations demonstrate component contributions under normal operation. For SWE-bench Lite evaluation, Chronos-1operated under strict compliance with the submission checklist:
\begin{itemize}
    \item \textbf{Pass@1:} Single attempt per instance (no retry logic, fallback planners, or refinement cycles)
    \item \textbf{No Test Knowledge:} PASS\_TO\_PASS and FAIL\_TO\_PASS signals disabled
    \item \textbf{No Hints:} Only raw issue descriptions used
    \item \textbf{No Web Browsing:} Fully offline sandbox with no external lookups
\end{itemize}

Ablation results on internal benchmarks:
\begin{itemize}
    \item \textbf{No Multi-Code Association:} When Chronos-1 is restricted to single-chunk retrieval, debug success falls by 45\% and retrieval precision drops sharply, mirroring the limitations of prior RAG pipelines.
    \item \textbf{Static Memory Only:} If the live feedback/memory update mechanism is ablated (i.e., only static embeddings used), adaptivity stagnates, and repeated bug classes recur more often.
    \item \textbf{No Orchestration Loop:} Disabling the validate-retrieve-update workflow reverts performance to basic code suggestion with higher error rate and longer time-to-fix.
\end{itemize}

\begin{table}[H]
\centering
\caption{Ablation study on internal debugging benchmarks showing contribution of each component. \checkmark{} indicates component enabled. \textit{Note: Test Loop was disabled for SWE-bench Lite to comply with pass@1 requirements.}}
\label{tab:ablation-components}
\resizebox{\columnwidth}{!}{%
\begin{tabular}{lccccc}
\hline
\textbf{Model Variant} & \textbf{Memory} & \textbf{AGR} & \textbf{Test Loop} & \textbf{Fix Rate↑} & \textbf{Bug Loc.↑} \\
\hline
Full Chronos-1 & \checkmark & \checkmark & \checkmark & \textbf{65.3\%} & \textbf{87.2\%} \\
\hline
w/o Memory & $\times$ & \checkmark & \checkmark & 48.1\% (-17.2) & 71.5\% (-15.7) \\
w/o AGR & \checkmark & $\times$ & \checkmark & 42.6\% (-22.7) & 63.8\% (-23.4) \\
w/o Test Loop & \checkmark & \checkmark & $\times$ & 51.7\% (-13.6) & 79.3\% (-7.9) \\
\hline
w/o Memory+AGR & $\times$ & $\times$ & \checkmark & 31.2\% (-34.1) & 52.1\% (-35.1) \\
w/o Memory+Test & $\times$ & \checkmark & $\times$ & 35.8\% (-29.5) & 58.7\% (-28.5) \\
w/o AGR+Test & \checkmark & $\times$ & $\times$ & 33.4\% (-31.9) & 55.2\% (-32.0) \\
\hline
Baseline (None) & $\times$ & $\times$ & $\times$ & 22.1\% (-43.2) & 41.3\% (-45.9) \\
\hline
\end{tabular}%
}
\end{table}

\begin{figure}[H]
    \centering
    \begin{tikzpicture}
        \begin{axis}[
            ybar,
            bar width=15pt,
            width=\columnwidth,
            height=5cm,
            ylabel={Debug Success (\%)},
            xlabel={Configuration},
            xtick=data,
            xticklabels={{Full},{No MCA},{Static},{No Orch}},
            x tick label style={font=\footnotesize},
            y tick label style={font=\footnotesize},
            ylabel style={font=\footnotesize},
            xlabel style={font=\footnotesize},
            xtick align=inside,
            major x tick style = transparent,
            enlarge x limits=0.25,
            ymin=0, ymax=100,
            ymajorgrids=true,
            grid style=dashed,
            nodes near coords,
            nodes near coords align={vertical},
            every node near coord/.append style={font=\tiny},
        ]
        \addplot[fill=blue!60] coordinates {(0,90) (1,49) (2,62) (3,55)};
        \end{axis}
    \end{tikzpicture}
    \caption{Ablation analysis on internal benchmarks: Debugging success rate with each Chronos-1 core component removed (lower is worse). For SWE-bench Lite, only single-pass inference was used per submission requirements.}
    \label{fig:ablation}
\end{figure}

These findings underscore the essential synergy between deep memory, multi-code contextualization, and autonomous workflow orchestration for effective debugging and adaptive code maintenance.

\subsection{Failure Mode Analysis: Understanding Chronos-1's Limitations}

Despite Chronos-1's strong performance, our analysis reveals specific failure modes and bug categories where the system struggles:

\subsubsection{Common Failure Modes}

\begin{enumerate}
    \item \textbf{Hardware-Dependent Bugs:} Chronos-1 achieves only 23.4\% success on bugs requiring hardware-specific knowledge (e.g., GPU memory alignment, embedded system timing). Example failure:
    \begin{quote}
    \small
    \textit{Bug:} CUDA kernel crashes with unaligned memory access on Tesla V100\\
    \textit{Chronos-1 Fix:} Added boundary checks (incorrect)\\
    \textit{Correct Fix:} Aligned memory allocation to 128-byte boundaries
    \end{quote}
    
    \item \textbf{Distributed System Race Conditions:} Complex timing-dependent bugs across multiple services show 31.2\% success rate. The model struggles to reason about non-deterministic execution orders across network boundaries.
    
    \item \textbf{Domain-Specific Logic Errors:} Bugs requiring deep domain knowledge (medical, financial regulations) succeed only 28.7\% of the time. Example:
    \begin{quote}
    \small
    \textit{Bug:} HIPAA compliance violation in patient data export\\
    \textit{Issue:} Chronos-1 lacks healthcare regulatory knowledge
    \end{quote}
\end{enumerate}

\subsubsection{Edge Cases and Limitations}

\begin{itemize}
    \item \textbf{Extremely Large Monorepos ($>10M$ LOC):} Performance degrades to 45.3\% success rate due to retrieval precision issues
    \item \textbf{Legacy Code with Poor Documentation:} Success drops to 38.9\% when code lacks comments and uses cryptic variable names
    \item \textbf{Multi-Language Polyglot Systems:} Cross-language bugs (e.g., Python calling Rust via FFI) show only 41.2\% success
    \item \textbf{UI/UX Bugs:} Visual rendering issues essentially unsolvable (8.3\% success) without screenshot analysis
\end{itemize}

\textbf{Concrete Failure Case Study:} Chronos-1 struggled with a TypeScript monorepo using Lerna, dynamic imports, and cross-package configuration resolution. Despite correctly retrieving all relevant files across packages, PDM lacked adequate context to resolve type scope ambiguity when multiple packages exported identically-named interfaces. The fix attempted to import from the wrong package, causing circular dependencies. This failure mode highlights the need for semantic-aware static linking and better understanding of module resolution strategies in complex build systems. Future work should incorporate build tool semantics directly into the memory layer.

\begin{table}[H]
\centering
\begin{tabular}{lcc}
\hline
\textbf{Bug Category} & \textbf{Success Rate} & \textbf{Primary Failure Reason} \\
\hline
Hardware-Specific & 23.4±3.2\% & Lacks hardware specs \\
Distributed Race & 31.2±2.8\% & Non-deterministic timing \\
Domain Logic & 28.7±3.1\% & Missing domain knowledge \\
Legacy Code & 38.9±2.9\% & Poor code quality \\
Cross-Language & 41.2±2.7\% & FFI complexity \\
UI/Visual & 8.3±1.9\% & No visual understanding \\
\hline
\end{tabular}
\caption{Chronos-1 performance on challenging bug categories.}
\label{tab:failure-analysis}
\end{table}

\FloatBarrier
\FloatBarrier

\subsection{Adversarial Evaluation}

To assess Chronos-1's robustness against malicious inputs and poisoned training data, we conduct comprehensive adversarial testing:

\subsubsection{Attack Methodology}

We evaluate three categories of adversarial attacks:

\textbf{1. Input Perturbation Attacks:}
\begin{itemize}
    \item \textbf{Prompt Injection:} Inserting malicious instructions in bug descriptions
    \item \textbf{Code Obfuscation:} Deliberately confusing variable names and control flow
    \item \textbf{Misleading Comments:} Comments that contradict actual code behavior
\end{itemize}

\textbf{2. Data Poisoning Attacks:}
\begin{itemize}
    \item \textbf{Backdoor Triggers:} Specific code patterns that trigger incorrect fixes
    \item \textbf{Label Flipping:} Training examples with intentionally wrong fixes
    \item \textbf{Gradient Attacks:} Adversarial examples crafted to maximize loss
\end{itemize}

\textbf{3. Retrieval Manipulation:}
\begin{itemize}
    \item \textbf{Context Stuffing:} Flooding retrieval with irrelevant but similar code
    \item \textbf{Dependency Confusion:} Creating fake dependencies to mislead AGR
    \item \textbf{Temporal Attacks:} Manipulating commit timestamps to affect retrieval
\end{itemize}

\subsubsection{Robustness Results}

\begin{table}[H]
\centering
\caption{Adversarial robustness evaluation on 1,000 attack samples per category}
\label{tab:adversarial}
\resizebox{\columnwidth}{!}{%
\begin{tabular}{llccc}
\hline
\textbf{Category} & \textbf{Attack Type} & \textbf{Success} & \textbf{Detection} & \textbf{Mitigation} \\
& & \textbf{Rate} & \textbf{Rate} & \\
\hline
\textit{Input} & Prompt Injection & 12.3\% & 87.7\% & Input sanitization \\
\textit{Perturbation} & Code Obfuscation & 23.1\% & 76.9\% & AST normalization \\
& Misleading Comments & 8.7\% & 91.3\% & Code-comment align \\
\hline
\textit{Data} & Backdoor Triggers & 5.2\% & 94.8\% & Anomaly detection \\
\textit{Poisoning} & Label Flipping & 15.6\% & 84.4\% & Consistency checks \\
& Gradient Attacks & 19.3\% & 80.7\% & Gradient clipping \\
\hline
\textit{Retrieval} & Context Stuffing & 31.4\% & 68.6\% & Relevance filtering \\
\textit{Manipulation} & Dependency Confusion & 27.8\% & 72.2\% & Graph validation \\
& Temporal Attacks & 11.2\% & 88.8\% & Timestamp verify \\
\hline
\end{tabular}%
}
\end{table}

\textbf{Key Findings:}
\begin{itemize}
    \item AGR's graph structure provides natural defense against context stuffing (68.6\% detection)
    \item PDM's pattern matching detects 94.8\% of backdoor triggers through anomaly scores
    \item Temporal attacks are largely ineffective due to multi-signal validation
\end{itemize}

\subsection{Scalability Analysis}

We evaluate Chronos-1's performance across codebases ranging from 1K to 100M lines of code:

\subsubsection{Performance Metrics by Scale}

\begin{table}[H]
\centering
\caption{Scalability analysis across different codebase sizes}
\label{tab:scalability}
\resizebox{\columnwidth}{!}{%
\begin{tabular}{@{}lccccc@{}}
\toprule
\textbf{Codebase Size} & \textbf{Debug} & \textbf{Avg Response} & \textbf{Memory} & \textbf{Graph Build} & \textbf{Index} \\
\textbf{(LOC)} & \textbf{Success} & \textbf{Time (s)} & \textbf{Usage (GB)} & \textbf{Time (min)} & \textbf{Size (GB)} \\
\midrule
1K-10K & 91.2\% & 2.3 & 0.5 & 0.1 & 0.01 \\
10K-100K & 89.7\% & 4.7 & 2.1 & 1.2 & 0.08 \\
100K-1M & 87.1\% & 8.9 & 8.7 & 12.4 & 0.92 \\
1M-10M & 82.3\% & 18.2 & 31.5 & 87.3 & 9.7 \\
10M-100M & 73.4\% & 45.7 & 128.3 & 512.8 & 98.2 \\
\bottomrule
\end{tabular}%
}
\end{table}

\subsubsection{Optimization Strategies at Scale}

\begin{enumerate}
\item \textbf{Incremental Graph Updates:} For codebases $>1M$ LOC, we implement differential graph updates that process only changed files, reducing update time by 87\%.

\item \textbf{Hierarchical Indexing:} Multi-level indexes with module-level summaries enable sub-linear retrieval complexity O(log n) for repositories $>10M$ LOC.

\item \textbf{Distributed Processing:} Graph construction parallelizes across 32 cores, achieving near-linear speedup for codebases up to 50M LOC.

\item \textbf{Memory-Mapped Storage:} PDM uses memory-mapped files for patterns exceeding RAM, maintaining $<100$ms access latency.
\end{enumerate}

\subsubsection{Detailed Memory Consumption Analysis}

We profile memory usage across different components and operations:

\begin{table}[H]
\centering
\caption{Memory consumption breakdown by component (1M LOC codebase)}
\label{tab:memory-breakdown}
\resizebox{0.9\columnwidth}{!}{%
\begin{tabular}{lccc}
\hline
\textbf{Component} & \textbf{Peak Memory} & \textbf{\% of Total} & \textbf{Growth Rate} \\
& \textbf{(GB)} & & \\
\hline
AGR Graph Structure & 2.8 & 32.2\% & O(n log n) \\
PDM Pattern Storage & 1.9 & 21.8\% & O(m) \\
Embedding Cache & 1.6 & 18.4\% & O(n) \\
AST Representations & 1.2 & 13.8\% & O(n) \\
Temporal Indexes & 0.7 & 8.0\% & O(n log t) \\
Runtime Buffers & 0.5 & 5.8\% & O(1) \\
\hline
\textbf{Total} & \textbf{8.7} & \textbf{100\%} & - \\
\hline
\end{tabular}%
}
\end{table}

\textbf{Memory Optimization Strategies:}
\begin{itemize}
    \item \textbf{Lazy Loading:} Graph nodes loaded on-demand, reducing baseline memory by 65\%
    \item \textbf{Pattern Compression:} PDM patterns compressed using AST-aware encoding (3.2x compression ratio)
    \item \textbf{Cache Eviction:} LRU policy for embeddings with 90\% hit rate at 20\% memory cost
    \item \textbf{Incremental GC:} Custom garbage collection for graph traversal reduces peak memory by 40\%
\end{itemize}

\subsection{Human Evaluation Study}

We conducted a controlled study with N=50 professional developers to assess Chronos-1's real-world effectiveness:

\subsubsection{Study Design}

\textbf{Participants:} 50 developers (5-15 years experience) from 12 companies
\begin{itemize}
    \item 20 backend engineers (Java/Python)
    \item 15 full-stack developers (JavaScript/TypeScript)
    \item 10 infrastructure engineers (Go/Rust)
    \item 5 ML engineers (Python/C++)
\end{itemize}

\textbf{Tasks:} Each developer debugged 10 real production bugs:
\begin{itemize}
    \item 5 bugs with Chronos-1 assistance
    \item 5 bugs with their preferred tools (baseline)
    \item Randomized order to prevent learning effects
    \item Bugs selected from actual production incidents
\end{itemize}

\subsubsection{Quantitative Results}

\begin{table}[H]
\centering
\caption{Human evaluation results (N=50 developers, 500 debugging sessions)}
\label{tab:human-eval}
\resizebox{0.95\columnwidth}{!}{%
\begin{tabular}{@{}lcccc@{}}
\toprule
\textbf{Metric} & \textbf{With Chronos-1} & \textbf{Baseline} & \textbf{Improvement} & \textbf{p-value} \\
\midrule
Fix Success Rate & 87.2\% & 62.4\% & +39.7\% & $<0.001$ \\
Time to Fix (min) & 18.3±7.2 & 43.7±19.8 & -58.1\% & $<0.001$ \\
Code Quality Score & 8.7/10 & 7.2/10 & +20.8\% & $<0.01$ \\
Confidence Rating & 8.9/10 & 6.8/10 & +30.9\% & $<0.001$ \\
Would Use Again & 92\% & - & - & - \\
\bottomrule
\end{tabular}%
}
\end{table}

\subsubsection{Qualitative Feedback}

\textbf{Positive Themes:}
\begin{itemize}
    \item "AGR found cross-file dependencies I would have missed" (31/50 developers)
    \item "PDM patterns saved hours on recurring issues" (27/50)
    \item "Explanations helped me understand the fix" (42/50)
\end{itemize}

\textbf{Areas for Improvement:}
\begin{itemize}
    \item "Needs better IDE integration" (18/50)
    \item "Sometimes retrieves too much context" (12/50)
    \item "UI lag on very large codebases" (8/50)
\end{itemize}

\subsection{Fine-grained Ablations}

We conduct detailed ablations on individual AGR components using internal benchmarks (MRR, DebugBench) to understand their contributions. These results reflect full-capability operation on internal benchmarks.

\textbf{SWE-bench Lite Evaluation Constraints:} Chronos-1's 80.33\% score on SWE-bench Lite was achieved under strict submission requirements that disabled several capabilities shown in these ablations:
\begin{itemize}
    \item \textbf{Pass@1:} Single attempt per instance (no retry logic, fallback planners, or refinement cycles)
    \item \textbf{No Autonomous Loop:} Fix-test-refine iteration disabled; single-pass patch generation only
    \item \textbf{No PDM Cross-Session:} Persistent Debug Memory limited to static patterns (no learning from current session)
    \item \textbf{No Test Knowledge:} PASS\_TO\_PASS and FAIL\_TO\_PASS signals disabled
    \item \textbf{No Hints:} Only raw issue descriptions used
    \item \textbf{No Web Browsing:} Fully offline sandbox with no external lookups
\end{itemize}
The ablation results below show component contributions under full operation; SWE-bench Lite performance reflects constrained single-pass inference without the iterative refinement that provides an additional 15-20\% improvement in internal benchmarks.

\subsubsection{Component-wise Ablation Results}

\begin{table}[H]
\centering
\caption{Fine-grained ablation study on AGR components (MRR benchmark). \textit{Note: SWE-bench Lite's 80.33\% was achieved with test loops and refinement disabled per pass@1 requirements.}}
\label{tab:fine-ablation}
\footnotesize
\setlength{\tabcolsep}{3pt}
\begin{tabular}{@{}p{1.5cm}p{3.8cm}ccc@{}}
\hline
\textbf{Category} & \textbf{Configuration} & \textbf{Debug} & \textbf{Prec@3} & \textbf{Hops} \\
\hline
\textit{Baseline} & Full AGR System & \textbf{87.1\%} & \textbf{92.8\%} & 2.3 \\
\hline
\textit{Graph} & Remove AST edges & 71.2\% & 84.1\% & 3.1 \\
\textit{Construction} & Remove log flow edges & 78.9\% & 87.3\% & 2.7 \\
& Remove test trace edges & 73.4\% & 81.2\% & 3.8 \\
& Remove temporal weights & 82.3\% & 89.7\% & 2.5 \\
\hline
\textit{Traversal} & Fixed k=3 (no adaptive) & 69.8\% & 78.4\% & 3.0 \\
\textit{Strategy} & No confidence weighting & 74.1\% & 83.2\% & 4.2 \\
& BFS instead of guided & 61.3\% & 71.8\% & 5.7 \\
& No backtracking & 79.2\% & 88.1\% & 2.8 \\
\hline
\textit{Memory} & No PDM patterns & 72.8\% & 85.3\% & 3.2 \\
\textit{Integration} & No pattern ranking & 81.2\% & 90.1\% & 2.6 \\
& No temporal decay & 83.7\% & 91.2\% & 2.4 \\
\hline
\end{tabular}
\end{table}

\noindent\textbf{Critical Findings:}

\begin{enumerate}
    \item \textbf{AST edges are most critical:} Removing them causes 18.3\% performance drop
    \item \textbf{Adaptive depth is essential:} Fixed k=3 reduces success by 19.9\%
    \item \textbf{PDM integration provides 16.4\% boost:} Pattern memory significantly improves debugging
    \item \textbf{Guided traversal beats BFS by 29.6\%:} Intelligence in path selection matters
\end{enumerate}

\section{DESIGN RATIONALE AND THEORETICAL MOTIVATION FOR AGR}

The Adaptive Graph-Guided Retrieval (AGR) system addresses three fundamental limitations of vector-based retrieval in debugging contexts. First, real codebases exhibit structural dependencies that span multiple files, a bug in \texttt{AuthService.java} may stem from a configuration change in \texttt{config.yml} that affects \texttt{DatabasePool.java}, which then impacts authentication. Traditional top-k vector search fails here because these files share minimal textual similarity. AGR's multi-hop traversal follows actual code dependencies (imports, function calls, data flow), recovering the complete causal chain regardless of embedding distances. Second, local neighborhood propagation in AGR exploits the insight that code proximity (in the dependency graph) correlates strongly with debugging relevance. When a test fails in \texttt{PaymentTest.java}, the bug is exponentially more likely to be within 1-3 hops in the dependency graph than in a random file with high textual similarity. This locality principle, combined with confidence-weighted traversal, enables AGR to achieve 89\% precision where vector search achieves only 31\%. Third, temporal anchoring prevents the retrieval of stale code patterns in actively maintained repositories. By weighting edges based on commit recency and co-modification frequency, AGR naturally prioritizes current architectural patterns over deprecated ones, reducing false positives by 67\% in repositories with $>1000$ commits/month.

\begin{figure}[h]
\centering
\resizebox{0.9\columnwidth}{!}{%
\begin{tikzpicture}[
    node distance=1.5cm,
    file/.style={rectangle, draw, minimum width=2.5cm, minimum height=0.8cm, font=\small},
    bug/.style={rectangle, draw, rounded corners, minimum width=2.5cm, minimum height=0.8cm, font=\small},
    result/.style={rectangle, draw, rounded corners, minimum width=2.5cm, minimum height=0.8cm, font=\small}
]
    \node[bug, fill=red!20] (query1) at (0,0) {Bug: NPE in Export};
    \node[file, fill=gray!20] (vec1) at (0,-1.8) {ExportService.java};
    \node[file, fill=gray!20] (vec2) at (-2.2,-3.8) {ExportUtils.java};
    \node[file, fill=gray!20] (vec3) at (2.2,-3.8) {ExportConfig.java};
    \node[result, fill=red!40] (miss1) at (0,-5.4) {$\times$ Misses Auth};
    
    \draw[->, thick] (query1) -- (vec1) node[midway, right, font=\scriptsize] {sim=0.89};
    \draw[->, dashed] (vec1) -- (vec2) node[midway, left, font=\scriptsize] {sim=0.72};
    \draw[->, dashed] (vec1) -- (vec3) node[midway, right, font=\scriptsize] {sim=0.68};
    
    \node[font=\small\bfseries] at (0,-6.4) {Vector RAG: Top-3 Similar};
    
    \node[bug, fill=green!20] (query2) at (8.5,0) {Bug: NPE in Export};
    \node[file, fill=gray!20] (agr1) at (8.5,-1.8) {ExportService.java};
    \node[file, fill=yellow!30] (agr2) at (6.2,-3.8) {AuthService.java};
    \node[file, fill=yellow!30] (agr3) at (10.8,-3.8) {TokenCache.java};
    \node[result, fill=green!40] (found) at (8.5,-5.4) {\checkmark{} Found Root Cause};
    
    \draw[->, thick] (query2) -- (agr1) node[midway, right, font=\scriptsize] {hop=0};
    \draw[->, thick, blue] (agr1) -- (agr2) node[midway, above left, font=\scriptsize, xshift=-2pt] {calls};
    \draw[->, thick, blue] (agr2) -- (agr3) node[midway, below, font=\scriptsize, yshift=-2pt] {uses};
    \draw[->, thick, orange] (agr1) to[bend right=15] (agr3) node[pos=0.65, above, font=\scriptsize, yshift=12pt] {\shortstack{recent\\commit}};
    
    \node[font=\small\bfseries] at (8.5,-6.4) {AGR: Dependency Traversal};
    
    \draw[->, ultra thick, red] (3.2,-2.8) -- (5.2,-2.8) node[midway, above, font=\small\bfseries] {3x better};
\end{tikzpicture}%
}
\caption{AGR vs Vector RAG: While vector search retrieves textually similar files that miss the root cause, AGR follows actual code dependencies and temporal signals to locate the authentication bug causing the export failure.}
\label{fig:agr-vs-vector}
\end{figure}

The superiority of AGR demonstrated in Figure~\ref{fig:agr-vs-vector} is not merely empirical, it stems from fundamental theoretical properties that we now formalize.

\section{THEORETICAL GUARANTEES}

Having demonstrated AGR's practical superiority through extensive evaluation, we now provide theoretical foundations that explain \textit{why} these improvements are fundamental rather than incidental.

\subsection{Why AGR Converges Efficiently}

\begin{tcolorbox}[colback=blue!5,colframe=blue!40!black,title=Key Insight]
AGR achieves rapid convergence by exploiting code locality: bugs are typically within 3-5 hops of the error location in the dependency graph. This locality principle enables AGR to find relevant context in O(k log d) time with $>98\%$ probability.
\end{tcolorbox}

\textbf{Convergence Guarantee:} Given a code graph with maximum degree $d$, AGR converges to the optimal debugging context within $k$ iterations (typically k=5) with probability exceeding 98\%. 

\textbf{Practical Implications:}
\begin{itemize}
    \item \textbf{Bounded search:} AGR examines at most $O(d^k)$ nodes, preventing exponential blowup
    \item \textbf{Early termination:} 73\% of bugs are resolved within 3 hops
    \item \textbf{Predictable performance:} Response time scales logarithmically with codebase size
\end{itemize}

\subsection{Fix Correctness with PDM}

The Persistent Debug Memory provides statistical guarantees on fix quality:

\textbf{Pattern Matching Accuracy:} With $m$ stored bug patterns, the probability of generating a correct fix increases as:
$$P(\text{correct fix}) \geq 1 - e^{-m/1000}$$

This means with just 3,000 patterns, Chronos-1 achieves $>95\%$ fix accuracy on similar bugs.

\subsection{Why This Matters for Production Debugging}

Unlike theoretical debugging models, AGR's guarantees translate directly to production benefits:

\begin{table}[H]
\centering
\caption{Theoretical guarantees vs. real-world impact}
\begin{tabular}{lcc}
\hline
\textbf{Theoretical Property} & \textbf{Guarantee} & \textbf{Production Impact} \\
\hline
Convergence time & O(k log d) & $<10$s for 1M LOC \\
Fix accuracy & $>95\%$ with PDM & 3x fewer rollbacks \\
Memory usage & O(n log n) & Fits in 32GB RAM \\
False positive rate & $<5\%$ & Trustworthy fixes \\
\hline
\end{tabular}
\end{table}

\section{FAILURE ANALYSIS: DEEP DIVE INTO CHALLENGING CASES}

While our theoretical guarantees and empirical results demonstrate Chronos-1's effectiveness, understanding failure modes is crucial for both users and future improvements. We conduct systematic failure analysis on 2,500 debugging sessions where Chronos-1 failed to produce correct fixes:

\subsection{Failure Taxonomy}

\begin{table}[H]
\centering
\caption{Detailed failure analysis across 2,500 failed debugging sessions}
\label{tab:failure-taxonomy}
\resizebox{\columnwidth}{!}{%
\begin{tabular}{@{}llccl@{}}
\toprule
\textbf{Failure Category} & \textbf{Subcategory} & \textbf{Count} & \textbf{\% of Failures} & \textbf{Root Cause} \\
\midrule
\multirow{3}{*}{\parbox{3cm}{\textit{Retrieval Failures\\(42.3\% of total)}}} 
& Missing Context & 423 & 16.9\% & AGR depth limit reached \\
& Wrong Context & 312 & 12.5\% & Semantic aliasing confusion \\
& Stale Context & 323 & 12.9\% & Outdated cached patterns \\
\midrule
\multirow{3}{*}{\parbox{3cm}{\textit{Understanding Failures\\(31.2\% of total)}}} 
& Complex Logic & 287 & 11.5\% & Multi-step reasoning limit \\
& Domain Specific & 234 & 9.4\% & Missing specialized knowledge \\
& Implicit Behavior & 256 & 10.2\% & Undocumented assumptions \\
\midrule
\multirow{3}{*}{\parbox{3cm}{\textit{Generation Failures\\(26.5\% of total)}}} 
& Incorrect Fix & 298 & 11.9\% & Wrong root cause identified \\
& Partial Fix & 189 & 7.6\% & Incomplete solution \\
& Breaking Changes & 176 & 7.0\% & Side effects not considered \\
\bottomrule
\end{tabular}%
}
\end{table}

\subsection{Representative Failure Cases}

\subsubsection{Case 1: Distributed Consensus Bug}

\textbf{Bug Description:} Raft consensus implementation fails during leader election with 5+ nodes

\textbf{Failure Mode:} Chronos-1 retrieved all relevant consensus code but failed to identify the subtle race condition in vote counting across network partitions.

\begin{minipage}{\columnwidth}
\begin{lstlisting}[language=Java, basicstyle=\tiny\ttfamily, breaklines=true, xleftmargin=5pt, xrightmargin=3pt, linewidth=0.92\columnwidth, aboveskip=0.5em, belowskip=0.5em, caption={Distributed consensus bug: Chronos-1's incorrect fix}, label=lst:consensus-bug]
// Chronos-1 attempted fix (incorrect)
if (voteCount > clusterSize / 2) {
    becomeLeader(); // Missing: partition check
}

// Correct fix required understanding of:
// 1. Network partition detection
// 2. Split-brain prevention
// 3. Distributed state consistency
\end{lstlisting}
\end{minipage}

\textbf{Root Cause:} Lacks theoretical understanding of distributed systems safety properties and FLP impossibility theorem implications.

\subsubsection{Case 2: Memory Corruption in C++ Template Metaprogramming}

\textbf{Bug Description:} Segfault in recursive template instantiation with variadic parameters

\textbf{Failure Mode:} AGR correctly identified template definitions but PDM had no patterns for template instantiation depth issues.

\begin{minipage}{\columnwidth}
\begin{lstlisting}[caption={C++ template metaprogramming failure}, label={lst:template-bug}]
  // Complex template causing stack overflow
  template<typename... Args>
  struct TypeList {
      template<template<typename...> class F>
      using apply = F<Args...>;
  };
  // Chronos-1 missed: recursive instantiation depth
  // Required: SFINAE and constexpr evaluation
\end{lstlisting}
\end{minipage}

\textbf{Root Cause:} Template metaprogramming requires compile-time reasoning that exceeds current model capabilities.

\subsubsection{Case 3: Machine Learning Pipeline Data Leakage}

\textbf{Bug Description:} Validation accuracy drops 40\% in production despite 95\% test accuracy

\textbf{Failure Mode:} Chronos-1 identified standard preprocessing steps but missed subtle data leakage through feature normalization before train/test split.

\textbf{Root Cause:} Requires understanding of statistical independence and temporal data dependencies in ML pipelines.

\subsection{Failure Pattern Analysis}

\begin{figure}[H]
\centering
\resizebox{0.85\columnwidth}{!}{%
\begin{tikzpicture}
\begin{axis}[
    ybar,
    bar width=16pt,
    width=10cm,
    height=6.5cm,
    xlabel={Bug Complexity (1-10 scale)},
    ylabel={Failure Rate (\%)},
    ymin=0, ymax=105,
    xmin=0.5, xmax=10.5,
    xtick={1,2,3,4,5,6,7,8,9,10},
    xticklabel style={font=\footnotesize},
    yticklabel style={font=\footnotesize},
    xlabel style={font=\small},
    ylabel style={font=\small},
    ymajorgrids=true,
    grid style={dashed, gray!30},
    axis lines=left,
    clip=false,
    nodes near coords,
    nodes near coords style={font=\scriptsize},
    nodes near coords align={vertical}
]
\addplot[fill=blue!70, draw=blue!80] coordinates {
    (1,2.3) (2,5.1) (3,8.7) (4,15.2) (5,23.6) 
    (6,38.9) (7,52.1) (8,71.3) (9,84.7) (10,93.2)
};
\end{axis}
\end{tikzpicture}%
}
\caption{Failure rate increases exponentially with bug complexity. The system struggles with highly complex bugs (complexity 8+) that require cross-module understanding.}
\label{fig:failure-complexity}
\end{figure}

\subsection{Mitigation Strategies and Future Directions}

\textbf{1. Enhanced Retrieval for Edge Cases:}
\begin{itemize}
    \item Implement fallback to exhaustive search when confidence $< 0.3$
    \item Add cross-repository pattern matching for rare bugs
    \item Develop specialized retrievers for distributed/concurrent code
\end{itemize}

\textbf{2. Improved Understanding through Neuro-Symbolic Integration:}
\begin{itemize}
    \item Integrate theorem provers for correctness verification
    \item Add symbolic execution for path exploration
    \item Incorporate domain-specific reasoning modules
\end{itemize}

\textbf{3. Generation Safety Mechanisms:}
\begin{itemize}
    \item Mandatory test generation before fix application
    \item Rollback mechanisms for breaking changes
    \item Human-in-the-loop validation for critical systems
\end{itemize}

\begin{tcolorbox}[colback=green!5,colframe=green!40!black,title=The Bottom Line]
Despite these edge cases, Chronos-1 achieves \textbf{87.1\% debugging success} compared to 22.9\% for the next best system. Even in failure cases, Chronos-1provides actionable insights that accelerate manual debugging by 2.3x on average.
\end{tcolorbox}

\section{LIMITATIONS AND FUTURE WORK}

\subsection{Technical Limitations}

While Chronos-1 advances autonomous debugging capabilities, several technical constraints remain:

\begin{table}[H]
\centering
\caption{Summary of Technical Limitations}
\label{tab:limitations-summary}
\resizebox{\columnwidth}{!}{%
\begin{tabular}{llcc}
\toprule
\textbf{Limitation} & \textbf{Impact} & \textbf{Affected Scenarios} & \textbf{Mitigation} \\
\midrule
Extreme-scale latency & 5s+ retrieval time & Repos $>10M$ LOC & Parallel expansion \\
Memory cold start & 23\% lower success & New projects $<1K$ commits & Transfer learning \\
Non-determinism & 18.3\% variance & Distributed systems & Deterministic replay \\
Reasoning opacity & 10GB traces/bug & All debugging sessions & Selective logging \\
Memory coherence & 2.1\% conflicts & Concurrent instances & Eventual consistency \\
Dynamic languages & 41.2\% accuracy & Python/Ruby/JS & Runtime instrumentation \\
External dependencies & 31\% lower success & API/DB bugs & Mock generation \\
\bottomrule
\end{tabular}%
}
\end{table}

\textbf{Key Technical Constraints:}
\begin{itemize}
    \item \textbf{Extreme-Scale Context Latency:} O(k²n) complexity causes 5s+ retrieval times in 10M+ LOC repositories.
    
    \item \textbf{Memory Cold Start:} 23\% lower success on projects with $<1K$ commits until memory builds over 2-3 weeks.
    
    \item \textbf{Non-Determinism:} 18.3\% variance in distributed systems due to timing and resource contention.
    
    \item \textbf{Reasoning Transparency:} 10GB trace data per bug makes full interpretability infeasible.
    
    \item \textbf{Dynamic Languages:} 41.2\% accuracy on Python/Ruby/JS due to runtime metaprogramming.
    
    \item \textbf{External Dependencies:} 31\% lower success on API/database bugs without visibility into external states.
\end{itemize}

\subsubsection{Failure Mode Analysis}

\begin{table}[H]
\centering
\caption{Common failure modes, frequencies, and mitigation strategies.}
\label{tab:failure-modes}
\resizebox{\columnwidth}{!}{%
\begin{tabular}{llcc}
\toprule
\textbf{Failure Mode} & \textbf{Root Cause} & \textbf{Frequency} & \textbf{Mitigation Strategy} \\
\midrule
Incomplete fixes & Insufficient test coverage & 12.3\% & Automated test generation \\
Over-engineering & Historical pattern bias & 8.7\% & Confidence thresholding \\
Context overflow & Repository complexity & 6.2\% & Hierarchical retrieval \\
False positives & Ambiguous error messages & 4.8\% & Multi-source validation \\
Regression introduction & Side effect blindness & 3.1\% & Impact analysis expansion \\
\bottomrule
\end{tabular}%
}
\end{table}

\subsection{Research Directions: Extending the Debugging Paradigm}

Key research directions:
\begin{itemize}
    \item \textbf{Algorithmic Optimization:} Sub-quadratic retrieval for 10M+ LOC repositories
    \item \textbf{Visual Understanding:} Screenshot analysis for UI/UX debugging
    \item \textbf{Federated Learning:} Cross-organization bug patterns with privacy
    \item \textbf{Human-AI Collaboration:} Interactive debugging with feedback loops
    \item \textbf{Security Hardening:} Defense against adversarial attacks
\end{itemize}

\subsection{Deployment Architecture and Integration}

The proposed architecture extends beyond isolated debugging to comprehensive autonomous maintenance through a multi-tiered system design. The integration framework comprises:

\begin{itemize}
    \item \textbf{Continuous Monitoring Layer:} Real-time analysis of code quality metrics, security vulnerability patterns, and performance degradation indicators using static and dynamic analysis techniques
    \item \textbf{Automated Dependency Resolution:} Graph-based impact analysis for dependency updates with probabilistic risk assessment and automated rollback mechanisms
    \item \textbf{Self-Healing Pipeline Integration:} Event-driven architecture for autonomous incident response, incorporating validated patches into existing CI/CD workflows
    \item \textbf{Knowledge Synthesis Module:} Automated extraction and formalization of implicit domain knowledge through documentation generation and code pattern analysis
\end{itemize}

Our empirical studies indicate that such integrated deployment can reduce mean time to resolution (MTTR) by 67\% while maintaining a false positive rate below 3\%. Field trials with industry partners are ongoing to validate these findings at scale.

\subsection{Broader Impact}

By enabling persistently self-healing and context-aware code maintenance, Chronos-1aims to shift an industry paradigm: reducing human toil and repetitive bug resolution, freeing engineers to focus on architecture, innovation, and user demands. As we scale deployment, it is crucial to steward responsible AI governance, data privacy, and an inclusive transition for developer workforces worldwide.


\FloatBarrier

\begin{table}[H]
\centering
\caption{End-to-end debugging pipeline efficiency metrics}
\label{tab:pipeline-efficiency}
\resizebox{\columnwidth}{!}{%
\begin{tabular}{lccccc}
\hline
\textbf{System} & \textbf{Avg. Time to Fix} & \textbf{Iterations} & \textbf{Token Usage} & \textbf{Cost per Bug} & \textbf{Success Rate} \\
\hline
Chronos-1 & \textbf{4.2 min} & \textbf{2.2} & \textbf{31.2K} & \textbf{\$0.18} & \textbf{65.3\%} \\
Claude 4.5 Opus + RAG & 16.4 min & 5.2 & 128.7K & \$2.57 & 17.1\% \\
Gemini 3 Pro + RAG & 15.8 min & 4.9 & 121.3K & \$1.82 & 16.4\% \\
Claude 4.1 Opus + RAG & 18.7 min & 5.8 & 142.3K & \$2.84 & 14.2\% \\
GPT-4.1 + LangChain & 21.3 min & 6.2 & 156.7K & \$3.13 & 13.8\% \\
Gemini 2.0 Pro & 19.5 min & 5.5 & 134.8K & \$1.35 & 13.4\% \\
Human Developer & 35.8 min & 3.4 & N/A & \$29.83$^*$ & 87.2\% \\
\hline
\multicolumn{6}{l}{\footnotesize $^*$Based on average developer hourly rate of \$50}
\end{tabular}%
}
\end{table}

\FloatBarrier
\FloatBarrier
\section{CONCLUSION}

We have presented Chronos-1, a novel debugging-specific language model that addresses fundamental limitations in existing code understanding systems. Through specialized training on debugging workflows and a purpose-built architecture incorporating persistent memory and intelligent retrieval, Chronos-1 demonstrates significant improvements over general-purpose language models in automated debugging tasks.

Our comprehensive evaluation reveals that Chronos-1 achieves a 65.3\% success rate on real-world debugging benchmarks and establishes state-of-the-art performance on SWE-bench Lite with 80.33\% (241/300 instances resolved)—a 20 percentage point absolute lead over the next best system (ExpeRepair-v1.0 + Claude 4.5 Sonnet: 60.33\%) and 4-5x improvement over general-purpose models including Claude 4.5 Opus (17.1\%), Gemini 3 Pro (16.4\%), Claude 4.1 Opus (14.2\%), Claude 4.5 Sonnet (~14\%), GPT-4.1 (13.8\%), and Gemini 2.0 Pro (13.4\%). This performance gain persists despite these models achieving state-of-the-art results on code generation tasks (Claude 4.5 Opus: 74.40\%, Gemini 3 Pro: 76.2\%, Claude 4.1 Opus: 72.5\%, Claude 4.5 Sonnet: 72.7\% on SWE-bench Verified). The disparity confirms our hypothesis: debugging is fundamentally different from code generation and requires specialized architectures. On SWE-bench Lite, Chronos-1 demonstrates exceptional domain-specific performance with 96.1\% success on symbolic mathematics (sympy), 90.4\% on web frameworks (django), and 93.8\% on documentation systems (sphinx).

Key technical contributions enabling this breakthrough include: (1) domain-specific pre-training on 15 million debugging instances including stack traces, fix commits, and CI/CD logs, (2) Adaptive Graph-Guided Retrieval (AGR) that outperforms advanced RAG techniques like HyDE, Self-RAG, and FLARE by 2-3x on debugging tasks, (3) a persistent memory architecture that maintains cross-session knowledge, a capability absent in modern IDEs like Cursor and Windsurf, and (4) an autonomous debugging loop with iterative refinement based on test execution feedback.

The implications of this work extend beyond immediate debugging applications. By demonstrating that specialized architectures and training regimes can dramatically improve performance on complex software engineering tasks, we provide evidence for the viability of task-specific language models in technical domains. While Chronos-1 is built for debugging, its architecture naturally extends to long-term codebase governance ,  enabling intelligent memory of design decisions, bug patterns, and recovery heuristics. Future research directions include extending this approach to other software engineering workflows, investigating transfer learning between debugging domains, and exploring human-AI collaborative debugging frameworks.

The transition toward autonomous debugging systems raises important considerations regarding software quality assurance, developer skill evolution, and the changing nature of software maintenance. As these systems mature, careful attention must be paid to maintaining human oversight, ensuring explainability of automated fixes, and preserving the creative and architectural aspects of software development that remain fundamentally human endeavors.

The Chronos-1 model will be available in Q4 of 2025 and deploy on Kodezi~\cite{kodezi2025} OS Q1 2026. This timeline allows for additional safety testing, enterprise integration development, and establishment of responsible deployment guidelines.

\FloatBarrier
\section*{ACKNOWLEDGMENTS}

This work benefited from the feedback and real-world challenges shared by early-access engineering partners, enterprise pilot users, and the broader Kodezi community. The maintainers of open-source repositories and tooling enabled large-scale benchmarking and inspired several retrieval and memory innovations described in this paper. Insights from researchers and practitioners in the software engineering and AI communities helped refine both methodology and experimental design. Support from Kodezi’s investors enabled the sustained research and development necessary to realize Chronos-1. Continued engagement and rigorous testing from the developer community have driven Chronos-1 toward greater reliability and practical impact.

\addtolength{\textheight}{-12cm}
\IEEEtriggeratref{12}


\begin{thebibliography}{99}
\bibitem{brown2020language}
Brown, T. B., et al., ``Language models are few-shot learners,'' \textit{NeurIPS}, 2020.

\bibitem{chen2021evaluating}
Chen, M., et al., ``Evaluating large language models trained on code,'' \textit{arXiv:2107.03374}, 2021.

\bibitem{wei2023magicoder}
Wei, Y., et al., "Magicoder: Source Code Is All You Need," \textit{arXiv:2312.02120}, 2023.

\bibitem{anthropic2025claude4}
Anthropic, ``Claude 4 Opus and Sonnet: State-of-the-art code understanding models,'' \url{https://www.anthropic.com/claude-4}, 2025.

\bibitem{openai2025gpt41}
OpenAI, ``GPT-4.1: Enhanced reasoning and code capabilities,'' \url{https://openai.com/gpt-4-1}, 2025.

\bibitem{deepseek2024v3}
DeepSeek, ``DeepSeek V3 Technical Report,'' \textit{arXiv:2412.19437}, 2024.

\bibitem{peng2023impact}
Peng, S., et al., "The Impact of AI on Developer Productivity: Evidence from GitHub Copilot," \textit{arXiv:2302.06590}, 2023.

\bibitem{microsoft2023copilot}
Microsoft, "GitHub Copilot Documentation," \url{https://docs.github.com/copilot}, 2023.
\bibitem{codeium2025windsurf}
Codeium, ``Windsurf: Cascade technology for code understanding,'' \url{https://codeium.com/windsurf}, 2025.

\bibitem{fried2023incoder}
Fried, D., et al., "InCoder: A Generative Model for Code Infilling and Synthesis," \textit{ICLR}, 2023.

\bibitem{ding2024crosscodeeval}
Ding, Y., et al., "CrossCodeEval: A Diverse and Multilingual Benchmark for Cross-File Code Completion," \textit{NeurIPS}, 2023.

\bibitem{zhang2023repocoder}
Zhang, F., et al., "RepoCoder: Repository-Level Code Completion Through Iterative Retrieval and Generation," \textit{EMNLP}, 2023.

\bibitem{yang2024swebench}
Yang, J., et al., "SWE-bench: Can Language Models Resolve Real-World GitHub Issues?," \textit{ICLR}, 2024.

\bibitem{shrivastava2023repository}
Shrivastava, D., et al., "Repository-Level Prompt Generation for Large Language Models of Code," \textit{ICML}, 2023.

\bibitem{gao2023hyde}
Gao, L., et al., ``Precise Zero-Shot Dense Retrieval without Relevance Labels,'' \textit{ACL}, 2023.

\bibitem{jiang2023flare}
Jiang, Z., et al., ``Active Retrieval Augmented Generation,'' \textit{EMNLP}, 2023.

\bibitem{zhang2024autocoderover}
Zhang, Y., et al., ``AutoCodeRover: Autonomous Program Improvement,'' \textit{arXiv:2404.05427}, 2024.

\bibitem{zhang2024enhancing}
Zhang, Y., et al., ``Enhancing Automated Program Repair through Fine-tuning and Prompt Engineering,'' \textit{arXiv:2404.05427}, 2024.

\bibitem{olausson2023selfrepair}
Olausson, T., et al., "Self-Repair: Teaching Language Models to Fix Their Own Bugs," \textit{arXiv:2302.04087}, 2023.

\bibitem{kodezi2025}
Kodezi, ``Kodezi Chronos-1: The first debugging-specific language model,'' \url{https://kodezi.com/chronos}, 2025.

\bibitem{feng2020codebert}
Feng, Z., et al., ``CodeBERT: A Pre-Trained Model for Programming and Natural Languages,'' \textit{arXiv:2002.08155}, 2020.

\bibitem{guo2021graphcodebert}
Guo, D., et al., ``GraphCodeBERT: Pre-training code representations with data flow,'' \textit{arXiv:2009.08366}, 2021.

\bibitem{wang2021codet5}
Wang, Y., et al., ``CodeT5: Identifier-aware unified pre-trained encoder-decoder models for code understanding and generation,'' \textit{arXiv:2109.00859}, 2021.

\bibitem{chen2022codet}
Chen, X., et al., "CodeT: Code Generation with Generated Tests," \textit{ICLR}, 2023.

\bibitem{asai2023selfrag}
Asai, A., et al., ``Self-RAG: Learning to Retrieve, Generate, and Critique through Self-Reflection,'' \textit{arXiv:2310.11511}, 2023.

\bibitem{edge2024graphrag}
Edge, D., et al., ``From Local to Global: A Graph RAG Approach to Query-Focused Summarization,'' \textit{arXiv:2404.16130}, 2024.

\bibitem{nashid2023retrieval}
Nashid, N., et al., "Retrieval-Based Prompt Selection for Code-Related Few-Shot Learning," \textit{ICSE}, 2023.

\bibitem{langchain2023}
LangChain, ``LangChain: Building applications with LLMs through composability,'' \textit{https://github.com/langchain-ai/langchain}, 2023.

\bibitem{khattab2023dspy}
Khattab, O., et al., ``DSPy: Programming, not prompting, Foundation Models,'' \textit{arXiv:2310.03714}, 2023.

\bibitem{google2025gemini}
Google, ``Gemini 2.0 and 2.5 Pro: Multimodal reasoning at scale,'' \url{https://deepmind.google/gemini}, 2025.

\bibitem{google2025gemini3}
Google, ``Gemini 3 Pro: State-of-the-art multimodal reasoning with 91.9\% GPQA Diamond, 95-100\% AIME 2025, and 76.2\% SWE-bench Verified,'' \url{https://deepmind.google/gemini-3}, 2025.

\bibitem{swepp2024}
SWE-bench++ Contributors, ``SWE-bench++: Enhanced Software Engineering Benchmark,'' \textit{https://github.com/princeton-nlp/SWE-bench}, 2024.

\bibitem{swecoding2024}
SWE-bench Coding Contributors, ``SWE-bench-coding: Implementation-focused Benchmark,'' \textit{https://github.com/princeton-nlp/SWE-bench}, 2024.

\bibitem{allamanis2017learning}
Allamanis, M., et al., "Learning to represent programs with graphs," \textit{ICLR}, 2018.

\bibitem{tipirneni2023structcoder}
Tipirneni, S., Zhu, M., Reddy, C. K., ``StructCoder: Structure-aware transformer for code generation,'' \textit{arXiv:2206.05239}, 2023.

\bibitem{tian2024debugbench}
Tian, R., et al., "DebugBench: Evaluating Debugging Capability of Large Language Models," \textit{arXiv:2401.04621}, 2024.

\bibitem{liu2024bughunter}
Liu, J., et al., "BugHunter: Multi-file Bug Localization with Deep Learning," \textit{ICSE}, 2024.

\bibitem{autogpt2023}
AutoGPT Contributors, ``AutoGPT: An Autonomous GPT-4 Agent,'' \textit{https://github.com/Significant-Gravitas/AutoGPT}, 2023.

\bibitem{babyagi2023}
BabyAGI Contributors, ``BabyAGI: Task-driven Autonomous Agent,'' \textit{https://github.com/yoheinakajima/babyagi}, 2023.

\bibitem{hong2023metagpt}
Hong, S., et al., ``MetaGPT: Meta Programming for A Multi-Agent Collaborative Framework,'' \textit{arXiv:2308.00352}, 2023.

\bibitem{langgraph2024}
LangGraph Contributors, ``LangGraph: Build stateful, multi-actor applications with LLMs,'' \textit{https://github.com/langchain-ai/langgraph}, 2024.

\bibitem{yao2022react}
Yao, S., et al., ``ReAct: Synergizing Reasoning and Acting in Language Models,'' \textit{ICLR}, 2023.

\bibitem{qian2023chatdev}
Qian, C., et al., ``Communicative Agents for Software Development,'' \textit{arXiv:2307.07924}, 2023.

\bibitem{qwen2025coder}
Hui, B., et al., ``Qwen2.5-Coder Technical Report,'' \textit{arXiv:2409.12186}, 2024.

\bibitem{velivckovic2018graph}
Veličković, P., et al., "Graph attention networks," \textit{ICLR}, 2018.

\bibitem{xiong2021drbert}
Xiong, W., et al., ``Answering complex open-domain questions with multi-hop dense retrieval,'' \textit{ICLR}, 2021.

\bibitem{guu2020realm}
Guu, K., et al., ``REALM: Retrieval-augmented language model pre-training,'' \textit{ICML}, 2020.

\bibitem{yasunaga2022gnnrag}
Yasunaga, M., et al., ``Deep bidirectional language-knowledge graph pretraining,'' \textit{NeurIPS}, 2022.


\bibitem{chen2024teaching}
Chen, X., et al., ``Teaching Large Language Models to Self-Debug,'' \textit{ICLR}, 2024.

\bibitem{madaan2023self}
Madaan, A., et al., ``Self-Refine: Iterative Refinement with Self-Feedback,'' \textit{NeurIPS}, 2023.

\bibitem{jiang2023selfevolve}
Jiang, N., et al., ``SelfEvolve: A Code Evolution Framework via Large Language Models,'' \textit{arXiv:2306.02907}, 2023.

\bibitem{austin2021mbpp}
Austin, J., et al., ``Program synthesis with large language models,'' \textit{arXiv:2108.07732}, 2021.

\bibitem{mdeval2025}
Liu, S., et al., ``MdEval: Massively Multilingual Code Debugging,'' \textit{arXiv:2411.02310}, 2024.

\bibitem{anthropic2025claude4b}
Anthropic, ``Claude 4: Opus and Sonnet Models,'' \url{https://www.anthropic.com/claude}, 2025.

\bibitem{borgeaud2022retrieval}
Borgeaud, S., et al., "Improving language models by retrieving from trillions of tokens," \textit{ICML}, 2022.

\bibitem{brody2022graph}
Brody, S., et al., "How Attentive are Graph Attention Networks?," \textit{ICLR}, 2022.

\bibitem{izacard2022atlas}
Izacard, G., et al., "Atlas: Few-shot Learning with Retrieval Augmented Language Models," \textit{NeurIPS}, 2022.

\bibitem{khandelwal2020generalized}
Khandelwal, U., et al., "Generalization through memorization: Nearest neighbor language models," \textit{ICLR}, 2020.

\bibitem{lewis2020retrieval}
Lewis, P., et al., "Retrieval-augmented generation for knowledge-intensive NLP tasks," \textit{NeurIPS}, 2020.

\bibitem{openai2025gpt41c}
OpenAI, ``GPT-4.1: Enhanced Code Understanding,'' \url{https://openai.com/gpt-4-1}, 2025.

\bibitem{zhang2024enhancingrlcode}
Zhang, Q., et al., "Enhancing Large Language Model Induced Code Generation with Reinforcement Learning from Code Execution Feedback," \textit{arXiv:2401.03374}, 2024.

\end{thebibliography}
\end{document}